\newcommand{\cmark}{\ding{51}}   % ✓
\newcommand{\xmark}{\ding{55}}
\newcommand\email[2][]%
   {\newaffiltrue\let\AB@blk@and\AB@pand
      \if\relax#1\relax\def\AB@note{\AB@thenote}\else\def\AB@note{\relax}%
        \setcounter{Maxaffil}{0}\fi
      \begingroup
        \let\protect\@unexpandable@protect
        \def\thanks{\protect\thanks}\def\footnote{\protect\footnote}%
        \@temptokena=\expandafter{\AB@authors}%
        {\def\\{\protect\\\protect\Affilfont}\xdef\AB@temp{#2}}%
         \xdef\AB@authors{\the\@temptokena\AB@las\AB@au@str
         \protect\\[\affilsep]\protect\Affilfont\AB@temp}%
         \gdef\AB@las{}\gdef\AB@au@str{}%
        {\def\\{, \ignorespaces}\xdef\AB@temp{#2}}%
        \@temptokena=\expandafter{\AB@affillist}%
        \xdef\AB@affillist{\the\@temptokena \AB@affilsep
          \AB@affilnote{}\protect\Affilfont\AB@temp}%
      \endgroup
       \let\AB@affilsep\AB@affilsepx
}
  \g@addto@macro \normalsize{%
    \setlength\abovedisplayskip{5pt plus 0pt minus 0pt}%
    \setlength\belowdisplayskip{5pt plus 0pt minus 0pt}}%
\renewcommand{\paragraph}{%
  \@startsection{paragraph}{4}%
  {\z@}{1.5ex \@plus 1ex \@minus .2ex}{-1em}%
  {\normalfont\normalsize\itshape\bfseries}%
}
\let\OLDthebibliography\thebibliography
\renewcommand\thebibliography[1]{
  \OLDthebibliography{#1}
  \setlength{\parskip}{2pt}
  \setlength{\itemsep}{4pt plus 1pt}
}
\DeclareMathAlphabet{\mathcal}{OMS}{zplm}{m}{n}
\def\tmp#1#2#3{%
  \definecolor{Hy#1color}{#2}{#3}%
  \hypersetup{#1color=Hy#1color}}
\def\tmp#1#2{%
  \colorlet{Hy#1bordercolor}{Hy#1color#2}%
  \hypersetup{#1bordercolor=Hy#1bordercolor}}
\definecolor{myred}{RGB}{230,0,0}
\renewcommand*{\backref}[1]{}
\renewcommand*{\backrefalt}[4]{%
  \ifcase #1 % never cited
     (Not cited.)%
  \or       % cited once
     (p.~#2.)%
  \else     % cited multiple times
     (pp.~#2.)%
  \fi%
}
\newcommand{\apprefproof}[1]{\hyperref[#1]{{\mbox{\footnotesize\textsc{\upshape{[proof]}}}}}}
\renewcommand\Affilfont{\normalsize}
\pgfplotsset{compat=1.18}
\newtheorem{theorem}{Theorem}[section]
\newtheorem{lemma}{Lemma}[section]
\newtheorem{fact}{Fact}[section]
\newtheorem{proposition}{Proposition}[section]
\newtheorem{corollary}{Corollary}[section]
\newtheorem{example}{Example}[section]
\theoremstyle{definition}
\newtheorem{definition}{Definition}[section]
\newcommand{\gs}[1]{\textcolor{blue}{#1}}
\newcommand{\at}[1]{\textcolor{red}{AT: #1}}
\newcommand{\gsrem}[1]{\marginpar{\tiny\gs{GS: #1}}}
\newcommand{\atrem}[1]{\marginpar{\tiny\at{#1}}}
\renewcommand{\vec}[1]{\bm{#1}}
\newcommand\numberthis{\addtocounter{equation}{1}\tag{\theequation}}
\newcommand{\ppr}{\ensuremath{\mathbb{P}}}
\newcommand{\pe}{\mathbb{E}}
\DeclareMathOperator*{\argmax}{arg\,max}
\DeclareMathOperator{\EX}{\mathbb{E}}% expected value
\newcommand{\supp}{\mathrm{supp}}
\newcommand{\set}[1]{\ensuremath{\{ #1 \}}}
\newcommand{\sset}[2]{\ensuremath{\{ #1 \, \mid \, #2 \}}}
\newcommand{\ssetl}[2]{\ensuremath{\left\{ #1 \; \bigg\vert \; #2 \right\}}}
\newcommand{\sens}{\ensuremath{\sigma}}
\newcommand{\type}{\ensuremath{t}}
\newcommand{\z}{\ensuremath{g}}
\newcommand{\budget}{\ensuremath{\mathfrak{B}}}
\newcommand{\sw}{\ensuremath{\textit{LW}}}
\newcommand{\opt}{\ensuremath{\textit{OPT}}}
\newcommand{\poa}{\ensuremath{\textit{POA}}}
\newcommand{\poarmp}{\ensuremath{\text{POA-RMP}}}
\newcommand{\mech}{\ensuremath\mathcal{M}}
\newcommand{\fpa}{\ensuremath{\textit{FPA}}\xspace}
\newcommand{\fpar}[2]{%
\ifthenelse{\equal{#1}{}}%
    {\ifthenelse{\equal{#2}{}}%
        {\ensuremath{\textit{FPA}}\xspace}%
        {\ensuremath{\textit{FPA}(#2)}\xspace}}%
    {\ifthenelse{\equal{#2}{}}
        {\ensuremath{\textit{FPA}(#1)}\xspace}%
        {\ensuremath{\textit{FPA}(#1, #2)}\xspace}}%
}
\newcommand{\rw}{\ensuremath{\textsf{rw}}}
\newcommand{\aw}{\ensuremath{\textsf{aw}}}
\newcommand{\equ}{\ensuremath{\textit{EQ}}}
\newcommand{\cce}{\ensuremath{\textit{CCE}}}
\newcommand{\ce}{\ensuremath{\textit{CE}}}
\newcommand{\swap}{\ensuremath{h}}
\newcommand{\mne}{\ensuremath{\textit{MNE}}}
\newcommand{\pne}{\ensuremath{\textit{PNE}}}
\newcommand{\val}{\ensuremath{\textsc{val}}}
\newcommand{\add}{\ensuremath{\textsc{add}}}
\newcommand{\sub}{\ensuremath{\textsc{sub}}}
\newcommand{\xos}{\ensuremath{\textsc{xos}}}
\renewcommand{\b}{\ensuremath{\vec{b}}}
\newcommand{\B}{\ensuremath{\vec{B}}}
\newcommand{\bsingle}{\ensuremath{b}}
\newcommand{\Bsingle}{\ensuremath{B}}
\newcommand{\dist}{\ensuremath{\pi}} 
\begin{document}

\title{\bfseries
Optimal Type-Dependent Liquid Welfare Guarantees for Autobidding Agents with Budgets}

\author[1]{Riccardo Colini-Baldeschi}
\author[2,3]{Sophie Klumper}
\author[2,3]{Twan Kroll}
\author[4]{\\ Stefano Leonardi}
\author[2,3]{Guido Sch\"afer}
\author[5]{Artem Tsikiridis}

\affil[1]{Meta, UK}
\affil[2]{Centrum Wiskunde \& Informatica (CWI), The Netherlands}
\affil[3]{University of Amsterdam, The Netherlands}
\affil[4]{Sapienza University of Rome, Italy}
\affil[5]{Technical University of Munich, Germany}

\date{}

\maketitle
\begin{abstract}
\noindent
Online advertising systems have recently transitioned to autobidding, allowing advertisers to delegate bidding decisions to automated agents. Each advertiser directs their agent to optimize an objective function subject to \emph{return-on-investment (ROI)} and \emph{budget constraints}. Given their practical relevance, this shift has spurred a surge of research on the \emph{liquid welfare price of anarchy (POA)} of fundamental auction formats under autobidding, most notably \emph{simultaneous first-price auctions (FPA)}. 
One of the main challenges is to understand the efficiency of FPA in the presence of \emph{heterogeneous} agent types.

We introduce a \emph{type-dependent smoothness framework} that enables a unified analysis of the POA in such complex autobidding environments. 
In our approach, we derive type-dependent smoothness parameters which we carefully balance to obtain POA bounds. This balancing gives rise to a \emph{POA-revealing mathematical program}, which we use to determine tight bounds on the POA of coarse correlated equilibria (CCE). Our framework is versatile enough to handle heterogeneous agent types and extends to the general class of fractionally subadditive valuations. Additionally, we develop a novel reduction technique that transforms budget-constrained agents into budget-unconstrained ones. Combining this reduction technique with our smoothness framework enables us to derive tight bounds on the POA of CCE in the general hybrid agent model with both ROI and budget constraints. 
Among other results, our bounds uncover an intriguing threshold phenomenon showing that the POA depends intricately on the smallest and largest agent types. 
We also extend our study to FPAs with reserve prices, which can be interpreted as predictions of agents' values, to further improve efficiency guarantees. 
\end{abstract}

\section{Introduction}

\paragraph{Background and Motivation.}

Over the past decade, online advertising systems have undergone a major shift with the emergence of autobidding. This shift allows advertisers to delegate complex bidding decisions to automated agents that take various factors into account such as ad performance, campaign constraints, and market dynamics. As a result, advertisers can manage their campaigns more efficiently and aim to maximize their return on investment.
Autobidding is now the dominant paradigm: over 80\% of online ad traffic is managed by autobidding agents \citep{deng2024}. This widespread adoption has important implications for the behavior of advertisers, publishers, and ad exchanges.

In the autobidding world, advertisers specify high-level constraints for their campaigns---most notably \emph{return-on-investment (ROI) constraints} and \emph{budget constraints}. Basically, the ROI constraint caps the cost per conversion or impression, while the budget constraint limits the total spend for a given campaign. Unlike the traditional view, where agents have intrinsic values for outcomes, autobidding agents operate under ROI and budget constraints reflecting performance goals and financial limitations. In addition to these constraints, agents may differ in the objective they seek to optimize. The autobidding literature typically considers two types: \emph{value maximizers}, who aim to maximize outcome value subject to budget and ROI constraints, and \emph{utility maximizers}, who aim to maximize value minus payment. These two types represent the extremes of a spectrum capturing agents' trade-offs between value and payment. In this work, we allow agents' types to lie anywhere along this spectrum, modeling diverse behaviors in a unified way.

Alongside the rise of autobidding, the online advertising industry has also undergone a major shift in auction formats---most notably, the move from second-price to \emph{first-price auctions (FPA)}. This transition has been especially pronounced in display ad markets, culminating in Google Ad Exchange's adoption of first-price auctions in 2019 \citep{PST20}. The combination of constraint-driven autobidding agents and first-price payment schemes raises fundamental questions about the performance of FPA in this new environment. This paper is driven by one central question: \emph{What is the efficiency of simultaneous first-price auctions in the presence of heterogeneous autobidding agents?}

\paragraph{State of the Art.}

The study of the inefficiency of equilibria for simultaneous first-price auctions in the autobidding setting was initiated by \citet{LMP23}.
To evaluate the efficiency of FPAs in the autobidding setting, we study the \emph{Price of Anarchy (POA)} \citep{koutsoupias1999worst}, which compares the optimal liquid welfare to the welfare achieved in the worst-case equilibrium. \emph{Liquid welfare} \citep{dobzinski2014efficiency} is a widely used metric that extends standard social welfare to settings with budget constraints by capping an agent's contribution at their available budget. This provides a more meaningful benchmark for evaluating the social welfare achieved by a mechanism when agents face budget constraints. 
The POA of first-price auctions under autobidding was recently studied by \cite{deng2024}, both in the model with only value maximizers and in the \emph{mixed agent model} with both value and utility maximizers. They establish tight bounds of $2$ and $2.188$ on the POA of mixed Nash equilibria (MNE) for value maximizers and the mixed agent model, respectively. However, their analysis is limited to agents subject to ROI constraints only. Their proofs rely on structural properties of MNE and are tailored to their specific setting. 
\citet{LMZ24} studied budget-constrained value maximizers and derive a bound of $2$ on the POA of pure Nash equilibria. 
To the best of our knowledge, the work of \citet{LMZ24} is the only one that studies the inefficiency of simultaneous FPAs in the autobidding setting under both ROI and budget constraints.

A powerful framework for proving POA bounds is the \emph{smoothness technique}, first introduced by \citet{Rou15} for strategic games and later extended by \citet{ST13} to composable auctions. This framework allows POA bounds established for a `base mechanism' (e.g., single-item first-price auction) to extend to more complex compositions (e.g., simultaneous first-price auctions) and to broader equilibrium concepts, including correlated and coarse correlated equilibria. 
Its ease of application combined with the strong, general guarantees it provides has made smoothness the technique of choice for studying the price of anarchy. 
Despite its success, applying the smoothness framework to the autobidding setting---where constraints and heterogeneous types play a central role---remains a major open challenge. This paper closes that gap.

\begin{table}[t]
\centering\small
\begin{tabular}{lccc}
\toprule
\small{\textbf{Agent Types}}& \small{\textbf{Budgets}} & \small{ \textbf{POA (UB)}}  & \small{ \textbf{Statements}} \\
\midrule\midrule
\\[-2.2ex]
$\type_{\max}\in [0,1]$    &  \cmark     & $P(\type_{\max})$ & Thm.~\ref{theorem:full-hybrid-no-reserve}\\ [.5ex]
\midrule \\[-2.2ex]
$\type_{\min}=\type_{\max}=\type$    &  \xmark     & $Q(t)$ &Thm.~\ref{theorem:single-type:budgetfree}  \\ [.5ex]
$\type_{\min}\geq \beta$    &  \xmark     & $\left(\type_{\min}\left(1-e^{-\frac{1}{\type_{\min}}}\right)\right)$ & Thm.~\ref{theorem:not-so-hybrid} \\ [.5ex]
\bottomrule
\end{tabular}
\caption{Overview of POA upper bounds for CCEs of simultaneous FPA for fractionally subadditive valuations.
Here, $\type_{\max}$ and $\type_{\min}$ refer to the largest and smallest agent types, respectively.
The function $P$ is defined as $P(\type)=1 + \type(1+W_{0}(-e^{-\type -1}))^{-1}$ for $\type\geq \theta$ and $P(\type)=2$ for $\type<\theta$, where $W_0$ is the principal branch of the Lambert $W$ function and $\theta$ is a threshold value defined as $\theta = 1 + \frac12 W_{0}(-2e^{-2}) \approx 0.797$. Furthermore, the function $Q$ is defined as $Q(\type)=\frac{e}{e-1}$ for $\type\geq \frac{e-1}{e}$, $Q(0)=2$, and $Q(\type)=1-\frac{(1-\type)\ln(1-\type)}{\type}$, otherwise. Lastly, $\beta\approx 0.741$ and is the solution to $\beta=1-e^{-\frac{1}{\beta}}$}
\label{tab:poa-upperbounds}
\end{table}

\begin{table}[t]
\centering\small
\begin{tabular}{lcccc}
\toprule
\small{\textbf{Agent Types}}& \small{\textbf{Budgets}} & \small{ \textbf{POA (LB)}} & \small{\textbf{Equilibrium class}} & \small{ \textbf{Statements}} \\
\midrule\midrule
\\[-2.2ex]
\text{any}  & \cmark& 2 & 
\mne& Thm.~\ref{lem:universal-LB-budget} \\[.5ex]
$\type_{\min}=\type_{\max}=\type > \theta$ & \cmark & $P(\type)$ 
& \cce & Thm.~\ref{thm:LB-POA-budget-commontype} \\ [.5ex]
\midrule \\[-2.2ex]
$\type_{\min}=\type_{\max}=t\geq \frac{e-1}{e}$    &  \xmark     & $\frac{e}{e-1}$ & \cce& Cor.~\ref{corollary:high-common-type-no-reserve} \\ [.5ex]
$\type_{\min}=0$, $\type_{\max}\in [0,1]$ & \xmark & $P(\type)$ 
&\mne & Thm.~\ref{thm:LB-POA-budgetfree-hybrid}  \\ [.5ex]
\bottomrule
\end{tabular}
\caption{Overview of POA lower bounds of simultaneous FPA for additive valuations. Here, $\type_{\max}$ and $\type_{\min}$ refer to the largest and smallest agent types, respectively.
The functions $P$ and $Q$ and threshhold value $\theta$ are defined in the caption of Table~\ref{tab:poa-upperbounds}.}
\label{tab:poa-lowerbounds}
\end{table}

\paragraph{Our Contributions.}

Building on the challenges posed by heterogeneous autobidding agents and the shift to first-price auctions, we introduce a \emph{type-dependent smoothness framework} that enables a unified analysis of the POA in this complex environment. Our contributions extend prior results in several important directions:

\begin{enumerate}\itemsep0pt

    \item We consider a general \emph{hybrid agent model} in which agents differ in type and are subject to both ROI and budget constraints. 
    The \emph{type} $\type \in [0,1]$ of an agent reflects their aversion to payments.
    In particular, $\type = 0$ corresponds to a value maximizer, and $\type =1$ to a utility maximizer. By allowing arbitrary $\type \in [0,1]$, our model captures the full spectrum of agent behavior between these two extremes. This model strictly generalizes the mixed agent model \citep{deng2024} and has been suggested in \citep{balseiro2021robust,survey}.
    
    \item We significantly broaden the class of valuation functions that can be handled in the autobidding context by analyzing \emph{fractionally subadditive} (also known as \emph{XOS}) valuation functions \citep{ST13}. 
    This class notably includes monotone submodular functions as a special case. These functions are particularly relevant in multi-platform autobidding environments, where advertisers use multiple platforms and experience diminishing returns as their ads are shown across them (see, e.g., \citep{aggarwal2025multi}). To the best of our knowledge, prior work on autobidding has focused exclusively on additive valuations.

    \item We extend our analysis beyond pure and mixed Nash equilibria and derive (tight) POA bounds for more general solution concepts, including \emph{correlated equilibria (CE)} and \emph{coarse correlated equilibria (CCE)}, even in the full generality of our hybrid agent model. CCE are particularly relevant in autobidding settings when agents use regret-minimizing algorithms (see, e.g.,~\citet{lucier2024autobidders} and references therein). 
    Building on insights from \citet{KN22}, we show that CCE induced by such learning dynamics satisfy additional structural 
    essential for our framework to be applicable.
                
   \item We also study the POA of simultaneous first-price auctions with reserve prices. Prior work by \citet{balseiro2021robust} and \citet{deng2024} studied this setting in the absence of budgets and under additive valuations. We extend their analysis to encompass fractionally subadditive valuations, both with and without budget constraints. As observed by \citet{BDM21}, reserve prices can be interpreted as \emph{predictions} of the agents' values (e.g., derived from historical data through machine-learning techniques), which can be used to improve efficiency guarantees. By leveraging such predictions to set reserve prices, we obtain improved POA bounds. This connects to the broader agenda of \emph{mechanism design with predictions} (see also \citep{gkatzelis22, chen04}).

\end{enumerate}

\begin{figure}
\newlength{\mywidth}
\centering
\begin{subfigure}{7cm}
\setlength{\mywidth}{6cm}
\hspace*{-.9cm}
  \begin{tikzpicture}
  [ declare function={
    func(\x)= and(\x>= 0.632120558829,\x<=1) * (1.58197670687)   +
    and(\x>0, \x<0.632120558829) * (1 + ln(1-\x)- ln(1-\x)/\x)+
    (\x==0) * (2);}
    ]
    \begin{axis}[
      scale only axis,
      width=\mywidth,
      height=.75\mywidth,
      xmin=0, xmax=1.07,
      ymin=1.4, ymax=2.25,
      axis x line=bottom,
      axis y line=left,
      axis line style={line width=1pt},
      tick style={semithick},
      ticklabel style={font=\small},
      xlabel={$t$},
      xlabel style={yshift=.12\mywidth,xshift=.53\mywidth},
      ylabel={},
      ylabel style={xshift=2.2cm, yshift=-1.2cm, rotate=-90},
      xtick={0,0.632120558829,0.79681213,1},
      xticklabels={$0$,$\frac{e-1}{e}$,$\theta$, $1$},
      ytick={1.58197670687,2,2.18848737},
      yticklabels={$\frac{e}{e-1}$,$2$, $2.18$},
      clip=false,
    ]
      %--- flat segment ---
     \addplot[red, domain=0:1, samples = 100, very thick]{func(x)};
      %--- formula node (top-left) ---
      
      %--- flat segment ---
      \addplot[very thick,blue] coordinates {
        (0,          2.00)
        (0.79681213, 2.00)
      };
      %--- rising segment ---
      \addplot[very thick,blue,smooth] coordinates {
        (0.79681213, 2.00000000)
        (0.81713092, 2.01897860)
        (0.83744970, 2.03792135)
        (0.85776849, 2.05698248)
        (0.87808727, 2.07619123)
        (0.89840605, 2.09556944)
        (0.91872484, 2.11515834)
        (0.93904362, 2.13497837)
        (0.95936240, 2.15505003)
        (0.97968119, 2.17539109)
        (1.00000000, 2.18848737)
      };
      	
	\node at (0.82,2.15) {$P(t)$};  
    \node at (0.82,1.65) {$Q(t)$};  
    \end{axis}
  \end{tikzpicture}
  \caption{ }
  \label{fig:commontype-budget-vs-budgetfree}
\end{subfigure}
\begin{subfigure}{8cm}
\setlength{\mywidth}{2.73cm}
\centering
% Custom style for all plots
\pgfplotsset{
  my style/.style={
    scale only axis,
    width=\mywidth,
    height=.75\mywidth,
    xmin=0, xmax=1.1,
    ymin=0.5, ymax=2.3,
    axis x line=bottom,
    axis y line=left,
    axis line style={line width=0.8pt},
    tick style={semithick},
    ticklabel style={font=\scriptsize},
    xlabel={\scriptsize{$\eta$}},
    xlabel style={yshift=.2\mywidth, xshift=.55\mywidth},
    ylabel={},
    ylabel style={xshift=1pt},
    xtick={0,0.5,1},
    xticklabels={\scriptsize$0$, \scriptsize$0.5$, \scriptsize$1$},
    ytick={1,1.581, 2},
    yticklabels={\scriptsize$1$, \scriptsize$\frac{e}{e-1}$, \scriptsize$2$},
    clip=false,
    title style={at={(0.5,0.85)}, anchor=north, font=\small},
  }
}%
% t = 0
\begin{subfigure}{.45\textwidth}
\centering
\begin{tikzpicture}
\begin{axis}[my style, title={$t=0$}]
  \addplot[very thick, myred, domain=0:1, samples=20] {2 - x};
\end{axis}
\end{tikzpicture}
\end{subfigure}
\quad
% t = 0.3
\begin{subfigure}{.45\textwidth}
\centering
\begin{tikzpicture}
\begin{axis}[my style, title={$t=0.3$}]
  \addplot[very thick, myred, domain=0:1, samples=50] {1 + (0.7/0.3)*ln((1-0.3*x)/0.7)};
\end{axis}
\end{tikzpicture}
\end{subfigure}

% t = 0.7
\begin{subfigure}{.45\textwidth}
\centering
\begin{tikzpicture}
\begin{axis}[my style, title={ $t=0.7$}]
  \pgfmathsetmacro{\etabound}{(1 - exp(1)*(1-0.7))/0.7} % Break point calculation
  \addplot[very thick, myred, domain=0:\etabound, samples=20] {exp(1)/(exp(1)-1 + 0.7*x)};
  \addplot[very thick, myred, domain=\etabound:1, samples=50] {1 + (0.3/0.7)*ln((1-0.7*x)/0.3)};
\end{axis}
\end{tikzpicture}
\end{subfigure}
\quad
% t = 1
\begin{subfigure}{.45\textwidth}
\centering
\begin{tikzpicture}
\begin{axis}[my style, title={$t=1$}]
  \addplot[very thick, myred, domain=0:1, samples=50] {exp(1)/(exp(1)-1 + x)};
\end{axis}
\end{tikzpicture}
\end{subfigure}
\caption{}
\label{fig:overview-with-eta}
\end{subfigure}
  
  \caption{Illustration of POA bounds of simultaneous FPA with fractionally subadditive valuations. 
  (a) Bounds for the setting without reserve prices as a function of the type $\type$. 
  (b) Bounds for the setting with reserve prices as a function of $\eta$ (quality of the reserve prices) for $\type \in \set{0, 0.3, 0.7, 1}$.}
  \label{fig:overview}
\end{figure}

\paragraph{Overview and Significance of Our Results.}

An overview of the POA upper bounds that we derive for CCE of simultaneous FPA with fractionally subadditive valuations is given in Table~\ref{tab:poa-upperbounds}. The corresponding lower bounds that we derive can be found in Table~\ref{tab:poa-lowerbounds}. Below, we highlight some key implications of our results. Our bounds depend on the set of agent types $T \subseteq [0,1]$, where $\type_{\min} = \min(T)$ and $\type_{\max} = \max(T)$ denote the smallest and largest agent types, respectively. 

Among other results, our bounds uncover an intriguing threshold phenomenon:
In the hybrid agent model with budgets and heterogeneous types, the POA is exactly $2$ when the largest agent type satisfies $\type_{\max} \le \theta \approx 0.797$.
When $\type_{\max} > \theta$ and value maximizers are present (i.e., $\type_{\min} = 0$), the POA bound increases smoothly with $\type_{\max}$, following the function $P(\type_{\max})$, from $2$ up to $2.188$ as $\type_{\max}$ approaches $1$; see Figure~\ref{fig:commontype-budget-vs-budgetfree} (blue curve). This bound is tight even for mixed Nash equilibria.
This result unifies and generalizes the state-of-the-art POA bounds of \citet{deng2024} and \citet{LMZ24}, both of which address special cases of our model. 
The same bound also applies when agents are budget-constrained and have a \emph{uniform} (or \emph{homogeneous}) type $\type$, i.e., $\type_{\min} = \type_{\max} = \type$\footnote{When agents are utility maximizers i.e., $t=1$, our objective coincides with the notion of \emph{effective welfare} introduced by \citep{ST13} (see also \citep{caragiannis16}). Note that, for budget-constrained agents, \citet{ST13} establish a guarantee of $\frac{e}{e-1}$ for the weaker benchmark of the ratio between optimal effective welfare and the utilitarian social welfare at equilibrium, see also Further Related Work.}.

Interestingly, we obtain the exact same POA bounds without budget constraints, as long as value maximizers are present. As will become clear below, this is not a coincidence. 
In the budget-free setting with uniform agent type, we derive strictly improved bounds, illustrated in Figure~\ref{fig:commontype-budget-vs-budgetfree} (red curve).
This yields a natural separation result for uniform agents: for every $\type > 0$, the POA for budget-constrained agents (Figure~\ref{fig:commontype-budget-vs-budgetfree} (blue curve)) is \emph{strictly worse} than that for budget-free agents (Figure~\ref{fig:commontype-budget-vs-budgetfree} (red curve)).
Our results also reveal a second threshold phenomenon: In the budget-free setting, if agents are sufficiently close to utility maximizers, i.e., $\type_{\min} \ge \beta \approx 0.741$, the POA is in guaranteed to lie in the range $[\nicefrac{e}{(e-1)}, \nicefrac{1}{\beta^2}]$ (as a function of $\type_{\min}$), regardless of type heterogeneity; see Theorem~\ref{theorem:not-so-hybrid} for details.

We briefly comment on our results for simultaneous first-price auctions with reserve prices. In this setting, the POA bounds additionally depend on a parameter $\eta \in [0,1)$, which quantifies the quality of the reserve prices. Intuitively, higher values of $\eta$ correspond to better reserve prices; formally $\eta$ captures the worst-case ratio between the reserve price and the maximum extractable payment over all auctions (see below for a precise definition). We derive new bounds for this setting (not included in the overview table). For budget-free agents with a uniform type $\type$, the POA bound is shown in Figure~\ref{fig:overview-with-eta}. As expected, the POA improves with the quality of the reserve prices, i.e., it decreases as $\eta$ increases and converges to $1$ as $\eta$ goes to $1$. The bound is tight for both $\type = 0$ and $\type = 1$. 

In our setting with reserve prices, our framework requires that all items are sold in equilibrium outcomes---a property we refer to as \emph{well-supported equilibria}. We show that, for additive valuations, this property holds for equilibria up to and including correlated equilibria (CE), but fails for coarse correlated equilibria (CCE). However, building on \citep{KN22}, we prove that in repeated single-item first-price auctions with reserve prices, CCE arising from regret-minimizing agents are well-supported regardless of the agents’ type distribution. This result implies that CCE produced by such learning dynamics inherently possess the well-supported property.

\paragraph{Our Techniques.}

At the heart of our analysis lies a novel, \emph{type-dependent smoothness framework}---a non-trivial generalization of the original framework by \citet{ST13} for utility maximizers to heterogeneous agent types. A key technical challenge is incorporating these diverse agent types.
Prior smoothness-based works assume that agents are `alike', allowing the $(\lambda, \mu)$-smoothness parameters of the base mechanism to lift directly to composed mechanisms. In contrast, our setting requires handling heterogeneous types, and applying the original approach directly fails to yield meaningful bounds in the autobidding context.

We overcome this by proving a smoothness inequality for each type separately, yielding type-specific $(\lambda_{\type}, \mu_{\type})$ parameters. The core insight is to balance these parameters---crucially leveraging the ROI constraint---through carefully chosen \emph{calibration vectors} to obtain the best POA bound. This balancing leads to an optimization problem over feasible choices of the smoothness parameters, giving rise to what we term the \emph{$\poa$-revealing mathematical program} (\poarmp). Bounding the objective of this program then yields upper bounds on the POA.
With this machinery in place, we then prove smoothness for a single-item first-price auction with reserve prices across different agent types. By the Extension Theorem, we derive (often tight) POA bounds for coarse correlated equilibria of simultaneous first-price auctions with reserve prices and fractionally subadditive valuations.

To handle budget constraints, we introduce a novel reduction technique that transforms instances with budgets into equivalent \emph{budget-free proxy instances}. A key insight is that any budget-constrained agent can be simulated `almost perfectly' by a budget-unconstrained agent with a \emph{budget-capped valuation function}, i.e., one that caps their original valuation at their budget.
This transformation, however, comes with a caveat: it fails when an agent's valuation exceeds their budget. We resolve this by introducing, for each such agent, a budget-unconstrained \emph{value maximizer} with the corresponding budget-capped valuation function. The resulting instance is our \emph{budget-free proxy instance}.

In a nutshell, our strategy is to reduce an instance with budget-constrained agents to a budget-free proxy instance, and then apply our type-dependent smoothness framework to bound the POA of the proxy instance. Since the transformation preserves the POA, this yields bounds for the original instance as well.
This approach, however, requires special care:
(i) The proxy instance includes budget-capped valuation functions. For example, if the original valuations are additive, the capped versions become submodular.
(ii) The transformation introduces heterogeneity in agent types. Even if the original instance had a single type $\type \ne 0$, the proxy instance includes both type $\type$ and type $0$ agents.
Addressing both (i) and (ii) relies critically on the full power of our type-dependent smoothness framework. For (i), the framework supports XOS valuations, and crucially, budget-capped XOS functions remain XOS. For (ii), our framework is explicitly designed to accommodate heterogeneous agent types. Notably, any approach incapable of handling either aspect would render the reduction technique infeasible.

Finally, the augmented type set in (ii) explains a key pattern observed in our results: all POA bounds for instances with budgets and type set $T$ match those of their budget-free proxy instance, whose type set becomes $T^+ = T \cup \set{0}$.

We demonstrate the power of our type-dependent smoothness technique by analyzing the POA of simultaneous first-price auctions. However, the technique is broadly applicable, and we believe it extends to a wide range of autobidding environments---and potentially even beyond. Indeed, we already have evidence that it can be used to bound the POA of multi-unit auctions under autobidding, further underscoring its broader impact.

\paragraph{Further Related Work} %\label{app:further-rw}

\citet{ABM19} initiated the study of the inefficiency of equilibria for auctions in autobidding environments. Their result implies that the liquid welfare price of anarchy for pure Nash equilibria of the second-price auction is~$2$. This upper bound was later generalized by \citet{deng21}, who considered a more general autobidding environment and the VCG mechanism, while \citet{deng24efficiency} and \citet{deng2023autobidding} obtained POA bounds for the Generalized Second Price auction~(GSP).

The study of the inefficiency of equilibria for simultaneous FPAs was initiated by \citet{LMP23}, who showed that when all agents are value maximizers constrained only by ROI and have additive valuations, the POA of pure Nash equilibria is also~$2$. This result was then extended by \citet{deng2024} to MNE and ROI-constrained agents. \citet{deng2024} were also the first to study the mixed-agent model, i.e., the setting where agents can be either utility or value maximizers, which we capture as a special case. \citet{LMZ24} study the inefficiency of simultaneous FPAs for agents who are both ROI-constrained and budget-constrained, focusing on pure Nash equilibria and showing a POA of~$2$. To the best of our knowledge, the work of \citet{LMZ24} is the only one that studies the inefficiency of simultaneous FPAs under both ROI and budget constraints.

Other settings in the context of autobidding beyond simultaneous compositions of simple classical auction formats that have been considered in the literature include the inefficiency of randomized auction mechanisms (see, e.g., \citep{M22,LMP23}) and scenarios in which the platform is allowed to ``boost'' the budgets of agents and implement reserve prices (see, e.g., \citet{DMM21,balseiro2021robust}), with the latter having an interpretation as \emph{machine-learned advice}. We refer the reader to Section~\ref{sec:reserve-prices} for a discussion of this perspective. Finally, beyond the inefficiency of equilibria, other directions relevant to autobidding include the study of optimal bidding from the perspective of the agent (see, e.g., \citep{ABM19,balseiro2015repeated}), online learning (see, e.g., \citep{balseiro2019learning,feng2023online,castiglioni2022online, aggarwal2025noR, lucier2024autobidders}), auction design (\citep{golrezaei2021auction,balseiro2021landscape,lv2023auction}), and the emerging domain of multi-platform (multi-channel) autobidding (see, e.g., \citep{deng23,susan2023multi,aggarwal25}). For further details, we refer the interested reader to the survey by \citet{aggarwalSurvey}.

For the standard setting in which all agents are utility maximizers, the inefficiency of the first-price auction has been studied in the economics literature since the seminal work of \citet{vickrey}. Naturally, due to its simplicity, it has also been considered for simultaneous simple auctions. The $\poa$ of $\cce$ was shown by \citet{ST13} to be at most $\frac{e}{e-1}$ for simultaneous auctions with XOS valuations. This bound was shown to be tight even for a single auction by \citet{syrgkanis-thesis}, and for $\mne$ in simultaneous auctions with submodular valuations by \citet{christodoulou16}. Beyond the smoothness framework, \citet{feldmanfu} showed a $\poa$ upper bound of $2$ for simultaneous auctions with subadditive valuations, while \citet{feldman16} considered $\ce$ and $\cce$ and their properties more closely. More recently, \citet{JL23} showed that the \emph{Bayesian} $\poa$ for the single-item first-price auction is exactly $\frac{e^2}{e^{2}-1}$, a breakthrough result.
For an overview of classical results regarding the POA of auctions for utility maximizers, including compositions of other simple auctions, we refer to the survey by \citet{roughgarden2017price}.

Finally, we remark that budgeted settings have been considered for utility maximizers from a POA perspective prior to the emergence of autobidding. Since our model captures the setting where all agents are utility maximizers as a special case, our work can also be viewed as a follow-up to this line of research. In this context, the three most closely related works are those by \citet{ST13}, \citet{azar17}, and \citet{caragiannis16}. While \citet{caragiannis16} focus on a different setting, both \citet{ST13} and \citet{azar17} study simultaneous first-price auctions for XOS valuations. However, \citet{ST13} focus on the ratio of the expected \emph{social} welfare at equilibrium to the optimal liquid welfare, a weaker benchmark than the one we consider in this work (see also Section \ref{theorem:full-hybrid-no-reserve}). On the other hand, \citet{azar17} analyze the \emph{ex post} liquid welfare, which is a stronger benchmark, but their results require the items to be divisible into discretely many parts.\footnote{The stronger benchmark is unbounded in our setting of simultaneous first-price auctions, see Theorem C.2.\ in \cite{azar2015liquidpriceanarchy}.} Note that \citet{caragiannis16} adopt the same benchmark as we do and call it \emph{Effective Welfare}.

\section{Preliminaries} \label{sec:preliminariesPOA}

\paragraph{Simultaneous First-Price Auctions with Reserve Prices.}

We study \emph{simultaneous first-price auctions}, where a set $N = [n]$ of $n \ge 2$ agents simultaneously participate in a set $M = [m]$ of $m \ge 1$ single-item auctions. Each auction $j \in M$ implements a \emph{first-price auction ($\fpa$) with reserve price}, as detailed below. We use $j$ to denote both the auction and the respective item interchangeably. Each agent $i \in N$ submits a bid $b_{ij} \in \mathbb{R}_{\ge 0}$ to each auction $j \in M$. We use $\b_i = (b_{ij})_{j \in M}$ to denote the bid profile of agent $i$, and $D_i = \mathbb{R}_{\ge 0}^m$ to refer to the set of all bid profiles of $i$. 
The aggregated bid profile of all agents is denoted by $\b = (\b_i)_{i \in N} \in D = \times_{i \in N} D_i$.

We focus on simultaneous first-price auctions with reserve prices. More specifically, each auction $j \in M$ handles a reserve price $r_j \in \mathbb{R}_{\ge 0}$ that must be met to sell item $j$; we use $\fpar{}{r_j}$ to refer to this auction. Given the bid profile $\b_j = (b_{ij})_{i \in N}$ submitted to auction $j$, $\fpar{}{r_j}$ allocates the item to the highest bidder $i$ meeting the reserve price $r_j$, i.e., $b_{ij} \ge r_j$, and charges their respective bid $b_{ij}$ for the item. 
The agent who wins the item (if any) is called the \emph{actual winner}, denoted by $\aw(j) \in \arg\max_{i \in N: b_{ij} \ge r_j} b_{ij}$. In case of ties, the actual winner is chosen according to an arbitrary but fixed tie-breaking rule. If the reserve price $r_j$ is not met (i.e., $b_{ij} < r_j$ for all $i \in N$), we define $\aw(j) = \emptyset$. 
Let $\vec{x}_j(\b) = (x_{ij}(\b))_{i \in N}$ and $\vec{p}_j(\b) = (p_{ij}(\b))_{i \in N}$ be the respective allocation and payments of $\fpar{}{r_j}$, i.e., for $i = \aw(j)$ we have $x_{ij}(\b) = 1$ and $p_{ij}(\b) = b_{ij}$, and for $i \neq \aw(j)$ we have $x_{ij}(\b) = 0$ and $p_{ij}(\b) = 0$.\footnote{Both $\vec{x}_j$ and $\vec{p}_j$ only depend on the input profile $\b_j$. However, we often use $\vec{b}$ as the argument for notational convenience.}
We write $\vec{x}_j(\b) \neq \vec{0}$ to indicate that item $j$ is sold.

Our global mechanism, denoted by $\mech$, implements the above mechanisms with reserve prices $\vec{r} = (r_j)_{j \in M}$ simultaneously. 
That is, given a bid profile $\b$, the outcome $\mech(\vec{r}, \b) = (\vec{x}(\b), \vec{p}(\b))$ is determined by the allocation $\vec{x}(\b)= (\vec{x}_{j}(\b))_{j \in M}$ and the payments $\vec{p}(\b) = (\vec{p}_{j}(\b))_{j \in M}$ obtained by running the $m$ auctions (i.e., $\fpar{}{r_j}$ for each $j \in M$) simultaneously. We write $\vec{x}(\b) \neq \vec{0}$ to indicate that all items are sold under $\b$, i.e., $\vec{x}_j(\b) \neq \vec{0}$ for all $j \in M$.\footnote{Note that we slightly abuse notation as $\vec x(\b) \neq \vec 0$ here indicates that there is exactly one $1$ in each row of $\vec x(\b) \in \set{0,1}^{m \times n}$.}
We use $\vec{x}_i(\vec{b}) = (x_{ij}(\vec{b}))_{j \in M} \in \set{0,1}^{m}$ and $p_i(\b) = \sum_{j \in M} p_{ij}(\b)$ to denote the allocation and total payment of agent $i$, respectively. Further, we define $\vec{X}$ as the set of all feasible allocations, i.e., $\vec{X} = \set{\vec{x} = (\vec{x}_{i})_{i \in N} \in \set{0,1}^{m\times n} \mid \sum_{i \in N} x_{ij} \le 1\ \forall j \in M}$. 
We slightly overload notation and use $\vec{x}_i(\vec{b})$ also to refer to the set of items assigned to $i$, i.e., $\vec{x}_i(\vec{b}) = \sset{j \in M}{x_{ij}(\b) = 1} \subseteq M$. Additionally, we sometimes omit the argument $\vec{b}$ when it is clear from context.

We use $\fpar{m}{\vec{r}}$ and $\fpar{m}{}$, respectively, to refer to $m$ simultaneous first-price auctions with reserve prices $\vec{r}$ and without reserve prices. 
We use $\fpar{}{r}$ to indicate that we consider a single-item first-price auction (i.e., $m = 1$) with reserve price $r$.
If $m = 1$, we drop the auction index $j = 1$ from all our notation.

\paragraph{Valuation Functions.} 
Each agent $i \in N$ has a valuation function $v_i: 2^M \rightarrow \mathbb{R}_{\ge 0}$ over the subsets of the items, where $v_i(S)$ specifies the value that $i$ obtains when receiving the items in $S \subseteq M$. We assume w.l.o.g. that $v_i(\emptyset) = 0$. Also, we assume that $v_i$ is \emph{monotone}, i.e., $v_i(S) \le v_i(T)$ for all $S \subseteq T \subseteq M$.
We use $\mathcal{V}_i$ to denote the class of valuation functions of agent $i$ and let $\mathcal{V} = \times_{i \in N} \mathcal{V}_i$ be the set of all valuation functions of the agents.
We use $\vec{v} = (v_i)_{i \in N} \in \mathcal{V}$ to refer to the profile of valuation functions of the agents. 
We consider different classes of valuation functions:

\begin{definition}\label{def:val-classes}
Let $v_i: 2^M \rightarrow \mathbb{R}_{\ge 0}$ be a valuation function. 
\begin{compactitem}\itemsep0pt
\item $v_i$ is \emph{additive} if there exist additive valuations $(v_{ij})_{j \in M} \in \mathbb{R}^m_{\ge 0}$ such that for every subset $S \subseteq M$, it holds that $v_i(S) = \sum_{j \in S} v_{ij}$. 
\item $v_i$ is \emph{submodular} if $v_i(S \cup \set{j}) - v_i(S) \ge v_i(T \cup \set{j}) - v_i(T)$ for all $S \subseteq T \subseteq M$. 
\item $v_i$ is \emph{fractionally subadditive} (or, \emph{XOS}), if there exists a class $\mathcal{L}_{i} = \set{(v^\ell_{ij})_{j \in M} \in \mathbb{R}^m_{\ge 0}}$ of additive valuations such that for every subset $S \subseteq M$, it holds that $v_i(S) = \max_{\ell \in \mathcal{L}_{i}} \sum_{j \in S} v^\ell_{ij}$. 
\end{compactitem}
\end{definition}

Let $\mathcal{V}_{\add}$, $\mathcal{V}_{\sub}$ and $\mathcal{V}_{\xos}$ refer to the set of additive, submodular and fractionally subadditive (XOS) valuation functions, respectively. It is well-known (see e.g., \citep{lehmann01}) that $\mathcal{V}_{\add} \subset \mathcal{V}_{\sub} \subset \mathcal{V}_{\xos}$.

\paragraph{Random Bid Profiles.}
Each agent $i$ can randomize over their deterministic (or pure) bid profiles $\b_i$ in $D_i$. We define $\Delta_i$ as the space of random bid profiles of $i$ over $D_i$. Let $\dist$ be a probability distribution over the set of bid profiles in $D$; we use $\Delta$ to refer to the set of all such probability distributions.
We use $\B \sim \dist$ to denote a random bid profile that is drawn from $\dist$; we often omit the reference to $\dist$ and identify $\B$ with $\dist$. We use $f_{\B}$ and $F_{\B}$ to refer to the probability density function (PDF) and cumulative distribution function (CDF) of $\B$, respectively.
The support of $\B$ refers to the set of bid profiles that have positive density, i.e., $\supp(\B) = \sset{\b \in D}{f_{\B}(\b)>0}$.
If $\supp(\B) = \set{\b}$ then $\B$ chooses $\b$ deterministically and we write $\B = \b$.
We use $\supp_i(\B)$ to refer to the set of bid profiles $\b_i$ of agent $i$ that have positive density under $\B$. 
The \emph{marginal} $\vec{B}_{-i}$ of $\B$ is defined by the following PDF: 
\[
\forall \b_{-i} \in D_{-i}: \qquad
f_{\B_{-i}}(\b_{-i}) = \int_{D_i} f_{\B}(\b_i, \b_{-i}) d \b_i.
\]
Given a bid profile $\b'_i$ of agent $i$, we denote by $(\b'_i, \B_{-i})$ the random bid profile that we obtain from $\B$ when agent $i$ bids $\b'_i$ deterministically and the other agents bid according to the marginal $\B_{-i}$.
We say that a bid profile $\B$ is \textit{well-supported} with respect to reserve prices $\vec{r}$ if the items are always sold under $\B$, i.e., for each $\b \in \supp(\B)$, $\vec{x}(\b) \neq \vec{0}$.

\paragraph{Hybrid Agent Model.}
We consider the general \emph{hybrid agent model} (see, e.g., \citep{balseiro2021robust,survey}), where each agent $i \in N$ maximizes their \emph{gain function} $\z_i: D \rightarrow \mathbb{R}$ defined as 
\begin{equation}\label{eq:agent-objective}
\z_i(\b) = 
v_i(\vec{x}_i(\b)) - \sens_i \cdot p_{i}(\b).
\end{equation}
Here, $\sens_i \in [0,1]$ defines the \emph{type} of agent $i$. 
Intuitively, $\sens_i$ represents $i$'s sensitivity to payments: a higher value indicates that $i$ is more negatively affected by payments. In particular, agent $i$ is a value maximizer if $\sens_i = 0$, and a utility maximizer if $\sens_i = 1$. Our model thus allows us to capture a large spectrum of agents' types, ranging from value maximizers to utility maximizers. 
Most previous works focus on the special case of the \emph{mixed agent model} consisting of value and utility maximizers only, i.e., $\sens_i \in \set{0,1}$ for all $i \in N$. 

Each agent $i$  has a \emph{return-on-investment (ROI) constraint} and a \emph{budget contsraint} that must be satisfied (see \citep{survey}). 
Given a bid profile $\vec{B}$, the ROI constraint of an agent $i$ enforces that the expected total payment of $i$ is at most a factor $\tau_i \in \mathbb{R}_{>0}$ of their expected valuation for the received items, where $\tau_i$ is the so-called \emph{target parameter} of $i$, i.e.,
\begin{equation}\label{eq:ROI}
\pe \left[p_{i}(\B)\right]\le \tau_i \cdot \pe[v_i(\vec{x}_i(\B))].
\end{equation}
%In addition to the ROI constraint, each agent $i$ has a \emph{budget constraint}. 
Additionally, the budget constraint of agent $i$ requires that the expected total payment of $i$ is at most $\budget_i \in \mathbb{R}_{> 0} \cup \set{\infty}$, i.e.,
\begin{equation}\label{eq:budget}
\pe \left[p_{i}(\B)\right]\le \budget_i.
\end{equation}
For an agent $i$, we define $\mathcal{R}_i$ as the set of bid profiles $\B$ that satisfy both the ROI contraint \eqref{eq:ROI} and the budget constraint \eqref{eq:budget}.
%\gsrem{removed deterministic specialization of ROI and B constraint here; not needed it seems}
%Note that for a deterministic bid profile $\vec{B} = \vec{b}$, the ROI constraint \eqref{eq:ROI} and the budget constraint \eqref{eq:budget} reduce to $p_i(\vec{b}) \le \tau_i v_i(\vec{x}_i(\vec{b}))$ and $p_i(\vec{b}) \leq \budget_i$. 
It is not hard to see that we can assume w.l.o.g.~that $\tau_i \sens_i \le 1$ for all agents $i \in N$; we refer to Appendix~\ref{app:prelim} for more details. 

Formally, we use $I = (N, M, \vec{r}, \vec{v},  \vec{\sens}, \vec{\tau}, \vec{\budget})$ to denote an instance. For ease of notation, we omit explicit references to $N$ and $M$ and simply write $I = (\vec{r}, \vec{v},  \vec{\sens}, \vec{\tau}, \vec{\budget})$. 
We say that an instance $I$ is \emph{budget-free} if $\budget_i = \infty$ for all agents $i \in N$.
The set of \emph{agent types} of an instance $I$ is defined as $T(I) = \sset{\type}{\text{$\exists i \in N$ with $\sens_i = \type$}}$. Given an instance $I$, we use $N_{\type}(I) \subseteq N$ to refer to the subset of agents having type $\type$, i.e., $N_{\type}(I) = \sset{i \in N}{\sens_i = \type}$. 
The following notion will turn out to be useful below. Given a type set $T \subseteq [0,1]$, we define the \emph{augmented type set} $T^+$ as follows: for instances with budgets, we define $T^+ = T \cup \set{0}$ as the type set obtained from $T$ by adding the value-maximizing type $0$; for budget-free instances, we define $T^+ = T$ simply.

We use $\mathcal{I}$ to refer to a class of instances. 
For $\val \in \set{\add, \sub, \xos}$, we use $\mathcal{I}_{\val}$ to refer to the set of instances with valuation functions from $\mathcal{V}_{\val}$, i.e., $v_i \in \mathcal{V}_{\val}$ for all $i \in N$.
We use $\mathcal{I}^{\infty}$ to refer to the set of all budget-free instances.
Finally, we also define a class of instances whose type set is restricted. Given a set of types $T \subseteq [0,1]$, we use $\mathcal{I}^{T}$ to denote the set of instances $I$ whose type set is restricted to $T$, i.e., $T(I) = T$.
Note that the above restrictions will also be combined. For example, $\mathcal{I}^{T, \infty}_{\val}$ denotes the class of all budget-free instances with valuations functions from $\mathcal{V}_{\val}$ and agent types being restricted to $T$.

\paragraph{Agent's Problem and Equilibrium Notions.}

The objective of each agent $i$ is to determine a random bid profile $\B_i$ that, given the bid profile $\B_{-i}$ of the other agents, maximizes their gain subject to their ROI and budget constraints.
More formally, each agent $i$ solves the following \emph{agent's problem}:
\[
\max_{\B_i} \ \pe[\z_i(\B_i, \B_{-i})] \quad \text{subject to \quad $(\B_i, \B_{-i}) \in \mathcal{R}_i$.}
\]
The resulting bid profile $\B$ constitutes an equilibrium if for each agent $i$ only the deviations satisfying the ROI and budget constraints are considered. We consider the following equilibrium notions in this paper.

\begin{definition} \label{def:equilibria}
Let $\vec{B} \in \Delta$ be a bid profile satisfying $\B \in \mathcal{R}_i$ for each agent $i \in N$.
\begin{compactitem}
\item 
$\vec{B}$ is a \emph{coarse correlated equilibrium (\cce)} if for every agent $i \in N$ we have:
\begin{equation}
\label{eq:cce-def}
\pe [\z_i(\vec{B})] \ge \pe [\z_i(\vec{B}'_i, \vec{B}_{-i})] \qquad \forall (\vec{B}'_i, \vec{B}_{-i}) \in {\mathcal{R}_i}.
\end{equation}

\item 
$\vec{B}$ is a \emph{correlated equilibrium (\ce)} if for every agent $i \in N$ and every swapping function $\swap: \supp_{i}(\vec{B}) \mapsto \Delta_i$ we have: 
\begin{equation}
\label{eq:ce-def}
\pe [\z_i(\vec{B}) ] \ge \pe [\z_i( \swap(\vec{B}_i), \vec{B}_{-i}) ] \qquad \forall (\swap(\vec{B}_i), \vec{B}_{-i}) \in {\mathcal{R}_i}.
\end{equation}

\item $\vec{B}$ is a \emph{mixed Nash equilibrium (\mne)} if $\vec{B} = \prod_{i \in [n]} \vec{B}_i$ and for every agent $i \in N$ we have: %\emph{mixed Nash equilibrium (\mne)}
\begin{equation}
\label{eq:mne-def}
\pe [\z_i(\vec{B})] \ge \pe [\z_i(\vec{B}'_i, \vec{B}_{-i})] \qquad \forall (\vec{B}'_i, \vec{B}_{-i}) \in {\mathcal{R}_i}.
\end{equation}
\end{compactitem}
\end{definition}

Given an instance $I$, we use $\mne(I)$, $\ce(I)$ and $\cce(I)$ to refer to the sets of mixed, correlated and coarse correlated equilibria of $I$, respectively. Note that $\mne(I) \subseteq \ce(I) \subseteq \cce(I)$; a more elaborate discussion on the equilibrium hierarchy can be found in Appendix \ref{sec:appequilibria}. We use $\equ$ as a generic placeholder for an equilibrium notion with $\equ \in \set{\mne, \ce, \cce}$.

\paragraph{Liquid Price of Anarchy.}
We use \emph{liquid welfare} as the social welfare objective, which is also the standard benchmark in the autobidding literature (see, e.g., \citep{survey}).
Intuitively, the liquid welfare measures the maximum amount of payments one can extract from the agents. We refer to Appendix~\ref{app:prelim} for a more detailed discussion of the liquid welfare objective. 
Given an instance $I$ and a bid profile $\vec{B}$, the liquid welfare is defined as
$$
\sw(I, \vec{B}) = \sum_{i \in N}  \min (\EX[\tau_i v_i(\vec{x}_i(\vec{B}))], \budget_i). 
$$
Given an instance $I$, an optimal allocation $\vec{x}^*(I) \in \vec{X}$ maximizes the liquid welfare over all feasible allocations:
\[
\vec{x}^*(I) \in \arg\max_{\vec{x} \in \vec{X}}\sum_{i \in N}  \min (\tau_i v_i(\vec{x}_i), \budget_i).
\]
We use $\opt(I) =\sum_{i \in N}  \min (\tau_i v_i(\vec{x}^*_i(I)), \budget_i)$ to denote the optimal liquid welfare. 

In this paper, we study the \emph{price of anarchy} as introduced by \citet{koutsoupias1999worst} with respect to the liquid welfare objective: the price of anarchy is defined as the worst-case ratio of the optimal liquid welfare over the expected liquid welfare of any equilibrium. More formally, given a set of instances $\mathcal{I}$ and an equilibrium notion $\equ \in \set{\mne, \ce, \cce}$, we define the \emph{price of anarchy with respect to $\equ$} as:
\[
\equ\text{-}\poa(\mathcal{I}) =
\sup_{I \in \mathcal{I}} \sup_{\B \in \equ(I)}
\frac{\opt(I)}{\sw(I, \B)}.
\]

As we show in Proposition~\ref{prop:uniform-target} (see Appendix~\ref{app:prelim}), when studying the liquid price of anarchy we can assume without loss of generality that $\tau_i = 1$ for all $i \in N$. Subsequently, we therefore omit the target parameters (with the understanding that $\vec \tau = \vec 1$) and refer to an instances as $I = (\vec r, \vec v, \vec \sens, \vec \budget)$.

\iffalse
\paragraph{Rightful Winner.}
For instances $I \in \mathcal{I}_{\add}$ with $\budget_i=\infty$ for all $i \in N$, 
\atrem{Bit of chicken-egg here with $\rw$, no?} 
\gsrem{I think I see what you mean}
Thus, in this case, the optimal liquid welfare is $\opt(I) = \sum_{j \in M} v_{\rw(j)j}$.
\fi

\paragraph{Reserve Prices.}
\citet{balseiro2021robust} and \citet{deng2024} studied the effect of reserve prices on the price of anarchy for budget-free instances and additive valuations. Their bounds depend on a parameter $\eta$ that measures the relative gap between the reserve price and the highest valuation of an agent over all auctions. 
We extend their model to instances with fractionally subadditive valuations (with or without budgets). Let $I \in \mathcal{I}_{\xos}$ be such an instance with reserve prices $\vec{r}$. We can then choose \emph{opt-induced additive representatives} $(v_{ij})_{j \in M}$ for each agent $i$ with respect to their allocation $\vec{x}^{*}_i$ in the \emph{optimal solution} $\vec{x}^{*}(I)$ (see Section~\ref{sec:extension-theorem} for more details).
An agent $i$ is said to be the \emph{rightful winner} of auction $j$, denoted by $\rw(j)$, if $i$ is an agent with maximum valuation for the item, i.e., $\rw(j) \in \arg\max_{i \in N} v_{ij}$. In case of ties, we let $\rw(j)$ denote the winner of auction $j$ in the considered optimal allocation. 
For each auction $j \in M$, define the relative gap $\eta_j \in [0,1)$ such that $r_j = \eta_j v_{\rw(j)j}$, and let $\eta = \min_{j \in M} \eta_j$ be the smallest relative gap. Note that, as in \citep{balseiro2021robust} and \citep{deng2024}, we only consider reserve prices $\vec{r}$ that satisfy $\eta_j \in [0,1)$ for each auction $j \in M$. In fact, it is not hard to see that otherwise the POA is unbounded (see Appendix~\ref{app:prelim} for more details). 

\paragraph{The Lambert $W$ Function.} %\label{app:lambert}
In order to derive POA bounds analytically, we use the Lambert $W$ function, which is the multivalued inverse of the function $f(z) = z e^z$. 
In this work, we exclusively use the \emph{principal branch} of the Lambert $W$ function and denote it by $W_0$; more details are given in the appendix.

\medskip

Several proofs and further discussions are deferred to the appendix, with all omitted material organized by the corresponding sections; hyperlinks are provided in the main text.

\section{Type-Dependent Smoothness Framework} \label{sec:smoothness:extension}

We introduce a new type-dependent smoothness framework that enables us to bound the POA of coarse correlated equilibria in simultaneous first-price auctions, in the full generality of our hybrid model.
We begin by formalizing the type-dependent smoothness notion and deriving corresponding smoothness lemmas. To handle budget constraints, we apply a reduction technique that transforms instances with budgets into budget-free proxy instances. Our Extension Theorem then leverages this framework to establish upper bounds on the POA of coarse correlated equilibria. A key technical challenge lies in balancing the type-dependent smoothness parameters to obtain the best possible bounds. To this end, we formulate a mathematical program that facilitates the analysis of the price of anarchy.

\subsection{Type-Dependent Smoothness}

In this section, we focus on single-item instances (i.e., $m = 1$) and thus omit the auction index $j = 1$ from the notation. Note that in this case, fractionally subadditive valuation functions reduce to additive ones. In particular, each agent $i$ has a single additive value $v_i$ for the item (see Definition~\ref{def:val-classes}). 

We need the notion of ROI-restricted bid profiles. Let $B'_i \in \Delta_i$ be a bid profile of agent $i$. We say that $B'_i$ is \emph{ROI-restricted} if for each $\b_{-i} \in D_{-i}$, it holds that
    $
    \pe[p_i(B'_i, \b_{-i})] \le  \pe[v_i(\vec{x}_i(B'_i, \b_{-i}))].
    $
We can now introduce our new type-dependent smoothness notion:

\begin{definition}[Type-Dependent Smoothness]\label{def:smooth} 
Let $I = (r, \vec v, \vec \sens, \vec \budget)$ be a single-item instance with reserve price $r$.
Let the rightful winner $i = \rw$ be of type $\type$. 
Then, $\fpar{}{r}$ is \emph{$(\lambda_{\type}, \mu_{\type})$-smooth for type $\type$} with $\lambda_{\type}, \mu_{\type} >0$, if there is a ROI-restricted bid $B'_{i} = B'_{i}(\vec{v}) \in \Delta_i$ such that for every well-supported bid profile $\b$ we have 
\begin{equation}\label{eq:def:smoothness}
            \pe[\z_{i}(B'_{i}, \b_{-{i}})] \ge \lambda_{\type}  v_{i} - \mu_{\type}  p_{\aw(\vec{b})}(\b).
\end{equation}
\end{definition}

We remark that, crucially, the random deviation $B'_{i}$ of the rightful winner $i$ may depend on the valuations $\vec{v}$ but \emph{not} on the bid profile $\b$. 
Note that \eqref{eq:def:smoothness} needs to hold only for bid profiles $\b$ that are well-supported, i.e., when the item is sold; clearly, this condition is redundant in the setting without reserve prices, i.e., $r = 0$. 

\paragraph{Smoothness Lemmas.}

The following two lemmas establish type-dependent smoothness of $\fpar{r}{}$ for different types $\type \in [0,1]$. We start with a simple smoothness lemma for type-0 agents (i.e., value maximizers). 
Recall that $\eta \in [0,1)$ measures the relative gap between the reserve price and the valuation of the rightful winner, i.e., $r= \eta v_{\rw}$.

\begin{restatable}{lemma}{fpasmoothnesskeytwo}\label{lem:fpa-smoothness-key-2}
Consider a single-item instance $I = (r, \vec v, \vec \sens, \vec \budget)$ and let the rightful winner $i = \rw$ be of type $\type = 0$. 
Then, $\fpar{}{r}$ is $(\lambda_{\type}, \mu_{\type})$-smooth for type $\type$ with $\lambda_{\type} = \mu_{\type} = \mu$ for every $\mu \in (0,(1-\eta)^{-1}]$.
\end{restatable}
\begin{proof}
    Assume that the rightful winner $i$ is of type $\type=0$. 
We need to show that there exists a ROI-restricted random bid $B'_{i}$ such that for every well-supported bid profile $\b$ and $\aw = \aw(\b)$ the actual winner, it holds that:
\begin{equation}\label{eq:def:smoothness-repeat2}
    \pe\left[\z_{i}\left(B'_{i}, \b_{-{i}}\right)\right] \ge \mu v_{i} - \mu p_{\aw}(\b).
\end{equation}

Let $B'_{i} = B'_{i}(\vec{v})$ be a random unilateral deviation of $i$ drawn from $[\eta v_{i}, v_i]$ with CDF $F(z) =  F_{B'_{i}}(z) = \mu z / v_i + 1-\mu$. 
Note that the domain is well-defined as $\eta \in [0,1)$, and it is easy to verify that $F(\cdot)$ is non-negative and increasing over $[\eta v_{i}, v_i]$ and $F(v_{i}) = 1$.
Also, $B'_{i}$ is ROI-restricted as the condition is even pointwise satisfied, i.e., for every $z \in [\eta v_{i}, v_{i}]$ it holds that $p_{i}(z, \cdot) \le v_{i}(\vec{x}_{i}(z, \cdot))$. 

It remains to show that $B'_{i}$ satisfies \eqref{eq:def:smoothness-repeat2}.
Note that the expected gain of $i$ is always non-negative, as $i$ bids above $v_i$ with 0 probability. 
Thus, \eqref{eq:def:smoothness-repeat2} holds trivially if $v_{i} \le p_{\aw}(\b)$.
Therefore, assume that $v_{i} > p_{\aw}(\b)$ and define $\theta_{i} := \max( \eta v_{i}, \max_{j \neq i} b_j)$.
For every $z \ge \theta_{i}$, $i$ wins the item under the bid profile $(z, \vec{b}_{-i})$ and pays $p_{i}(z, \vec{b}_{-i}) = z$.
As the item is sold under the bid profile $\vec{b}$ by assumption, the actual winner under $\vec{b}$ either pays the reserve price or their maximum bid, i.e., $p_{\aw}(\b) = \max( \eta v_{i}, \max_{j} b_j)$.
We obtain: 
\[
\theta_{i} 
= \max \bigg( \eta v_{i}, \max_{j \neq i} b_j \bigg) 
\le \max \bigg ( \eta v_{i}, \max_j b_j \bigg)
= p_{\aw}(\b)
< v_{i}.
\]
This leads to the desired result as:
\begin{align*}
\pe\left[\z_{i}\left(B'_{i}, \vec{b}_{-{i}}\right)\right] 
& = v_{i} (1-F(\theta_{i})) 
= v_i \left(1- \left( \frac{\mu \theta_i}{v_i} + 1-\mu \right)\right) \\
&= \mu v_i - \mu \theta_i 
\ge \mu v_i - \mu p_{\aw}(\b).
\end{align*}
Note that the first equality holds because the sensitivity of $i$ is $\sens_{i} =t = 0$.
\end{proof}

A key insight used in the proof is that the reserve price allows us to increase the probability mass of the random deviation for larger bids. This provides a better trade-off in terms of the smoothness parameters. 
Our smoothness lemma for agent types $\type \in (0,1]$ follows the same approach, but is technically more challenging. 

\begin{restatable}{lemma}{fpasmoothnesskeyone} \label{lem:fpa-smoothness-key-1}
Consider a single-item instance $I = (r, \vec v, \vec \sens, \vec \budget)$ and let the rightful winner $i = \rw$ be of type $\type \in (0,1]$.
Then, $\fpar{}{r}$ is $(\lambda_{\type}, \mu_{\type})$-smooth for type $\type$ with 
\begin{equation}\label{eq:alpharestr}
\lambda_{\type} =  \frac{\mu}{\type}\left(1 - \frac{1 - {\type\eta}}{e^{\type/{\mu}}}\right) 
\quad\text{and}\quad
\mu_{\type} = \mu
\quad\text{for every}\quad
\begin{cases}
	\mu \ge \type \left( \ln\left(\frac{1-{\type\eta}}{1- \type} \right)\right)^{-1}, & \text{if \ $\type < 1$,}\\
    \mu > 0 , & \text{if \ $\type = 1$.}
\end{cases}
\end{equation}
\end{restatable}

We elaborate on the expression of \eqref{eq:alpharestr} in Lemma \ref{lem:fpa-smoothness-key-1}. First, note that $\type \in (0, 1]$, as $\type > 0$ and $\type \le 1$ by assumption. 
The $\ln(\cdot)$ expression decreases as $\eta$ increases and converges to $0$ (from above) as $\eta \rightarrow 1$; the lower bound restriction on $\mu$ thus increases as $\eta$ increases. 
As $\eta < 1$, the $\ln(\cdot)$ expression is well-defined for all combinations of $\type$ and $\eta$, except when $\type = 1$. In the latter case, we only impose the restriction that $\mu > 0$.

Given $\mu$, we define a parameter $\gamma = \gamma(\mu)$ as follows: 
\begin{equation}\label{def:gamma}
    \gamma(\mu) 
    = \frac{1}{t}\left(1 - \frac{1 - {t\eta}}{e^{t/{\mu}}}\right),
\end{equation}
which will be useful in the proof of Lemma \ref{lem:fpa-smoothness-key-1}.
Note that $\gamma$ is well-defined because $t > 0$ by assumption. 

The following corollary is an immediate consequence of the definitions above. 
\begin{corollary} \label{cor:gamma}
Let $\mu$ satisfy \eqref{eq:alpharestr} and let $\gamma$ be defined as in \eqref{def:gamma}. 
Then, $\gamma \in [\eta,1]$.
\end{corollary}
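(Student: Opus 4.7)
The plan is to establish the two inequalities $\gamma \ge \eta$ and $\gamma \le 1$ separately, both by direct algebraic manipulation of the definition \eqref{def:gamma}.

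For the lower bound $\gamma \ge \eta$, I would start from the desired inequality and rearrange: $\frac{1}{\sens}\bigl(1 - \frac{1-\sens\eta}{e^{\sens/\mu}}\bigr) \ge \eta$ is equivalent (using $\sens > 0$) to $e^{\sens/\mu}(1 - \sens\eta) \ge 1 - \sens\eta$. Since $\sens \in (0,1]$ and $\eta \in [0,1)$ give $1 - \sens\eta > 0$, this collapses to $e^{\sens/\mu} \ge 1$, which holds trivially because $\sens/\mu > 0$. This part is independent of the restriction \eqref{eq:alpharestr} on $\mu$ and uses only the standing assumptions on $\sens$ and $\eta$.

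For the upper bound $\gamma \le 1$, I would split on whether $\sens = 1$ or $\sens < 1$. When $\sens = 1$, definition \eqref{def:gamma} gives $\gamma = 1 - (1-\eta)e^{-1/\mu}$, and since $(1-\eta)e^{-1/\mu} \ge 0$, we immediately get $\gamma \le 1$, matching the weak restriction $\mu > 0$ in \eqref{eq:alpharestr}. When $\sens < 1$, I would again rearrange: $\gamma \le 1$ is equivalent to $\frac{1-\sens\eta}{e^{\sens/\mu}} \ge 1 - \sens$, and since $1-\sens > 0$ I can take logarithms to get the equivalent form $\sens/\mu \le \ln\bigl(\frac{1-\sens\eta}{1-\sens}\bigr)$. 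Because the argument of the logarithm is at least $1$ (as $\eta < 1$ implies $1 - \sens\eta \ge 1 - \sens$), the right-hand side is nonnegative, and rearranging yields exactly $\mu \ge \sens\bigl(\ln\bigl(\frac{1-\sens\eta}{1-\sens}\bigr)\bigr)^{-1}$, which is precisely the hypothesis \eqref{eq:alpharestr} for the case $\sens < 1$.

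The argument is essentially a two-line computation each way, so I do not anticipate any genuine obstacle: the only subtlety is being careful with the case distinction $\sens = 1$ versus $\sens < 1$ (to ensure the logarithm is well-defined and that division by $1-\sens$ is valid) and verifying that the edge cases $\eta = 0$ and $\sens\eta$ close to $1$ behave correctly. Both edge cases are handled uniformly by the observations that $1-\sens\eta > 0$ and $\sens/\mu > 0$ under the standing assumptions. Hence the corollary reduces to showing that the definition of $\gamma$ and the constraint \eqref{eq:alpharestr} on $\mu$ are tight inverses of each other at the boundary $\gamma = 1$, with $\gamma = \eta$ corresponding to the limit $\mu \to \infty$.
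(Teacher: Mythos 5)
Your proof is correct and takes essentially the same approach as the paper: the lower bound follows from $e^{\sens/\mu} > 1$, and the upper bound follows by translating the constraint \eqref{eq:alpharestr} into $e^{\sens/\mu} \le \frac{1-\sens\eta}{1-\sens}$. Your explicit case split on $\sens = 1$ is a small improvement in rigor over the paper's single-formula argument, which implicitly uses the expression $\frac{1-\sens\eta}{1-\sens}$ that is undefined when $\sens = 1$.
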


\begin{proof} %[Proof of Corollary~\ref{cor:gamma}]
Note that the interval $[\eta, 1]$ is well-defined because $\eta \in [0, 1)$ by assumption. 
We first prove the lower bound on $\gamma$. 
Note that $e^{t/\mu} > 1$ as $\mu > 0$ and $t >0$, and therefore:
\[
    \gamma = \frac{1}{t} \bigg(1 - \frac{1 - {t \eta}}{e^{{t}/{\mu}}}\bigg)
    > \frac{1}{t} \cdot t \eta
    = \eta.    
\]    
For the upper bound on $\gamma$, we have that:
\[  
    \gamma 
    = \frac{1}{t} \left(1 - \frac{1 - {t\eta}}{e^{{t}/{\mu}}}\right) 
    \leq \frac{1}{t} \left(1 - \frac{1 - {t\eta}}{e^{\ln \left(\frac{1 - {t\eta }}{1 - t}\right)}}\right) 
    = \frac{1}{t} \cdot  t = 1,
\]
where the inequality follows from \eqref{eq:alpharestr} and because $e^{x}$ is non-decreasing in $x$.
\end{proof}

We now continue with the proof of Lemma \ref{lem:fpa-smoothness-key-1}.
\medskip

\begin{proof}[Proof of Lemma~\ref{lem:fpa-smoothness-key-1}]
Assume that the rightful winner $i$ is of type $\type \in T$ with $\sens_i = t > 0$. 
We need to show that there exists a ROI-restricted random bid $B'_i$ such that for every well-supported bid profile $\b$ and $\aw = \aw(\b)$ the actual winner, it holds that:
\begin{equation}\label{eq:def:smoothness-repeat}
    \pe\left[\z_{i}\left(B'_{i}, \b_{-{i}}\right)\right] \ge \mu\gamma  v_{i} - \mu  p_{\aw}(\b).
\end{equation}

Let $B'_i = B'_i(\vec{v})$ be a random unilateral deviation of $i$ drawn from $[\eta v_i, \gamma v_{i}]$ with PDF $f(z) = f_{B'_i}(z) = \mu/(v_i - t z)$. 
Note that the domain is well-defined as $\gamma \in [ \eta, 1]$ by Corollary~\ref{cor:gamma}, and that $f(\cdot)$ is non-negative.
Also note that:
\begin{align*}
    \int_{\eta v_i}^{\gamma v_i}f(z) d z
    & 
    = \int_{\eta v_i}^{\gamma v_i}\frac{\mu}{v_{i}- t z} d z
    = \mu \int_{\eta v_i}^{\gamma v_i} \left(\frac{-\ln\left(v_i-t z \right)}{\sens}\right)' d z \\
    &=\frac{\mu}{t}\ln\left(\frac{1-t\eta }{1-t\gamma}\right)
    = \frac{\mu}{t}\ln\left(e^{{t}/{\mu}} \right)
    = 1,
\end{align*}
where the fourth equality follows from the definition of $\gamma$ in \eqref{def:gamma}.
Furthermore, $B'_i$ is ROI-restricted as the condition is even pointwise satisfied, i.e., for $z \in [\eta v_i, \gamma v_i]$ it holds that $p_i(z, \cdot) \le \gamma v_i(\vec{x}_i(z, \cdot)) \le v_i(\vec{x}_i(z, \cdot))$, as $\gamma \in [ \eta, 1]$ by Corollary~\ref{cor:gamma}. 

It remains to show that $B'_i$ satisfies \eqref{eq:def:smoothness-repeat}.
Note that the expected gain of $i$ is always non-negative, as $i$ bids above $v_i$ with 0 probability and $t \le 1$.
Thus, \eqref{eq:def:smoothness-repeat} holds trivially if $ \gamma v_i \le  p_{\aw}(\b)$.
Therefore, assume that $\gamma v_i >  p_{\aw}(\b)$ and define $\theta_i := \max( \eta v_i, \max_{j \neq i} b_j)$.
Then, for every $z \ge \theta_i$, $i$ wins the item under bid profile $(z, \vec{b}_{-i})$ and pays $p_i(z, \vec{b}_{-i}) = z$.
As the item is sold under bid profile $\vec{b}$ by assumption, the actual winner under $\vec{b}$ either pays the reserve price or their maximum bid, i.e., $p_{\aw}(\b) = \max( \eta v_i, \max_{j} b_j)$.
We obtain:
\[
\theta_i 
= \max \bigg ( \eta  v_i, \max_{j \neq i} b_j \bigg) 
\le \max \bigg ( \eta  v_i, \max_j b_j \bigg )
= p_{\aw}(\b)
< \gamma v_i.
\]
This leads to the desired result as:
\begin{align*}
\pe\left[\z_i\left(B'_i, \vec{b}_{-i}\right)\right] 
& 
= \int_{\theta_i}^{\gamma v_i} (v_i - t p_i(z, \vec{b}_{-i}) )f(z) d z 
= \int_{\theta_i}^{\gamma v_i} (v_i  - t z) f(z) d z  \\
& 
= \int_{\theta_i}^{\gamma v_i} \mu d z  
= \mu\gamma v_i - \mu\theta_i 
\ge \mu \gamma v_i - \mu p_{\aw}(\b).
\end{align*}
Note that the first equality holds because the sensitivity of $i$ is $\sens_i=t$ by the precondition of the Lemma. 
\end{proof}

\subsection{Extension Theorem} \label{sec:extension-theorem}

We present our Extension Theorem to derive bounds on the price of anarchy. Our bounds depend on the set of available agent types $T \subseteq [0,1]$. We consider the class of instances $\mathcal{I}_{\xos}^{T}$ with fractionally subadditive valuations and type set $T$.

We first introduce the notion of \emph{calibration vectors}, which will be crucial in the proof below. As we show in Lemma~\ref{lemma:optimal-xi} below, the set of calibration vectors $\mathcal{C}(\vec{\mu}, T)$ is always non-empty.

\begin{definition}[Calibration Vectors]\label{def:Xi}
Let $T$ be a set of agent types and let $\vec{\mu} = (\mu_{\type})_{\type \in T}$ be such that $\mu_{\type} >0$ for each $\type \in T$. We define the set of \emph{calibration vectors} $\mathcal{C}(\vec{\mu}, T)$ as follows: 
 \begin{equation}\label{equation:Xi-def}
     \mathcal{C}(\vec{\mu}, T) = 
     \ssetl{\vec{\delta} \in (0,1]^{|T|}}{  
     \max_{\type \in T} (\delta_{\type} \mu_{\type} )  + \max_{\type \in T} (\delta_{\type}(1-\type)) 
     \leq 1}.
 \end{equation}
\end{definition}

We can now state the main result of this section.
Recall that $T^+$ is the augmented type set of $T$, where $T^+ = T \cup \set{0}$ for instances with budgets and $T^+ = T$ for budget-free instances. 

\begin{theorem}[Extension Theorem]
\label{theorem:template}
Let $\mathcal{I}_{\xos}^T$ be the class of instances with fractionally subadditive valuations and type set $T$. Assume that $\fpa(r)$ is $(\lambda_{\type}, \mu_{\type})$-smooth for each type $\type \in T^+$. Then, the price of anarchy of well-supported coarse correlated equilibria bounded by
\[
\cce\text{-}\poa(\mathcal{I}^{T}_{\xos}) \le 
\left(\max_{\vec{\delta} \in \mathcal{C}(\vec{\mu}, T^+)}\min_{\type \in T^+} \delta_{\type}\lambda_{\type}\right)^{-1}.
\]
\end{theorem}

The remainder of this section is devoted to the proof of Theorem \ref{theorem:template}.

\subsubsection{Opt-Induced Additive Representatives and Budget-Free Proxy Instances}

We first introduce the notion of \emph{opt-induced additive representatives}. 
Intuitively, these representatives allow us to treat fractionally subadditive valuations as additive ones in the analysis. Let $I \in \mathcal{I}_{\xos}$ be an instance with XOS valuation functions $\vec v = (v_i)_{i \in N}$. 
Fix an optimal allocation $\vec{x}^*:=\vec{x}^*(I)$. By Definition~\ref{def:val-classes}, for each $i \in N$, there exist additive representatives $(v^*_{ij})_{j \in M}$ with respect to the optimal allocation $\vec{x}^*_i$.\footnote{Note that these representatives simply coincide with the input valuations if the valuation functions $\vec v$ are additive.} 
We refer to these representatives as \emph{opt-induced additive representatives}. We define $v^*_i$ as the additive valuation function obtained from these representatives, i.e., $v^*_i(\vec{x}_i) := \sum_{j \in M} v^*_{ij} x_{ij}$ for any allocation $\vec{x}_i \subseteq M$. 
The following two properties follow directly from Definition~\ref{def:val-classes}: 
\textbf{(XOS1)} $v_i(\vec{x}^*_i) = v^*_i(\vec{x}^*_i)$. \textbf{(XOS2)} For any allocation $\vec x_i \subseteq M$, $v_i(\vec{x}_i) \ge v^*_i(\vec{x}_i)$.%
\footnote{To see this, recall that the additive representatives $(v_{ij})_{j \in M}$ of agent $i$ with respect to $\vec{x}_i$ are chosen as maximizers from the class $\mathcal{L}_i \ni (v^*_{ij})_{j \in M}$, and thus $v_i(\vec{x}_i) = \sum_{j \in M} v_{ij} x_{ij} \ge \sum_{j \in M} v^*_{ij} x_{ij}$.}

Next, we define \emph{budget-capped valuations} that account for the budget constraints in \eqref{eq:budget}. For every $i \in N$, the \emph{$\budget_i$-capped valuation} $v_i^{\budget_i}: 2^M \mapsto \mathbb{R}_{\geq 0}$ is defined as $v_i^{\budget_i}(S)=\min(v_i(S), \budget_i)$ for all $S \subseteq M$.
A crucial observation that we use below is that capped XOS valuation functions remain XOS.
\begin{proposition}[Lemma C.7 in \citep{ST13}]\label{proposition:capping-xos}
    $v_i \in \mathcal{V}_{\xos} \Rightarrow v_i^{\budget_i} \in \mathcal{V}_{\xos}$.
\end{proposition}

We can now formalize the notion of \emph{budget-free proxy instances}. 

\begin{definition}[Budget-Free Proxy Instance]\label{def:proxy-instance}
    Given an instance $I=(\vec{r}, \vec{v}, \vec{\sens}, \vec{\budget})$ and a bid profile $\vec{B} \in \Delta$, 
    the \emph{budget-free proxy instance} of $I$ and $\B$ is defined as $\hat{I}(I, \vec{B})=(\vec{r}, \vec{v^{\budget_i}}, \vec{\hat{\sens}}(\vec{B}), \vec{\infty})$ with
\[
\hat{\sens}_i(\vec{B}) := 
\begin{cases}
    0 & \text{if } \budget_i < \EX[v_i(x_i(\vec{B}))], \\
    \sens_i & \text{otherwise}.
\end{cases}
\]
\end{definition}

Intuitively, the budget-free proxy instance $\hat{I}$ simply replaces the valuation function of each agent $i$ by its budget-capped counterpart and leaves $i$'s type intact, unless $i$'s valuation under $\B$ exceeds the budget $\vec \budget_i$. In the latter case, $i$'s type is mapped to $0$ and $i$ becomes a value maximizer instead. 
As we show below, it suffices to focus on these proxy instances to bound the price of anarchy of instances with budgets. 

Note that for every budget-free instance $I \in \mathcal{I}^{\infty}$ it holds that $\hat{I}(I, \vec{B}) = I$ for all $\vec{B} \in \Delta$. 
Moreover, it is not hard to show that the optimal solutions of an instance and each of its budget-free proxies coincide. 
\begin{restatable}{proposition}{propoptbudgetlessequivalence}
    \label{prop:opt-budgetless-equivalence}
    \apprefproof{app:sec:individual-lw-lb}
    Let $I=(\vec{r}, \vec{v}, \vec{\sens}, \vec{\budget})$ and $\vec{B} \in \Delta$. Then, $\opt(\hat{I}(I, \vec{B}))=\opt(I)$.
\end{restatable}
For the remainder of Section~\ref{sec:extension-theorem}, given a pair $(I, \vec{B})$, we use $(\hat{g}_i)_{i \in N}$ to denote the agents’ gain functions for the proxy instance $\hat{I}(I, \vec{B})$, i.e., for every $i \in N$ and every $\vec{b} \in D$, we define 
$
\hat{g}_i(\vec{b}) := v_i^{\budget_i}(x_i(\vec{b})) - \hat{\sens}_{i}(\vec{B}) p_i(\vec{b}).
$

\begin{restatable}{lemma}{lemmaindividuallwlb}\label{lemma:individual-lw-lb}
\apprefproof{app:sec:individual-lw-lb}
        Consider an instance $I \in \mathcal{I}_{\xos}$ and let $\vec{B} \in \cce(I)$. Then, for every agent $i \in N$, for every $\vec{B}_i'$ with $(\vec{B}_i', \vec{B}_{-i}) \in \mathcal{R}_i$ and every $\delta  \in [0,1]$ it holds
    \[
          \min\left(\EX[v_i(x_i(\vec{B}))], \budget_i \right) \geq \delta \cdot \EX\left[\hat{g}_i(\vec{B}_i', \vec{B}_{-i})\right] + (1 - \delta + \delta \hat{\sens}_{i}(\vec{B})) \cdot \pe \left[p_i(\vec{B})\right].
    \]
    
\end{restatable}

\subsubsection{Proof of Theorem \ref{theorem:template}}

The following Lifting Lemma for budget-free instances provides the final ingredient for the proof of our Extension Theorem. Basically, for any type $\type$, it lifts the smoothness property of $\fpa(r)$ for a single auction of type $\type$ (i.e., where the rightful winner is of type $\type$) to all auctions having the same type.

\begin{restatable}[Lifting Lemma]{lemma}{lemcomp}
    \label{lem:comp}
    \apprefproof{app:sec:ll}
    Let $T$ be a set of types. Consider an instance $I \in \mathcal{I}_{\xos}^{T, \infty}$ and let $\B \in \Delta$ be a well-supported bid profile.
    Let $(v^*_{ij})$ be some opt-induced additive representatives. Assume that $\fpa(r)$ is $(\lambda_{\type}, \mu_{\type})$-smooth for each type $\type \in T$.
   Then, there exists a bid $\vec{B}_i'$ for every $i \in N$ satisfying $(\vec{B}_{i}', \vec{B}_{-i}) \in \mathcal{R}_i$. Further, for each $\type \in T$, it holds 
\begin{equation}\label{eq:lambda-mu-per-groupMT}
\sum_{i \in N_{\type}(I)} \pe[g_i(\vec{B}_{i}', \vec{B}_{-i})] 
\ge 
\sum_{j \in M: \rw(j) \in N_{\type}(I)} \lambda_{\type} v^*_{\rw(j)j} - \mu_{\type} \pe[p_{\aw(j)j}(\vec{B})].
\end{equation}
\end{restatable}

We can now present the proof of our Extension Theorem.

\begin{proof}[Proof of Theorem~\ref{theorem:template}]
Consider an instance $I \in \mathcal{I}_{\xos}^T$ and let $\vec{B}$ be a well-supported CCE of $I$. 
Let $\vec{\delta} \in \mathcal{C}(\vec{\mu}, T^+)$ be an arbitrary calibration vector; note that such a vector exists by Lemma~\ref{lemma:optimal-xi} (given below). 
Let $\vec{x}^*(I)$ be an optimal allocation and $(v^*_{ij})$ the respective opt-induced additive representatives. 
Let $\hat{I}(I, \vec{B})$ be the budget-free proxy instance of $(I, \vec{B})$ as defined in Definition~\ref{def:proxy-instance}.
For ease of notation, we write $\hat{N}_t := N_t(\hat{I}(I, \vec{B}))$ for every $t \in T^+$.

The instance $\hat{I}$ is budget-free by construction. Further, since $\hat{I}$ is a proxy of $I \in \mathcal{I}_{\xos}$, it follows from Proposition \ref{proposition:capping-xos} that $\hat{I} \in \mathcal{I}_{\xos}^{\infty}$. 
Thus, we can apply Lemma~\ref{lem:comp} to $\hat{I}\in \mathcal{I}_{\xos}^{\infty}$ and $\vec{B}$, and obtain (from its first statement) that there exists a bid $\vec{B}_i'$ for each agent $i \in N$ such that $(\vec{B}_i', \vec{B}_{-i})$ satisfies the ROI constraint~\eqref{eq:ROI} for $\hat{I}$. Using this and the definition of budget-capped valuations, we obtain for each agent $i$:
\[
    \EX[p_i(\vec{B}_{i}', \vec{B}_{-i})] \leq \EX\left[v_i^{\budget_i}(x_i(\vec{B}_{i}', \vec{B}_{-i}))\right] = \EX\left[\min\left(v_i(x_i(\vec{B}_i', \vec{B}_{-i})), \budget_i\right)\right].
\]
In particular, this shows that each $(\vec{B}_i', \vec{B}_{-i})$ satisfies the ROI constraint~\eqref{eq:ROI} and the budget constraint~\eqref{eq:budget} for instance $I$, i.e., $(\vec{B}_i', \vec{B}_{-i}) \in \mathcal{R}_i$. 

By exploiting~\eqref{eq:lambda-mu-per-groupMT} for instance $\hat{I}$, we obtain that for every $t \in T^+$ we have
\begin{equation}
\sum_{i \in \hat{N}_{\type}} \pe\left[\hat{g}_i(\vec{B}_{i}', \vec{B}_{-i})\right] 
\ge 
\sum_{j \in M: \rw(j) \in \hat{N}_{\type}} \lambda_{\type} v^*_{\rw(j)j} - \mu_{\type} \pe\left[p_{\aw(j)j}(\vec{B})\right].
\end{equation}
Note that we exploit here that $\vec{x}^*(I)$ is also an optimal solution for $\hat{I} \in \mathcal{I}_{\xos}^{\infty}$ by Proposition~\ref{prop:opt-budgetless-equivalence}. Thus, the opt-induced additive representatives with respect to $\vec{x}^*(I)$ of $I$ and $\hat{I}$ are the same.

We therefore have:
\begin{align*}
    \sw(I, \vec{B}) 
    & \geq \sum_{\type \in T} \sum_{i \in \hat{N}_{\type}} \delta_{\type} \pe\left[\hat{g}_i(\vec{B}_i', \vec{B}_{-i})\right] + \left(1 - \delta_{\type} + \delta_{\type} \hat{\sens}_{i}(\vec{B})\right) \pe\left[p_i(\vec{B})\right] \\
    &= \sum_{\type \in T^+} \sum_{i \in \hat{N}_{\type}} \delta_{\type} \pe\left[\hat{g}_i(\vec{B}_i', \vec{B}_{-i})\right] + \left(1 - \delta_{\type} + \delta_{\type} t\right) \pe\left[p_i(\vec{B})\right] \\
    &\geq \sum_{\type \in T^+} \sum_{i \in \hat{N}_{\type}} \delta_{\type} \pe\left[\hat{g}_i(\vec{B}_i', \vec{B}_{-i})\right] + \left(1 - \max_{\type \in T^+} \left(\delta_{\type}(1 - t)\right)\right) \pe\left[p_i(\vec{B})\right] \\
    &= \sum_{\type \in T^+} \sum_{i \in \hat{N}_{\type}} \delta_{\type} \pe\left[\hat{g}_i(\vec{B}_i', \vec{B}_{-i})\right] + \left(1 - \max_{\type \in T^+} \left(\delta_{\type}(1 - t)\right)\right) \pe\left[\sum_{i \in N} p_i(\vec{B})\right]. \numberthis \label{eq:decomposition}
\end{align*}

The first inequality follows from applying Lemma~\ref{lemma:individual-lw-lb} to each agent $i \in N$, using the deviation $\vec{B}_i'$ given by Lemma~\ref{lem:comp}. Then, the first equality holds by the definition of $T^+$.

We now lower bound the first term in~\eqref{eq:decomposition}. Using Lemma~\ref{lem:comp}, we obtain:
\begin{align*}
\sum_{\type \in T} \sum_{i \in \hat{N}_{\type}} \delta_{\type} \pe\left[\hat{g}_i(\vec{B}_i', \vec{B}_{-i})\right]
&\ge \sum_{\type \in T^+} \sum_{{j \in M: \rw(j) \in \hat{N}_{\type}}} \delta_{\type} \lambda_{\type} v^*_{\rw(j)j} - \delta_{\type} \mu_{\type} \pe\left[p_{\aw(j)j}(\vec{B})\right] \\
&\ge \sum_{\type \in T^+} \sum_{{j \in M : \rw(j) \in \hat{N}_{\type}}} \min_{\type \in T^+} \left(\delta_{\type} \lambda_{\type}\right) v^*_{\rw(j)j} - \max_{\type \in T^+} \left(\delta_{\type} \mu_{\type}\right) \pe\left[p_{\aw(j)j}(\vec{B})\right] \\
&= \min_{\type \in T^+} \left(\delta_{\type} \lambda_{\type}\right) \opt(I) - \max_{\type \in T^+} \left(\delta_{\type} \mu_{\type}\right) \pe\left[\sum_{i \in N} p_i(\vec{B})\right], \numberthis \label{eq:smoothness-total}
\end{align*}
where the last equality follows from property \textbf{XOS1}.

Substituting~\eqref{eq:smoothness-total} into~\eqref{eq:decomposition}, we obtain:
\begin{align*}
\sw(I, \vec{B}) 
&\ge \min_{\type \in T^+} \left(\delta_{\type} \lambda_{\type}\right) \opt(I) 
+ \left(1 - \max_{\type \in T^+} \left(\delta_{\type} \mu_{\type}\right) - \max_{\type \in T^+} \left(\delta_{\type}(1 - t)\right)\right) \pe\left[\sum_{i \in N} p_i(\vec{B})\right] 
\ge \min_{\type \in T^+} \left(\delta_{\type} \lambda_{\type}\right) \opt(I),
\end{align*}
where the second inequality holds because $\vec{\delta} \in \mathcal{C}(\vec{\mu}, T^+)$ (see Definition~\ref{def:Xi}). By 
%rearranging terms and 
selecting a calibration vector $\vec{\delta} \in \mathcal{C}(\vec{\mu}, T^+)$ that maximizes $\min_{\type \in T^+} \left(\delta_{\type} \lambda_{\type}\right)$, we finally obtain
\begin{equation}\label{eq:final-ratio}
\sw(I, \vec{B}) \geq \max_{\vec{\delta} \in \mathcal{C}(\vec{\mu}, T^+)}\min_{\type \in T^+} \delta_{\type}\lambda_{\type} \opt(I).
\end{equation}

Since \eqref{eq:final-ratio} holds for every instance $I \in \mathcal{I}_{\xos}$ and every well-supported $\vec{B} \in \cce(I)$, the proof follows.
\end{proof}

Note that, in the proof above, the whole purpose of our calibration vector was to eventually lower bound the total payments in the final expression by $0$. This also explains the specific definition of the feasibility constraint of $\mathcal{C}(\vec{\mu}, T)$ in \eqref{equation:Xi-def}. 

We close this subsection with the following lemma stating the existence of calibration vectors. 

\begin{restatable}{lemma}{lemcalibrationcharacterization}\label{lemma:optimal-xi}
\apprefproof{proof:lemma:optimal-xi}
Let $T$ be a set of types, $\vec{\mu} = (\mu_{\type})_{\type \in T} \in \mathbb{R}_{>0}^{|T|}$, and $\vec{\lambda} = (\lambda_{\type})_{\type \in T} \in \mathbb{R}_{>0}^{|T|}$.
Then, 
\begin{equation}\label{eq:optimal-xi-lambda-mu}
    \max_{\vec{\delta} \in \mathcal{C}(\vec{\mu}, T)}\min_{\type \in T}\lambda_{\type}\delta_{\type}=O,
    \qquad\text{where}\qquad
    O = \min\left\{\min_{\type \in T} \lambda_{\type}, \left(\max_{\type \in T}\left(\frac{\mu_{\type}}{\lambda_{\type}}\right) 
    + \max_{\type \in T}\left(\frac{1-t}{\lambda_{\type}}\right) \right)^{-1}\right\}.
\end{equation}
\end{restatable}

\subsection{POA-Revealing Mathematical Program} \label{sec:POA-reveal}

We can now present our \emph{POA-revealing mathematical program (\poarmp)}, which facilitates bounding the price of anarchy as stated in our Extension Theorem (Theorem~\ref{theorem:template}). The program is parameterized by the set $T$ of available agent types. Recall that $\eta = \min_{j \in M} \eta_j \in [0,1)$.

\iffalse
\begin{empheq}[box=\fbox]{equation*}
\qquad
\poarmp(T) \ = \ \max
\ 
\min \left\{ 
        \min_{\type \in T} \lambda_{\type}, 
        \left( 
            \max_{\substack{\type \in T}} \left (\frac{\mu_{\type} }{\lambda_{\type}} \right)
            + \max_{\type \in T} \left (\frac{1 - \sens_{\type} }{\lambda_{\type}} 
        \right) \right )^{-1} 
    \right\}  \ \ \text{subject to}\ \ 
    \text{\eqref{eq:key-lemma-0-constraint}\,--\,\eqref{eq:key-lemma-2-constraint}}. \qquad
\end{empheq}
\fi

\setlength{\fboxsep}{5pt}
\bigskip
\noindent
\!\!
\fbox{\parbox{.98\textwidth}{
\begin{alignat}{3}
\poarmp(T) \ = \ \max\ & \min \textstyle
        \bigg\{ 
            \min_{\type \in T} 
            \lambda_{\type}, 
            \Big( 
                \max_{\substack{\type \in T}} 
                && \textstyle
                \Big( \frac{\mu_{\type} }{\lambda_{\type}} \Big)    
             \, + \,
            \max_{\type \in T} 
                \Big (\frac{1 - \type }{\lambda_{\type}} \Big) 
            \Big)^{-1}         
        \bigg\} 
        && \notag
    \\
    \text{s.t.} \ \ \,
    & \textstyle\quad
    \lambda_{\type} = \mu_{\type}\left(1 - \frac{1 - {\eta}}{e^{1/{\mu_{\type}}}}\right)
    \quad 
    && \textstyle\ \ 
    \mu_{\type} > 0
    && \textstyle \ 
    \forall \type \in T \cap \set{1}% : $\sens_{\type} = 1$}    
    \label{eq:key-lemma-0-constraint}   
    \\[-.5ex]    
    & \textstyle\quad
    \lambda_{\type} = \frac{\mu_{\type}}{\type}\left(1 - \frac{1 - {{\type}\eta}}{e^{{\type}/{\mu_{\type}}}}\right)
    \quad 
    && \textstyle\ \ 
    \mu_{\type} \ge 
    {\type}
    \left(    
    \ln\left(\frac{1-{{\type}\eta}}{1- {\type}} \right)\right)^{-1}   
    \quad
    && \textstyle \ 
    \forall \type \in T \cap (0,1)%$: $\sens_{\type} \in (0,1)$} 
    \label{eq:key-lemma-1-constraint}
    \\[.5ex]
    & \textstyle\quad
    \lambda_{\type} = \mu_{\type}
    \quad 
    && \textstyle\ \ 
    \mu_{\type} \in \left (0, {\frac{1}{1-\eta}} \right]
    \quad
    && \textstyle \ 
    \forall \type \in T \cap \set{0} %: $\sens_{\type} = 0$} 
    \qquad\quad
    \label{eq:key-lemma-2-constraint} 
\end{alignat}
}}

\bigskip

A crucial building block in deriving our mathematical program (\poarmp) is the characterization of optimal calibration vectors. 
By exploiting this characterization, the task of finding the best upper bound on the POA reduces to solving the above program. We summarize this result in the following theorem.

\begin{restatable}{theorem}{extensiontemplaterefined}
\label{theorem:templateII}
\apprefproof{app:templateII}
Let $\mathcal{I}_{\xos}^T$ be the class of instances with fractionally subadditive valuations and type set $T$. Assume that $\fpa(r)$ is $(\lambda_{\type}, \mu_{\type})$-smooth for each type $\type \in T^+$. Then, the price of anarchy of well-supported coarse correlated equilibria is upper bounded by $\poarmp(T^+)^{-1}$.
\end{restatable}

We use (\poarmp) in the next section to derive (tight) bounds on the POA of CCE.

\section{Liquid Welfare Guarantees Without Reserve Prices}
\label{sec:POA-without-reserve-prices}

In this section, we focus on simultaneous first price auctions without reserve prices, i.e., we assume that $\eta = 0$ in \eqref{eq:key-lemma-0-constraint}--\eqref{eq:key-lemma-2-constraint}. In Section~\ref{subsec:solving-poarmp}, we develop lower bounds on the optimal value of $\poarmp(T)$, which lead to POA upper bounds for various sets of types. Then, in Section~\ref{subsec:poa-bounds-no-reserve}, we present these liquid welfare guarantees upper bounds, and in multiple cases, complement our positive results with matching lower bounds.

\subsection{Bounding \poarmp$(T)$ by Partitioning Agent Types}\label{subsec:solving-poarmp}

In this section, we characterize a feasible solution to $\poarmp(T)$ for a given set of types $T$. 
The main technical challenge is to identify an analytical solution that yields strong POA upper bounds. 
Below, we describe a policy for defining a solution vector $\vec{\mu}$ for $\poarmp(T)$. 

Given a set of types $T$, the main idea is to partition them into two classes $H_{\omega}$ (high) and $L_{\omega}$ (low), where $\omega$ is a separation parameter.
We then define $\mu_{\type}$ depending on the class each $\type \in T$ belongs to. 
Intuitively, $H_{\omega}$ contains agent types that are structurally close to utility maximizers, while $L_{\omega}$ contains agent types that are structurally close to value maximizers.

\begin{definition}\label{def:mu:star}
Given $\omega \in (0,1)$ and a set of types $T$, define $H_{\omega}(T)= \left\{\type \in T \mid {\type} \geq \omega \right\}$ and $L_{\omega}(T) = \left\{\type \in T \mid {\type} < \omega \right\}$.
Define $\vec{\mu}^*(\omega, T) \in \mathbb{R}_{> 0}^{|T|}$ such that, for each $t \in T$,
\begin{align}
    \mu^*_{\type}\left(\omega, T\right) &=
    \begin{cases}
        \frac{{\type}}{-\ln\left(1-\omega\right)}, & \text{if } \type \in H_{\omega}(T), \\%[8pt]
        \frac{{\type}}{-\ln(1-{\type})}, & \text{if } \type \in L_{\omega}(T)\cap(0,1), \\%[8pt]
        1, & \text{if $\type \in L_{\omega}(T) \cap \{0\}$.}
    \end{cases}
   \label{eq:mu-star-params}
\end{align}
\end{definition}

The following corollary is easy to verify. We refer to Appendix \ref{app:identities} for more properties of $\mu^*(\omega, T)$.
\begin{corollary}\label{cor:mu:feasible}
Given $\omega \in (0,1)$ and a set of types $T$, $\vec{\mu}^*\left(\omega, T\right)$ is a feasible solution of $\poarmp(T)$.    
\end{corollary}

\begin{proof}
Clearly, $\vec{\mu}^*\left(\omega, T\right)$ satisfies \eqref{eq:key-lemma-0-constraint} for all utility maximizers $\type \in H_{\omega}(T) \cap \{1\}$. 
Also,  $\vec{\mu}^*\left(\omega, T\right)$ satisfies \eqref{eq:key-lemma-1-constraint} for each $\type \in H_{\omega}(T)\cap (0,1)$ because ${\type} \geq \omega$ and because the function $f(z)=-\ln(1-z)$ is non-negative and non-decreasing on $(0,1)$.
Further, it satisfies \eqref{eq:key-lemma-1-constraint} with equality for all types $\type \in L_{\omega} \cap (0,1)$.
Finally, \eqref{eq:key-lemma-2-constraint} holds for all value maximizing types $\type \in L_{\omega}\cap \{0\}$. 
\end{proof}

Lemma \ref{lemma:general-omega-lb} will be useful when proving some of the price anarchy bounds that follow. The proof is given in Appendix~\ref{app:lowerbound-poarmp-general}. 

\begin{restatable}{lemma}{generalomegalb}\label{lemma:general-omega-lb}\apprefproof{app:lowerbound-poarmp-general}
Let $T$ be a set of agent types. 
If $\max(T)>0$, then for every $\omega \in (0 ,\max(T)] \cap (0,1)$, 
    \[
       \poarmp(T) \geq \min \left\{\frac{\omega}{-\ln(1-\omega)}, \frac{\omega}{\omega + \max(T)} \right\}.
    \]  
    
\end{restatable}

\subsection{Price of Anarchy Bounds}\label{subsec:poa-bounds-no-reserve}

\subsubsection{Budget-Free Instances and Common Type}  We first investigate the price of anarchy for budget-free instances with XOS valuations, assuming agents have a single type  $t \in [0,1]$ i.e., the class $\mathcal{I}_{\xos}^{\{t\}, \infty}$. In Theorem~\ref{theorem:single-type:budgetfree}, we establish a liquid welfare guarantee for $\mathcal{I}_{\xos}^{\{t\}, \infty}$, which interpolates smoothly from $\nicefrac{e}{e-1} \approx 1.58$ when $t = 1$ (all agents are utility maximizers) to $2$ when $t=0$ (all agents are value maximizers). This interpolation is illustrated in Figure~\ref{fig:overview}(a).
\begin{restatable}{theorem}{singletypebudgetfree}
    \label{theorem:single-type:budgetfree}
Let $\mathcal{I}_{\xos}^{\{t\}, \infty}$ be the class of budget-free instances with fractionally subadditive valuations and a single type $t \in [0,1]$. Then,
\[%\begin{equation*}
    \cce\textit{-}\poa\left(\mathcal{I}_{\xos}^{\{t\}, \infty}\right) \leq
   \begin{cases}\frac{e}{e-1}, & \text{if }\type \in \left[1-\nicefrac{1}{e},1\right], \\
    1-\frac{(1-\type)\ln(1-\type)}{\type}, & \text{if }\type \in \left(0, 1-\nicefrac{1}{e}\right),\\
    2, & \text{if }\type=0.\end{cases}
\]
\end{restatable}

\iffalse

\begin{figure}
  \centering
  \begin{tikzpicture}
  [ declare function={
    func(\x)= and(\x>= 0.632120558829,\x<=1) * (1.58197670687)   +
    and(\x>0, \x<0.632120558829) * (1 + ln(1-\x)- ln(1-\x)/\x)+
    (\x==0) * (2);}
    ]
    \begin{axis}[
      scale only axis,
      width=0.5\textwidth,
      height=0.2\textwidth,
      xmin=0, xmax=1,
      ymin=1.4, ymax=2.1,
      axis x line=bottom,
      axis y line=left,
      axis line style={line width=0.8pt},
      tick style={semithick},
      ticklabel style={font=\small},
      xlabel={\small$t$},
      xlabel style={yshift=-2pt},
      ylabel={},
      ylabel style={xshift=1pt},
      xtick={0,0.632120558829,1},
      xticklabels={$0$,$1-\frac1e$,$1$},
      ytick={1.58197670687,2},
      yticklabels={$\frac{e}{e-1}$,$2$},
      clip=false,
    ]
      %--- flat segment ---
     \addplot[red, domain=0:1, samples = 100, very thick]{func(x)};
      %--- formula node (top-left) ---
    \end{axis}
  \end{tikzpicture}
  \caption{Illustration of the upper bound on  $\cce\textsc{-}\poa(\mathcal{I}_{\xos}^{\{t\},\infty})$ for every $t \in[0,1]$ in Theorem \ref{theorem:single-type}. For $t \in [1-\nicefrac{1}{e}, 1]$, we obtain matching a lower bound for $\cce\textsc{-}\poa(\mathcal{I}_{\add}^{\{t\}, \infty})$ in Corollary \ref{corollary:high-common-type-no-reserve}. }
  \label{fig:poa-common-ub}
\end{figure}

\fi

\begin{proof}
    Set $\omega=1-\nicefrac{1}{e}$. Let $\mu:=\mu^*(\omega, \{t\})$ be as prescribed by Definition \ref{def:mu:star}. We distinguish three cases based on the value of $t$. 

    \medskip
    \noindent
    \textbf{Case 1:} $t=0$. In this case, $(\lambda, \mu)=(1,1)$ (by \eqref{eq:key-lemma-2-constraint}) and $ \poarmp(\{t\})\geq  \min \left(\lambda, \frac{\lambda}{\mu +1} \right)=\frac{1}{2}.$
    
    \smallskip
    \noindent
    \textbf{Case 2:} $t \in \left(0, 1-\nicefrac{1}{e}\right)$. Similarly, $(\lambda, \mu)=\left(t(-\ln(1-t))^{-1}, t(-\ln(1-t))^{-1} \right)$ (by \eqref{eq:key-lemma-1-constraint} and we obtain
    \[
        \poarmp(\{t\})\geq  \min \left(\lambda, \frac{\lambda}{\mu +1-{\type}}\right)=\frac{\lambda}{\mu+1-t}=\frac{t}{t-(1-t)\ln(1-t)}.
    \]
    Here, the first equality follows since $\mu =\frac{t}{-\ln(1-t)} \geq t$ holds for all $t \in \left(0, 1-\nicefrac{1}{e}\right)$.

    \noindent
    \textbf{Case 3:} $t \in \left[1-\nicefrac{1}{e}, 1\right]$. We have that $(\lambda, \mu)=\left(\omega(-\ln(1-\omega))^{-1}, t(-\ln(1-\omega))^{-1}\right)=\left(1-\nicefrac{1}{e}, t\right)$ (by \eqref{eq:key-lemma-1-constraint} for $t<1$ and \eqref{eq:key-lemma-0-constraint} for $t=1$). Furthermore,
    \[
        \poarmp(\{t\})\geq \min \left(\lambda, \frac{\lambda}{\mu +1 -t} \right)=\lambda=1-\frac{1}{e}.
    \]

    \smallskip
    \noindent
    Finally, by Theorem \ref{theorem:templateII}, we have that $\cce\textit{-}\poa(\mathcal{I}_{\xos}^{\{t\}, \infty}) \leq \left(\poarmp(\{t\})\right)^{-1}$. Combining this fact with the lower bounds on $\poarmp(\{t\})$ we obtained for each of the three cases above, the claim follows.
\end{proof}
Theorem~\ref{theorem:single-type:budgetfree} has a few important implications. Note that for $t= 1$, i.e., when agents are utility maximizers, the upper bound of Theorem~\ref{theorem:single-type:budgetfree} recovers the best possible price of anarchy bound of $\frac{e}{e-1}$ (due to ~\citet{ST13}). We show that the same bound holds as long as $\type \geq 1 - \nicefrac{1}{e}$. That is, somewhat surprisingly, the CCE-POA for this range of types does not get worse; in fact, this is true even for single-item first price auctions. 
The proof of Corollary \ref{corollary:high-common-type-no-reserve} is implied by the proof of Theorem~\ref{lemma:LB-BigTauSens-withReserve} (for $\eta=0$) in Section~\ref{sec:well-supported-eq}.
\begin{corollary}\label{corollary:high-common-type-no-reserve}
    Let $\mathcal{I}_{\add}^{\{t\}, \infty}$ be the class of budget-free instances with additive valuations and a single type $t \in [0,1]$. If $t \geq 1-\nicefrac{1}{e}$ then, $\cce\textit{-}\poa(\mathcal{I}_{\add}^{\{t\}, \infty}) \geq \frac{e}{e-1}$ and the bound holds even for single-item auctions.
\end{corollary}

Finally, for instances with value-maximizers only i.e., for $t=0$, Theorem \ref{theorem:single-type:budgetfree} extends the best possible upper bound of $2$ to coarse correlated equilibria and XOS valuation functions \citep{LMP23, deng2024}.

\subsubsection{Budget-Constrained Agents and Heterogeneous Types} In Theorem \ref{theorem:full-hybrid-no-reserve}, we make use of all the technical tools developed so far in this section to obtain a liquid welfare guarantee for coarse correlated equilibria and budget-constrained agents with XOS valuations for arbitrarily heterogeneous agent types. Recall that we denote by $W_0$ the principal branch of the Lambert $W$ function (see Appendix \ref{app:lambert}). Let $P:[0,1] \mapsto \mathbb{R}_{\geq 0}$ be defined as
\begin{equation}\label{eq:poa-main-ub}
    P(z)=
    \begin{cases}
        1 + \frac{z}{1+W_{0}\left(-e^{-z-1}\right)}, & \text{if } z > 1 + \frac{W_{0}(-2e^{-2})}{2}, \\
        2, & \text{otherwise.}
    \end{cases}.
\end{equation}

\begin{theorem} \label{theorem:full-hybrid-no-reserve}
Let $\mathcal{I}_{\xos}^T$ be the class of instances with fractionally subadditive valuations and type set $T\subseteq [0,1]$. Then, $  \text{\cce-\poa}\left(\mathcal{I}_{\xos}^T\right) \leq P(\max(T))$.
\end{theorem}

Theorem~\ref{theorem:full-hybrid-no-reserve} reveals an intriguing threshold phenomenon: the POA for a type set $T$ remains at most $2$ when $\max(T) < 0.797$, and increases from $2$ to $2.1885$ as $\max(T)$ approaches $1$ (see also Figure~\ref{fig:overview}(a)). Note that our liquid welfare guarantee unifies and generalizes two state-of-the-art POA bounds. Specifically, Theorem~\ref{theorem:full-hybrid-no-reserve} recovers the upper bound of $2.1885$ due to \citet{deng2024}, for budget-free instances with additive valuation functions and mixed Nash equilibria under the mixed-agent model (i.e., $T = \{0,1\}$). It also recovers the upper bound of $2$ due to \citet{LMZ24}, for budget-constrained value maximizers (i.e., $T=\{0\}$) with additive valuations. Theorem \ref{theorem:full-hybrid-no-reserve} generalizes both results to coarse correlated equilibria and budget-constrained agents with XOS valuations, while simultaneously extending the type set to the general model of \citet{aggarwalSurvey}.

We now prove Theorem~\ref{theorem:full-hybrid-no-reserve} using the following technical claim. Its proof can be found in Appendix~\ref{app:math-claims-theorem-no-reserve-sigma-max}.

\begin{restatable}{claim}{claimLWpropzero}\label{claim:LW:prop0}
\apprefproof{app:math-claims-theorem-no-reserve-sigma-max}
Let $f(z) = 1 + W_{0}\left(-e^{-z-1} \right)$. For every $z \in \left(1+ \nicefrac{W_{0}\left(-2e^{-2} \right)}{2}, 1\right]$, we have $f(z)<z$ and $f(z)+z=-\ln(1-f(z))$.
\end{restatable}

\begin{proof}[Proof of Theorem~\ref{theorem:full-hybrid-no-reserve}]
Observe that $\max(T^+)=\max(T \cup \{0\})=\max(T)$. We distinguish three cases, based on the value of $\max(T)$. 

\medskip
\noindent
\textbf{Case 1:} $\max(T) = 0$. In this case $T^+=T=\{0\}$. We have:
\begin{equation}\label{eq:2-bound-main-thm-all-valuemax}
    \poarmp(T^+)=\poarmp(\{0\})\geq \min \left\{ \lambda_{\type}, 
    \left( \frac{\mu_{\type}}{\lambda_{\type}} 
    + \frac{1}{\lambda_{\type}} 
    \right)^{-1} \right\} = \min\left\{1 , \left(1 + 1\right)^{-1}\right\}= \frac{1}{2}.
\end{equation}
Here, the second equality is due to \eqref{eq:key-lemma-2-constraint} and \eqref{eq:mu-star-params}.

\medskip
\noindent \textbf{Case 2:} $\max(T) \in \left(0, 1 + \nicefrac{W_{0}(-2e^{-2})}{2}\right]$. Set $\omega = \max(T)$. Then $\omega \in (0, \max(T)] \cap (0,1)$ (since $\omega=\max(T) < 1$), and we can invoke Lemma \ref{lemma:general-omega-lb} for $T^+$ with $\omega =\max(T)$.
Hence, we obtain that the value of the objective function of $\poarmp(T^+)$ is at least:
\begin{equation}\label{eq:2-bound-main-thm}
    \min \left\{\frac{\omega}{-\ln(1-\omega)}, \frac{\omega}{\omega + \max(T)} \right\} = \min\left\{\frac{\max(T)}{-\ln(1-\max(T))}, \frac{1}{2}\right\}= \frac{1}{2}.
\end{equation}
Here, the first and second equality follow  by choice of $\omega = \max(T)$ as $\frac{z}{-\ln(1-z)} \geq \frac{1}{2}$ for all $z \leq  1 + \nicefrac{W_{0}(-2e^{-2})}{2}$.

\medskip
\noindent \textbf{Case 3:} $\max(T)  \in  (1 + \nicefrac{W_{0}(-2e^{-2})}{2}, 1]$. 
Set $\omega =  1 + W_{0}\left(-e^{-\max(T) -1} \right)$. 
Using the first statement of Claim ~\ref{claim:LW:prop0}, we get that $\omega = f(\max(T)) < \max(T)$ whenever $\max(T) \in (1+ \nicefrac{W_{0}(-2e^{-2})}{2}, 1]$.     
Therefore, $\omega \in (0,\max(T)] \cap (0,1)$ as $\max(T) \le 1$ and, similarly to \textbf{Case 2}, we can invoke Lemma \ref{lemma:general-omega-lb} for $T^+$ with $\omega =  1 + W_{0}\left(-e^{-\max(T) -1} \right)$.
Hence, we obtain that the value of the objective function of $\poarmp(T^+)$ is at least:
\begin{align}
    \min \left\{\frac{\omega}{-\ln(1-\omega)}, \frac{\omega}{\omega + \max(T)} \right\} 
    = \frac{\omega}{\omega + \max(T)}=\frac{ 1 + W_{0}\left(-e^{-\max(T) -1} \right)}{ 1 + W_{0}\left(-e^{-\max(T) -1} \right) +\max(T)}.\label{eq:218-bound-main-thm}
\end{align}
Here, the first equality holds by the second statement of Claim~\ref{claim:LW:prop0} and the second equality follows by our choice of $\omega$. 
\medskip

\medskip
\noindent
We therefore conclude that
$   \cce\textsc{-}\poa\left(\mathcal{I}_{\xos}^T\right) \leq \left(\poarmp\left(T^+\right) \right)^{-1} \leq P(\max(T)).
$
The first inequality follows by Theorem \ref{theorem:templateII} and the second inequality by \eqref{eq:2-bound-main-thm-all-valuemax}, \eqref{eq:2-bound-main-thm} and \eqref{eq:218-bound-main-thm}.
\end{proof}
We now show that the liquid welfare guarantee of Theorem~\ref{theorem:full-hybrid-no-reserve} is best possible by providing two matching lower bounds for simple classes of instances. The lower bound of Theorem \ref{thm:LB-POA-budget-commontype} matches the upper bound of Theorem \ref{theorem:full-hybrid-no-reserve} for CCE, even when the agents have a single type $t \in [0,1]$ and additive valuation functions.
\begin{restatable}{theorem}{LBPOAbudgetcommontype}
    \label{thm:LB-POA-budget-commontype}
    \apprefproof{proof:lb-POA-budget-commontype}
Let $\mathcal{I}_{\add}^{\{t\}}$ be the class of instances with additive valuations and a single type $t \in [0,1]$. Then, $ \text{\cce-\poa}(\mathcal{I}_{\add}^{\{\type\}}) \geq P(t)$.
\end{restatable}
Theorem \ref{thm:LB-POA-budget-commontype} has one additional implication for the inefficiency of CCE of simultaneous first price auctions and fractionally subadditive valuations. By combining Theorem \ref{thm:LB-POA-budget-commontype} and Theorem \ref{theorem:single-type:budgetfree}, we obtain a \emph{separation} in terms of liquid welfare guarantees for environments with a single type $t>0$. Namely, we show that budget-constraints make the price of anarchy \emph{strictly worse} in this case (see Figure \ref{fig:overview}(a)).
\begin{corollary}
    For every type $t \in T$, let $\mathcal{I}_{\xos}^{\{t\}}$ be the class of instances with fractionally subadditive valuations and let $\mathcal{I}_{\xos}^{\{t\},\infty}$ be the  subclass of budget-free instances. For $t > 0$, it holds that $\cce\textsc{-}\poa(\mathcal{I}_{\xos}^{\{t\}}) > \cce\textsc{-}\poa(\mathcal{I}_{\xos}^{\{t\}, \infty})$.
\end{corollary}
We continue with our second negative result. Here, we show that the liquid welfare guarantee in Theorem \ref{theorem:full-hybrid-no-reserve} is also tight for the class of budget-free instances, additive valuations and budget-free instances, even for mixed Nash equilibria, as long as value maximizers are included in the type set $T$. This lower bound is a generalization of the lower bound presented by \cite{deng2024} for a mixed agent model with value maximizers and a different type.

\begin{restatable}{theorem}{LBPOAbudgetfreehybrid}
\label{thm:LB-POA-budgetfree-hybrid}
\apprefproof{proof:LB-POA-budgetfree-hybrid}
    Let $\mathcal{I}_{\add}^{\{0, t\}, \infty}$ be the class of budget-free instances with additive valuations and a set of agent types $\{0, t\}$ for $t \in (0,1]$. Then, $\mne{\textit{-}}\poa(\mathcal{I}_{\add}^{\{0, t\}, \infty}) \geq P(t)$. 
\end{restatable}

Note that Theorem \ref{thm:LB-POA-budgetfree-hybrid} together with Theorem \ref{theorem:full-hybrid-no-reserve} settle the price of anarchy of all equilibrium classes and all budget-constrained instances for agent type sets that include the value-maximizing type $t=0$.

\begin{corollary}\label{cor:poa-hybrid-budget-equ}
    For $\val \in \{\add,\sub,\xos\}$, let $\mathcal I_{\val}^T$ be the class of instances with valuations in $\mathcal V_{\val}$  and let $T$ be a set of types. If $0\in T$, then, for $\equ\in \{\mne,\ce,\cce\}$ we have
    $\equ\textit{-}\poa(\mathcal I^T_{\val}) =P(\max(T))$.
\end{corollary}
We conclude this section with Theorem~\ref{lem:universal-LB-budget}, which states a slightly weaker POA lower bound for MNE with budget-constrained agents and additive valuation functions. However, this bound holds for an arbitrary type set $T$ (not necessarily including $t = 0$).
\begin{restatable}{theorem}{lemuniversalLBbudget}
    \label{lem:universal-LB-budget}
    \apprefproof{app:LB-no-reserve-prices}
    For every type set $T$, it holds that $\mne\textit{-}\poa(\mathcal I^T_{\add})\geq 2$.
\end{restatable}
Closing the gap between $2$ and $P(\max(T))$ for MNE, for every type set $T$, is an intriguing open question. 

\subsubsection{Bounded Minimum Type} A qualitative interpretation of the upper bound in Theorem~\ref{theorem:full-hybrid-no-reserve} is that equilibria become more inefficient in the presence of agent types whose types resemble utility maximizers—i.e., as $t$ approaches $1$, the POA worsens. One contributing factor to this phenomenon is that, in the worst case, the competitors of these agents may, on the constrast, be structurally similar to value maximizers. 
A natural question is to study mixtures of types in which the minimum type is bounded away from $0$. In Theorem~\ref{theorem:not-so-hybrid}, we present a second threshold phenomenon revealed by our framework: whenever a type set $T$ satisfies $\min(T) \geq 0.74$, the liquid welfare guarantees for CCE of budget-free instances improve, no matter how heterogeneous the set of types is. In fact, for such type sets, $\cce\text{-}\poa(\mathcal{I}_{\xos}^{T, \infty}) \leq 1.83$! 
\begin{restatable}{theorem}{notsohybrid}\label{theorem:not-so-hybrid}
\apprefproof{app:not-so-hybrid}
    Let $\beta$ be the solution to $\beta = 1 - e^{-\frac{1}{\beta}}$, i.e., $\beta \approx 0.741$. Let $\mathcal{I}_{\xos}^{T, \infty}$ be the class of budget-free instances with fractionally subadditive valuations and type set $T$. If $\min(T) \geq \beta$, then:
\begin{align*}
     \text{\cce-\poa}(\mathcal{I}_{\xos}^{T, \infty} ) &\leq \left(\min(T) \left(1 -e^{-\frac{1}{\min(T)}}\right) \right)^{-1} \in \left[\frac{e}{e-1}, \frac{1}{\beta^2}\right]
\end{align*}
\end{restatable}

\section{Improved Liquid Welfare Guarantees with Reserve Prices}\label{sec:reserve-prices}
Our type-dependent smoothness framework and the \poarmp\ developed in Section~\ref{sec:smoothness:extension} allow us to study the inefficiency of instances with fractionally subadditive valuations and their induced well-supported equilibria when the auctioneer implements reserve prices for each individual auction (see also Section~\ref{sec:preliminariesPOA}). 
Namely, as already hinted by our formulation of $\poarmp(T)$ in Section~\ref{sec:POA-reveal}, our liquid welfare guarantees for instances with reserve prices depend on the minimum relative gap $\eta \in [0,1)$.

We remark that, as observed by \citet{BDM21}, a reserve price $r_{j}$ can be interpreted as a \emph{prediction} (see also \citep{gkatzelis22, chen04, deng24individual}) of the value of the rightful winner of an auction $j \in M$. These predictions can then be used to set reserve prices accordingly, with the goal of improving liquid welfare guarantees. 
In this context, the parameter $\eta$ can be viewed as an error measure of the prediction: $\eta = 0$ indicates a completely uninformative prediction,\footnote{Uninformative for improving liquid welfare guarantees: almost all $\eta_j$'s may be close to $1$, but one auction $j$ with $\eta_j = 0$ can prevent improved guarantees (see also the proof of Theorem \ref{lemma:LB-POA-Reserve-value-maximizers}).} whereas as $\eta \rightarrow 1$, the reserve prices approach the actual valuations of the rightful winners in all auctions. 

Section~\ref{sec:reserve-prices} is structured as follows. 
In Section~\ref{sec:poa-bounds-reserve-prices}, we devise upper bounds on the POA that are functions of $\eta$ for well-supported equilibria of a given class. Then, in Section~\ref{sec:well-supported-eq}, we examine when such equilibria are guaranteed to exist. 
Finally, in Section \ref{sec:Learning}, we demonstrate that when budget-free agents repeatedly participate in a first-price auction with reserve price using regret-minimizing algorithms, they converge to CCE that are well-supported.

\subsection{POA Bounds as Functions of the Minimum Relative Gap}\label{sec:poa-bounds-reserve-prices}

In the presence of the $\eta$ parameter, deriving analytical POA bounds for a general type set $T$ becomes significantly more challenging. 
In order to obtain upper bounds on the POA as functions of $\eta$, we can no longer rely on our approach from Section~\ref{subsec:solving-poarmp}. 
Therefore, in this section, we focus on more tractable settings, such as the budget-free single-type environment and the mixed-agent model with budget-constrained agents.

\subsubsection{Budget-Free Agents with One Type}

We present our POA upper bound for the single type environment $\{t\}$ for every $t\in[0,1]$, fractionally subadditive valuations and well-supported CCE in Theorem \ref{theorem:single-type}. As $\eta \to 1$, the liquid welfare of all such CCE tends to optimality; see Figure \ref{fig:overview-with-eta} for an illustration.

\begin{restatable}{theorem}{theoremsingletype}
    \label{theorem:single-type}
    \apprefproof{proof:theorem:single-type}
Let $\mathcal{I}_{\xos}^{\{t\}, \infty}$ be the class of budget-free instances with fractionally subadditive valuations, a single type $t \in [0,1]$, and let $\eta \in [0,1)$ be the smallest relative gap of the reserve prices. 
Also, let:
\begin{equation}\label{eq:poa-expression-commono-reserves}
     P_{t}(\eta) = \begin{cases}
        \frac{e}{e-1 +  t \eta}, & \text{ if } t \in (1-\nicefrac{1}{e},1] \text{ and } \eta \in \left[0, \frac{1-e(1-t)}{t} \right), \\[2pt]        
        1 + \frac{1}{t} \left( \ln\left(\frac{1- t \eta}{1- t}\right)(1 -t) \right), 
        & \text{ if } t \in (0,1)   \text{ and } \eta \in \left(\max\left(0, \frac{1-e(1-t)}{t} \right), 1\right) ,\\[2pt]
        2-\eta, &\text{ if } t = 0.
    \end{cases}
\end{equation}
For well-supported coarse correlated equilibria, it holds that $\cce\textit{-}\poa(\mathcal{I}_{\xos}^{\{t\}, \infty}) \leq P_{t}(\eta)$. Furthermore, $P_{t}$ is non increasing in $[0,1)$ with $\lim_{z \to 1}P_{t}(z)=1$ for every $\type \in [0,1]$.
\end{restatable}
We now present two lower bounds for CCE in budget-free instances with reserve prices in single-type environments. In Theorem~\ref{lemma:LB-POA-Reserve-value-maximizers}, we show that the bound of Theorem~\ref{theorem:single-type} is tight for all $\eta \in [0,1)$ in auctions with value maximizers only, i.e., when $t = 0$. 
Then, in Theorem~\ref{lemma:LB-BigTauSens-withReserve}, we establish a negative result for a different restricted range of $(\eta, t)$, including the case of utility maximizers ($t = 1$) for which we prove that the bound of Theorem~\ref{theorem:single-type} is tight for all $\eta \in [0,1)$. Interestingly, the worst-case instances used in the proof of these theorems involve a two-item and single-item auction. 

\begin{restatable}{theorem}{LBPOAReservevaluemaximizers}
    \label{lemma:LB-POA-Reserve-value-maximizers}
    \apprefproof{proof:LB-POA-Reserve-value-maximizers}
Let $\mathcal{I}_{\add}^{t, \infty}$ be the class of budget-free instances with  additive valuations for type $t=0$ and let $\eta \in [0,1)$ be the smallest relative gap from reserve prices in $\mathcal{I}_{\add}^{\{0\}, \infty}$. Then, $\cce\textit{-}\poa(\mathcal{I}_{\add}^{\{0\}, \infty}) \geq 2- \eta$ for  well-supported coarse correlated equilibria.
\end{restatable}

\begin{restatable}{theorem}{LBBigTauSenswithReserve}
    \label{lemma:LB-BigTauSens-withReserve}
    \apprefproof{proof:LB-BigTauSens-withReserve}
Let $\mathcal{I}_{\add}^{\{t\}, \infty}$ be the class of budget-free instances with additive valuations for type $t \in [1-\nicefrac{1}{e},1]$ and let $\eta \in [0,1)$ be the smallest relative gap from reserve prices. Then, if $\eta \leq \frac{1-(e-1)t}{t}$,  $\cce\textit{-}\poa(\mathcal{I}_{\add}^{\{0\}, \infty}) \geq \frac{e}{e-1+\eta}$ for well-supported coarse correlated equilibria.
\end{restatable}

\smallskip
\begin{corollary}\label{corollary:tight-single-type-reserves}
Let $\eta \in [0,1)$ be the smallest relative gap from reserve prices.
Then, for $t \in \{0,1\}$ it holds that
$ \cce\textsc{-}\poa(\mathcal{I}_{\xos}^{\{t\}, \infty})=\cce\textsc{-}\poa(\mathcal{I}_{\add}^{\{t\}, \infty})=P_t(\eta)$. 
\end{corollary}

\subsubsection{Budget-Constrained Agents in the Mixed Agent Model} 
We now focus on liquid welfare guarantees for $T = \{0,1\}$, i.e., the mixed-agent model in which agents are either utility or value maximizers, under budget constraints and with reserve prices. This setting has previously been studied by \citet{deng2024} for budget-free instances, mixed Nash equilibria, and additive valuations. In Theorem~\ref{thm:hybrid-reserve-prices}, we establish a POA upper bound that depends on the parameter~$\eta$ for budget-constrained agents, CCE, and fractionally subadditive valuations. As the parameter~$\eta$ increases from $0$ to $1$, the resulting bound interpolates between our liquid welfare guarantee of $2.1885$ from Theorem~\ref{theorem:full-hybrid-no-reserve} and the optimal value of~$1$.

\begin{restatable}{theorem}{thmhybridreserveprices}
    \label{thm:hybrid-reserve-prices}
    \apprefproof{app:thm:hybrid-reserve-prices}
Let $\mathcal{I}_{\xos}^{\{0,1\}, \infty}$ be the class of instances with fractionally subadditive valuations for the set of types $\{0, 1\}$. Let $\eta \in [0,1)$ be the smallest relative gap from reserve prices. Then, for well-supported equilibria, $\cce\textit{-}\poa(\mathcal{I}_{\xos}^{\{0,1\}}) \leq Q(\eta)$, where
$$
    Q(\eta) = \left(1-\eta\right) \cdot \frac{2-\eta + W_{0}\left(-(1-\eta)^2e^{\eta-2}\right)}{1-\eta + W_{0}\left(-(1-\eta)^2e^{\eta-2}\right)}.
$$
Furthermore, $Q(\eta)$ is non-increasing in $[0,1)$ with $Q(0)\approx 2.1885$ and $\lim_{z \to 1}Q(z)=1$.
\end{restatable}

\iffalse
\begin{figure}
\centering
\begin{tikzpicture}
\begin{axis}[
    width=0.5\textwidth,
    height=0.2\textwidth,
    xmin=0, xmax=1.05,
    ymin=1, ymax=2.5,
    axis lines=left,
    xlabel=$\eta$,
    ylabel=$Q(\eta)$,
    grid=none,
    tick label style={font=\small},
    title style={yshift=10pt},
    xtick={0, 1},
    xticklabels={$0$,$1$},
    ytick={1, 2.185},
    yticklabels={$1$,$2.185$}
]
\addplot[very thick, myred, smooth] table {q_data.dat};
\end{axis}
\end{tikzpicture}

\caption{Illustration of the upper bound~\eqref{eq:reserves-mixed} on $\cce\textit{-}\poa( \mathcal{I}_{\xos}^{\{0,1\}})$ from Theorem~\ref{thm:hybrid-reserve-prices}. As $\eta \to 1$, the liquid welfare guarantee approaches the optimal guarantee. Conversely, when $\eta = 0$ (i.e., no reserve prices), we recover the guarantee of $2.1885$ implied by Theorem~\ref{theorem:full-hybrid-no-reserve}.
}\label{fig:mixed-agent-reserve-interpolation}
\end{figure}
\fi

\subsection{On the Existence of Well-Supported Equilibria} \label{sec:well-supported-eq}

Recall that for an instance $I$ with reserve prices $\vec{r}$, a bid profile $\B \in \Delta$ is well-supported if, for each $\b \in \supp(\B)$, it holds that $\vec{x}_j(\b) \neq \vec{0}$ for every item $j \in M$. 
Throughout our work, we have applied this refinement on equilibria when considering liquid welfare guarantees for instances with reserve prices, as, in the presence of reserve prices, it is a crucial precondition of our Extension Theorem (Theorem \ref{theorem:templateII}) in Section~\ref{sec:smoothness:extension}. In this section, we explore when such equilibria are guaranteed to exist.

\paragraph{Budget-Free Instances and Additive Valuations.} We begin with a positive result for the class $\mathcal{I}_{\add}^{T,\infty}$ given a type set $T$: namely, in Theorem~\ref{thm:CE-well-supported}, we show that all \emph{correlated} equilibria induced on instances of this class are well-supported.

\begin{restatable}{theorem}{thmCEwellsupported}
    \label{thm:CE-well-supported}
    \apprefproof{app:thm:CE-well-supported}
    Let $T$ be a set of agent types and let $I \in \mathcal{I}_{\add}^{T, \infty}$. Then, every $\vec{B} \in \ce(I)$ is well-supported. 
\end{restatable}

Intuitively, for agents with additive valuations, the rightful winner $i$ of an auction $j$ always has ``room'' for additional gain by bidding above the reserve price $r_j$ and competing for the item (unless of course some other agent has already submitted a sufficiently high bid above $r_j$). In Theorem~\ref{thm:CE-well-supported}, we confirm this intuition for correlated equilibria. Interestingly, we observe that it does not necessarily hold for coarse correlated equilibria; in Theorem~\ref{prop:CCEitemNotAlwaysSold}, we construct a simple single-item auction with two utility maximizers and a feasible reserve price, along with a CCE that is not well-supported.
\begin{restatable}{theorem}{propCCEitemNotAlwaysSold}
    \label{prop:CCEitemNotAlwaysSold}
    \apprefproof{proof:prop:CCEitemNotAlwaysSold}
Let $\mathcal{I}_{\add}$ be the class of budget-free instances with additive valuation functions and reserve prices. There exists an instance $I \in \mathcal{I}_{\add}$ and a $\vec{B} \in \cce(I)$ which is not well-supported for $I$.
\end{restatable}
The intuition behind why such CCE exist even in simple settings is that it can be more cost-effective for the highest-valued agent to remain \emph{coordinated} in their bidding (through the CCE probability distribution) which might mean that they sometimes bid below the reserve price with positive probability, rather than deviating unilaterally and always bidding at least the reserve price which can lead to a higher expected payment. Note that the instance in Theorem~\ref{prop:CCEitemNotAlwaysSold} also implies a lower bound on the POA of $\frac{e}{e-1}$ for CCE that are not necessarily well-supported. We therefore conclude that our inability to devise improved liquid welfare guarantees as parameters of $\eta \in [0,1)$ for such equilibria (similar to those of Theorem \ref{theorem:single-type} and Theorem \ref{thm:hybrid-reserve-prices}) is not an artifact of our analysis (e.g., in Theorem~\ref{theorem:template}), but rather an unavoidable structural property of CCE.

\begin{restatable}{corollary}{corollaryLBwithReservenotallitemssold}
\label{corollary:LB-withReserve-not-all-items-sold}
\apprefproof{proof:corollary:LB-withReserve-not-all-items-sold}
Let $\mathcal{I_{\add}}$ be the class of instances with additive valuations and let $\eta \in [0,1)$ be the smallest relative gap from reserve prices. Then, $\cce\textit{-}\poa(\mathcal{I}_{\add}) \geq \frac{e}{e-1}$.    
\end{restatable}

\paragraph{Beyond Additive Valuations.} We conclude the section with another negative result. We show that, for the class of budget-free instances with submodular valuations, even \emph{mixed Nash equilibria} are not guaranteed to be well-supported. Moreover, this fact paints a bleak picture for their liquid welfare guarantees: without refining the set of MNE to well-supported equilibria only, the POA for MNE for instances with reserve prices is unbounded.

\begin{restatable}{theorem}{propPNEsubmodnotwellsupported} \label{prop:PNE-submod-not-well-supported}
\apprefproof{proof:prop:PNE-submod-not-well-supported}
Let $\mathcal{I}_{\sub}$ be the class of budget-free instances with submodular valuation functions and reserve prices. There exists an instance $I \in \mathcal{I}_{\sub}$ and a $\vec{B} \in \mne(I)$ which is not well-supported for $I$. Furthermore, $\poa\textit{-}\mne(\mathcal{I}_{\sub})=\infty$.
\end{restatable}

\subsection{Mean-Based Algorithms Converge to Well-Supported CCE} \label{sec:Learning}

It is well known that regret-minimization dynamics in auctions lead to CCE (see, e.g., \citep{hannan57, blum07, young04}). In this section, motivated by the negative result in Theorem~\ref{prop:CCEitemNotAlwaysSold}, which shows that there exist CCE that are not well-supported even in single-item first-price auctions with agents with no budget constraints, we address the question of whether the CCE reached through such dynamics are well-supported.

We consider a model similar to the repeated auction setting studied by \citet{kolumbus22}. 
Specifically, we consider two budget-free agents\footnote{We focus on two agents for simplicity, though the result extends to more than two agents.} with arbitrary types who repeatedly participate in a first-price auction with a feasible reserve price $r$, where ties are broken uniformly at random. Each agent $i$ is assumed not to overbid, i.e., $b_i \le v_i$ (as we can still assume w.l.o.g.\ that $\tau = 1$). The agents' values and types remain fixed across all repetitions of the auction. We assume that values, bids, and the reserve price $r$ are all integer multiples of a minimum increment $\varepsilon > 0$. Each agent aims to maximize their cumulative gain $\z_i$ over time.

Agents use regret-minimization algorithms, where the regret of agent $i$ after $T$ rounds, given bids $(\b_1, \dots, \b_T)$, is defined as
\[
R_i^{T} = \sum_{t=1}^{T} \max_{b} \z_i(b, \b_{-i}^{t}) - \z_i(b_{i}^{t}, \b_{-i}^{t}),
\]
with $b$ denoting the optimal fixed bid in hindsight. We introduce the definition of a weakly domimated action below.

\begin{definition}
Let $A_1$ and $A_2$ be subsets of the action spaces of two agents in a two-agent game. An action $i \in A_1$ of agent 1 is \emph{weakly dominated in $A_2$} by another action $i' \in A_1$ if:
\begin{enumerate}
    \item $\forall j \in A_2: \z_1(i, j) \le \z_1(i', j)$, and
    \item $\exists j \in A_2: \z_1(i, j) < \z_1(i', j)$.
\end{enumerate}
\end{definition}

\citet{kolumbus22} introduce a specific subclass of CCE called \emph{co-undominated}. In such equilibria, no action in an agent’s support is weakly dominated relative to the other agent's support.

\begin{definition}[\citet{kolumbus22}]
Let $\B$ be a CCE of a (finite) two-agent game with action spaces $I_1$ and $I_2$.
Let its support be $(A_1, A_2)$, where $A_1 = \{ i \in I_1 \mid \exists j \in I_2 \text{ such that } B_{ij} > 0 \}$ and $A_2 = \{ j \in I_2 \mid \exists i \in I_1 \text{ such that } B_{ij} > 0 \}$.
The CCE is \emph{co-undominated} if, for every $i \in A_1$ and every $i' \in I_1$, action $i$ is not weakly dominated in $A_2$ by $i'$, and similarly for $A_2$.
\end{definition}

If regret-minimizing algorithms\footnote{In particular, \cite{kolumbus22} focus on a family of algorithms called \emph{mean-based}; see \citep{braverman2018} for a definition.} converge, \cite{kolumbus22} show that they converge to co-undominated CCE. This convergence result also holds in our setting of a first-price auction with a feasible reserve price and heterogeneous agent types. Moreover, co-undominated CCE possess the desired property of being well-supported, as we show in the next theorem.

\begin{theorem} \label{lem:itemSoldCoUnCCE}
Consider a first-price auction with feasible reserve price among two agents with arbitrary types and action spaces $I_1$ and $I_2$. Then, any co-undominated CCE is well-supported. 
\end{theorem}

\begin{proof}
Let $\B$ be a co-undominated CCE of a $\fpar{}{r}$ with a feasible reserve price among two agents with arbitrary types and action spaces $I_1$ and $I_2$. 
For contradiction, assume that $\B$ is not well-supported, i.e., the item is not sold with probability 1. 
Denote the support $(A_1, A_2)$ of $\B$ by $A_1 = \{ i \in I_1 \mid \exists j \in I_2 \text{ such that } B_{ij} > 0\}$ and $A_2 = \{j \in I_2 \mid \exists i \in I_1 \text{ such that } B_{ij} > 0\}$. 
Assume w.l.o.g.\ that $v_1 \ge v_2$.

Since the item is not sold with probability 1, there exist actions $i \in A_1$ and $j \in A_2$ such that $i < r$, $j < r$, and $B_{ij} > 0$. 
In this case, we observe that action $i$ of agent 1 is weakly dominated in $A_2$ by action $r$, since for all $j' \in A_2$ it holds that $0 = \z_1(i, j') \le \z_1(r, j')$, and $0 = \z_1(i, j) < \z_1(r, j) = v_1 - \sens_1 r$, as $r < v_1$ and $\sens_1 \le 1$. However, by assumption, $i \in A_1$ i.e., it is in the support of the CCE $\vec{B}$. This contradicts Definition \ref{lem:itemSoldCoUnCCE} and the claim follows.
\end{proof}

Theorem~\ref{lem:itemSoldCoUnCCE} indicates that, when autobidding agents converge to a CCE using regret-minimizing algorithms, they reach a well-supported CCE. Hence, for each such CCE, the liquid welfare guarantees we obtained in Section~\ref{sec:poa-bounds-reserve-prices} apply.

\section*{Acknowledgements}
S. Klumper was supported by the Dutch Research Council (NWO) through its Open Technology Program, proj.~no.~18938. A. Tsikiridis, who was at CWI during the time of this work,  was also partially supported by NWO through the Gravitation Project NETWORKS, grant no.~024.002.003 and by the European Union under the EU Horizon 2020 Research and Innovation Program, Marie Sk\l{}odowska-Curie Grant Agreement, grant no.~101034253.

\bibliographystyle{ACM-Reference-Format}
\bibliography{soda}

\section*{Appendix}
\appendix

\section{Equilibrium Notions} \label{sec:appequilibria}
\subsection{Hierarchy of Equilibrium Notions} 

To see that $\mne \subseteq \ce$, consider a \mne\ $\vec{B} = \prod_{i \in [n]} \vec{B}_i$. Note that $\B$ also satisfies the \ce\ requirements:
\[
\pe [\pe [\z_i( \swap(\vec{b}_i), \vec{B}_{-i}) \mid \vec{B}_{i} = \vec{b}_i] ] = \pe [\z_i(\vec{B}'_i, \vec{B}_{-i})] \le \pe [\z_i(\vec{B})], 
\]
where the equality holds as $\B_i$ and $\B_{-i}$ are independent random bid profiles and the swapping function $\swap$ can be written as a randomized bid vector $\B'_i$ in this case. The randomized deviations of $i$ that satisfy the ROI and budget constraints also coincide for both equilibrium notions by the same reasoning. 

To see that $\ce \subseteq \cce$, for contradiction, consider a \ce\ $\B$ that is not a \cce. 
Then, there is an agent $i \in N$ with randomized bid profile $\B'_i$ satisfying the ROI constraint as in \eqref{eq:ROI} and budget constraint as in \eqref{eq:budget} such that $\pe [\z_i(\vec{B})] < \pe [\z_i(\vec{B}'_i, \vec{B}_{-i})]$. However, this would contradict that $\B$ is a \ce, as a swapping function $\swap$ for agent $i$ that maps all bid profiles to the randomized bid vector $\B'_i$ is equivalent to playing $\B'_i$ independent of $\B_{-i}$.  
This contradicts that $\B$ is a \ce, as deviating to $\B'_i$ would strictly improve the gain of agent $i$ and $\B'_i$ satisfies the ROI and budget constraints as it coincides with the satisfied ROI and budget constraints for \cce\ in this case.

\subsection{Equilibrium Notions under ROI} 

As mentioned in Section \ref{sec:preliminariesPOA}, it is not true that a \pne\ w.r.t. deterministic deviations survives when randomization is allowed due to the ROI constraints. 
Therefore, given the following definition of a \pne\ for our setting, it is not true that $\pne \subseteq \mne$. We illustrate this with an example after introducing the definition of a \pne.

\begin{definition} \label{def:pureNash}
Let $\vec{b}$ be a deterministic bid profile satisfying the ROI and budget constraints of each agent, i.e., $\vec{b} \in \mathcal{R}_i$ for all $i \in N$. Then, $\vec{b}$ is a \emph{\pne} if for every agent $i \in N$ it holds that:
\[\z_i(\vec{b}) \ge \z_i(\vec{b}'_i, \vec{b}_{-i}) \quad \forall (\vec{b}'_i, \vec{b}_{-i}) \in {\mathcal{R}_i}.\]
\end{definition}

\begin{example} \label{ex:PNEdefinition}
Consider a budget-free \fpar{2}{} with items $i$ and $j$ among 2 agents with additive valuations. In case of ties, they are always broken in favor of agent 1.
Let $\vec{v}_1 = (v_{1i}, v_{1j}) = (0.5, 0.3)$ and let $\vec{v}_2 = (v_{2i}, v_{2j}) = (0.7, 0)$.
Further, let $\sens_1 = \sens_2 = 0$ and recall that we can assume w.l.o.g. that $\tau_1 = \tau_2 =1$. 
Consider the deterministic bid profile $\b = ((0, 0.2),(0.7, 0.2))$.

It is not hard to see that $\b$ is a \pne. Agent 2 always wins item $i$ and never wins item $j$, which it has 0 value for, and agent 2 satisfies the ROI constraint.
Agent 1 always wins item $j$ and never wins item $i$, and satisfies the ROI constraint. However, agent 1 does have a positive value for item $i$, but agent 1 must at least bid 0.7 to win item $i$, which would violate the ROI constraint, regardless of agent 1 winning item $j$ or not.

If we consider randomized deviations, we will see that $\b$ is not a \mne.
Namely, consider the randomized bid profile $\B'_1$ for agent 1 which draws both bid profiles $\b^{1}_1 = (0, 0.2)$ and $\b^{2}_1 = (0.7, 0.2)$ with probability $\frac{1}{2}$. 
This leads to an expected gain for agent 1 of $\frac{1}{2} \cdot 0.3 + \frac{1}{2} (0.5 + 0.3) = 0.55$, which is strictly better than 0.3, which is the gain of agent 1 under $\b$. 
Further, $\B'_1$ satisfies the ROI constraint as $\frac{1}{2} \cdot 0.2 + \frac{1}{2} (0.7 + 0.2) = 0.55$.
\end{example}

What Example \ref{ex:PNEdefinition} also shows is that, given some bid profile, it might be beneficial for an agent $i$ to sometimes pay more than $v_{ij}$ for an item $j$ if this still leads to a positive gain, and compensate this `overpaying' by `underpaying' for another item so that the ROI constraint is still satisfied. 
To also account for this behavior in the definition of a \ce, the ROI and budget constraints for a deviation is evaluated for the entire support, and not only for the conditioned bid profile. 
Further, a deviation may require multiple recommendations to be mapped to a different strategy, i.e., the swapping function is $\swap: \supp_{i}(\vec{B}) \rightarrow \Delta_i$. And finally, the gain of a deviation of an agent $i$ is evaluated for the entire support, not only for the conditioned bid profile. The following example illustrates why these three elements are relevant. 

\begin{example} \label{ex:CEdefinition} 
Consider a budget-free \fpar{2}{} with items $i$ and $j$ among 2 agents with additive valuations. In case of ties, they are broken in favor of agent 1. Further, let $\sens_1 = \sens_2 = 0$ (and $\tau_1 = \tau_2 =1$).

First, consider the valuation $\vec{v}$ with $\vec{v}_1 = (v_{1i}, v_{1j}) = (0.2, 0.5)$ and $\vec{v}_2 = (v_{2i}, v_{2j}) = (0.2,0)$.
Consider $\B$ for which bid profiles $\b^1 = ((0, 0.7),( 0.2, 0.7))$ and $\b^2 = ((0.2, 0.3), (0.2, 0.3))$ are both drawn with probability $\frac{1}{2}$. Note that both agents satisfy the ROI constraint under $\B$. If $\b^1$ is recommended, agent 1 only wins item $j$, but by deviating to 0.2 for item $i$, agent 1 could win both items and strictly improve their gain. However, this does not satisfy the ROI constraint when evaluated only for $\b^{1}$, but does satisfy the ROI constraint when evaluated for the entire support of $\B$. 

Secondly, consider valuation $\vec{v}$ with $\vec{v}_1 = (v_{1i}, v_{1j}) = (0.4, 0.2)$ and $\vec{v}_2 = (v_{2i}, v_{2j}) = (0.6, 0)$.
Consider $\B$ for which both bid profiles $\b^1 = ((0, 0.2),( 0.5, 0.2))$ and $\b^2 = ((0, 0.2), (0.7, 0.1))$ are both drawn with probability $\frac{1}{2}$. Note that both agents satisfy the ROI constraint for $\B$. 
If $\b^1$ is recommended, agent 1 only wins item $j$, but by deviating to 0.5 for item $i$, agent 1 could win both items and strictly improve their gain. However, this would not satisfy the ROI constraint. But if additionally, when $\b^{2}$ is recommended, agent 1 deviates to 0.1 for item $j$, agent 1 will still win item $j$ and the ROI will be satisfied.

Finally, consider valuation $\vec{v}$ with $\vec{v}_1 = (v_{1i}, v_{1j}) = (0.5, 0.1)$ and $\vec{v}_2 = (v_{2i}, v_{2j}) = (0, 0.2)$.
Consider $\B$ for which bid profiles $\b^1 = ((0.4, 0),( 0.4, 0.2))$ and $\b^2 = ((0.6, 0), (0.6, 0.2))$ are both drawn with probability $\frac{1}{2}$. Note that both agents satisfy the ROI constraint for $\B$. 
If $\b^1$ is recommended, agent 1 only wins item $i$, but by deviating to 0.2 for item $j$, agent 1 could win both items and strictly improve their gain. However, this would not satisfy the ROI constraint. But if additionally, when $\b^{2}$ is recommended, agent 1 deviates to 0 for item $i$, agent 1 will satisfy the ROI constraint. If in this case, the gain of agent 1 is only evaluated for the conditioned bid profile $\b^{1}$, the gain of agent 1 increases from 0.5 to 0.6. However, if the gain of agent 1 is evaluated for the entire support, the total gain of agent 1 decreases from 0.5 to 0.3.
\end{example}

\section{Missing Material of Section \ref{sec:preliminariesPOA}}
\label{app:prelim}

\subsection{Assumption that $\tau_i \sens_i \le 1$ for all agents $i \in N$}

It is not hard to see that we can assume w.l.o.g. that $\tau_i \sens_i \le 1$ for all agents $i \in N$. To see this, note that if an agent with sensitivity $\sens_i > 0$ has a target parameter $\tau_i > 1/\sens_i$, then $i$ prefers to withdraw whenever they are asked to pay more than $1/\sens_i$ their valuation. 
In other words, there are no equilibria in which the expected payment of agent $i$ exceeds $1/\sens_i$ times their expected value, as agent $i$ would have a negative expected gain and be better off by bidding 0 deterministically in each auction.
In essence, this has the same effect as capping the target parameter at $1/\sens_i$. Note that this argument applies only to agents with sensitivity $\sens_i > 0$. 
Note that the target parameter $\tau_i$ of each value maximizer $i$ (i.e., $\sens_i = 0$) remains unconstrained as $\tau_i \sens_i \le 1$ is always satisfied.

\subsection{Explanation of the Liquid Welfare Objective}

We use \emph{liquid welfare} as the social welfare objective, which is also the standard benchmark in the autobidding literature (see, e.g., \cite{survey}).
Intuitively, the liquid welfare measures the maximum amount of payments one can extract from the agents. To see that the formal definition of liquid welfare introduced in Section~\ref{sec:preliminariesPOA} aligns with this intuition, consider an instance $I$ and a random bid profile $\vec{B}$. 
Then, for an agent $i$ with $\sens_i > 0$, the liquid welfare is $\min(\EX[v_i(\vec{x}_i(\vec{B}))/\sens_i], \EX[ \tau_i v_i(\vec{x}_i(\vec{B}))], \budget_i)$, where the first term reflects that the expected gain $\z_i$ of agent $i$ always remains non-negative, the second term is due to the ROI constraint \eqref{eq:ROI} and the third term is due to the budget constraint \eqref{eq:budget}. 
Note that because $\sens_i \tau_i \le 1$ (as discussed above), the liquid welfare of $i$ reduces to $\min ( \EX[\tau_i v_i(\vec{x}_i(\vec{B}))], \budget_i)$. 
For an agent $i$ with $\sens_i = 0$, the liquid welfare is $\min (\EX[ \tau_i v_i(\vec{x}_i(\vec{B}))], {\budget}_i)$, due to the ROI and budget constraints only.
As a consequence, the liquid welfare of an agent evaluates to $\min(\EX[ \tau_i v_i(\vec{x}_i(\vec{B}))], \budget_i)$, independently of their type.

\subsection{Assumption that $\tau_i = 1$ for all agents $i \in N$}

All previous works studying the mixed agent model (i.e., $T= \{0, 1\}$) assume that the ROI constraint is imposed only on the value maximizers, while the utility maximizers remain unconstrained. As \citet{deng2024} argue, this is equivalent to assuming a uniform ROI target parameter of $\tau_i = 1$ for all agents $i \in N$. 
As we show in Proposition~\ref{prop:uniform-target}, this assumption extends without loss of generality to our hybrid agent model as well.

\begin{restatable}{proposition}{uniformtarget}\label{prop:uniform-target}
 Let $I=(\vec{r}, \vec v, \vec \sens,\vec \tau, \vec \budget)$ and define, for every $i\in N$, the valuation function $v_i'\colon 2^M \rightarrow \mathbb{R}_{\geq 0}$ by $v'_i(S)=\tau_iv_i(S)$ for all $S \subseteq M$. Furthermore, let $\vec{\sens}'=\vec{\sens}\vec{\tau}$, $\vec{\tau}'=\vec{1}$, and consider the instance $I'=(\vec{r}, \vec v',\vec \sens',\vec{\tau}'=\vec{1},\vec \budget)$. The following statements are true.
    
    \medskip
    \noindent
    {\qquad
    (i) $\vec{B} \in \mathcal{R}_i$ for $I$ $\Leftrightarrow$ $\vec{B} \in  \mathcal{R}_i$ for $I'$, \quad
    (ii) $\equ(I) = \equ(I')$, \quad
    (iii) $\displaystyle \sup_{\vec{B} \in \equ(I)}\frac{\opt(I)}{\sw(I,\vec{B})} = \sup_{\vec{B} \in \equ(I')}\frac{\opt(I')}{\sw(I',\vec{B})}$.
    }
\end{restatable}

\begin{proof}[Proof of Proposition~\ref{prop:uniform-target}]
        \smallskip
        \noindent
        \begin{compactenum}[(i)]
        \item ($\Rightarrow$) Every random bid profile $\vec{B} \in \mathcal{R}_i$ for instance $I$ satisfies both the ROI constraint \eqref{eq:ROI} and the budget constraint \eqref{eq:budget} for agent $i$. Since the budget profiles of $I$ and $I'$ are identical, \eqref{eq:budget} is clearly satisfied for $I'$ and $\vec{B}$. Observe that by the linearity of expectation,
        \begin{equation*}
            \EX[p_i(\vec{B})] \leq \tau_i\EX[v_i(x_i(\vec{B}))] = \EX[\tau_i v_i(x_i(\vec{B}))] = \tau_i' \EX[v_i'(\vec{B})],
        \end{equation*}
        which shows that \eqref{eq:ROI} also holds for $I'$ and $\vec{B}$. The argument for $(\Leftarrow)$ is analogous.

        \item Let $(g_i')_{i \in N}$ be the agents’ gain functions for instance $I'$, i.e., for every $\vec{b} \in D$ and every $i \in N$, we define $g_i'(\vec{b}) = v_i'(\vec{b}) - \sens_i' p_i(\vec{b})$. Observe that every agent $i \in N$ satisfies
        \begin{equation}\label{eq:agent-problem-equiv}
            \argmax_{\substack{\vec{B}_i \in D_i:\\ (\vec{B}_i, \vec{B}_{-i}) \in \mathcal{R}_i}} \EX[g_i(\vec{B}_i, \vec{B}_{-i})] = \argmax_{\substack{\vec{B}_i \in D_i:\\ (\vec{B}_i, \vec{B}_{-i}) \in \mathcal{R}_i}} \nicefrac{1}{\tau_i} \cdot \EX[g'_i(\vec{B}_i, \vec{B}_{-i})] = \argmax_{\substack{\vec{B}_i \in D_i:\\ (\vec{B}_i, \vec{B}_{-i}) \in \mathcal{R}_i}} \EX[g'_i(\vec{B}_i, \vec{B}_{-i})]
        \end{equation}
        for every random bid profile $\vec{B}_{-i} \in D_{-i}$, where the first equality follows from the linearity of expectation. By combining \eqref{eq:agent-problem-equiv} with property~(i), we conclude that the optimization problems faced by each agent $i$ in $I$ and $I'$ are identical. The claim follows as a corollary.

        \item Note that $I'$ satisfies $\tau_i'\sens_i' = \tau_i \sens_i \leq 1$ for every agent $i \in N$. Therefore,
        \begin{equation}\label{eq:opt-target-one}
            \opt(I) = \max_{\vec x \in \vec X} \sum_{i \in N} \min(\tau_i v_i(\vec x_i), \budget_i) = \max_{\vec x \in \vec X} \sum_{i \in N} \min(v_i'(\vec x_i), \budget_i) = \opt(I'),
        \end{equation}
        and similarly, for every random bid profile $\vec{B} \in D$,
        \begin{equation}\label{eq:lw-target-one}
            \sw(I, \vec{B}) = \sum_{i \in N} \min\left(\EX[\tau_i v_i(\vec{x}_i(\vec{B}))], \budget_i\right) = \sum_{i \in N} \min\left(\EX[v_i'(\vec{x}_i(\vec{B}))], \budget_i\right) = \sw(I', \vec{B}).
        \end{equation}
        We conclude that
        \begin{equation*}
            \sup_{\vec{B} \in \equ(I)}\frac{\opt(I)}{\sw(I, \vec{B})} = \sup_{\vec{B} \in \equ(I)}\frac{\opt(I')}{\sw(I', \vec{B})} = \sup_{\vec{B} \in \equ(I')}\frac{\opt(I')}{\sw(I', \vec{B})}.
        \end{equation*}
        The first equality follows from \eqref{eq:opt-target-one} and \eqref{eq:lw-target-one}, and the second equality from property (ii). The claim follows.
    \end{compactenum}
\end{proof}

\subsection{Unbounded Price of Anarchy for $\eta = 1$}

It is not hard to see that the price of anarchy is unbounded for $\eta = 1$, even for a single auction. To see this, consider a single auction with $r = v_{\rw}$. Suppose that agent $\rw$ is a utility maximizer and the only agent that can meet the reserve price. Then $\rw$ deterministically bidding 0 is a \mne\ with a liquid welfare of $0$, while the optimal liquid welfare is $v_{\rw} > 0$. The \poa\ is thus unbounded.

\subsection{Lambert $W$ Function}\label{app:lambert}

In order to derive POA bounds analytically, we use the Lambert $W$ function, which is the multivalued inverse of the function $f(z) = z e^z$. 
In this work, we exclusively use the \emph{principal branch} of the Lambert $W$ function and denote it by $W_0$. We present its definition and the expression for its derivative. For a more detailed treatment, we refer the interested reader to~\cite{corless96}.

\begin{definition}\label{def:lambert}
The principal branch of the Lambert $W$ function is the real function $W_{0} \colon \left[-\nicefrac{1}{e}, \infty\right) \rightarrow \left[-1, \infty\right)$ that satisfies $z = W_{0}(z) e^{W_{0}(z)}$, for every $z \in \left[-\nicefrac{1}{e}, \infty\right)$.
\end{definition}

\begin{fact}\label{fact:lambert-derivative}
For every $z > -\nicefrac{1}{e}$, the derivative of $W_{0}(z)$ is given by $\frac{W_{0}(z)}{z(1 + W_{0}(z))}$.
\end{fact}

\section{Missing Material of Section \ref{sec:smoothness:extension}}

\subsection{Proofs of Proposition~\ref{prop:opt-budgetless-equivalence} and Lemma~\ref{lemma:individual-lw-lb}}\label{app:sec:individual-lw-lb}

\propoptbudgetlessequivalence*
\begin{proof}[Proof of Propostion~\ref{prop:opt-budgetless-equivalence}]
    Using Definition \ref{def:proxy-instance} and the definition of budget-capped valuations, we obtain:
    \[
        \opt(\hat{I}(I, \vec{B}))= \max_{\vec{x} \in \vec{X}}\sum_{i \in N}v_i^{\budget_i}(\vec{x}_i)=\max_{\vec{x} \in \vec{X}}\sum_{i \in N}\min\left( v_i(\vec{x}_i), \budget_i\right)=\opt(I).\qedhere
    \]
\end{proof}

\lemmaindividuallwlb*
\begin{proof}[Proof of Lemma~\ref{lemma:individual-lw-lb}]
    Fix an agent $i \in N$. We distinguish two cases.

    \medskip
    \noindent
    \textbf{Case 1:} $\budget_i < \EX[v_i(x_i(\vec{B}))]$. We have:
    \begin{align*}
    \min\left(\EX[v_i(x_i(\vec{B}))], \budget_i \right) 
    &= \delta \budget_i + (1 - \delta) \budget_i \\
    &\geq \delta \EX\left[\min\left(v_i(x_i(\vec{B}_i', \vec{B}_{-i})), \budget_i \right)\right] + (1 - \delta)\budget_i \\
    &= \delta \EX\left[v_i^{\budget_i}(x_i(\vec{B}_i', \vec{B}_{-i}))\right] + (1 - \delta)\budget_i \\
    &\geq \delta \EX\left[v_i^{\budget_i}(x_i(\vec{B}_i', \vec{B}_{-i}))\right] + (1 - \delta)\EX\left[p_i(\vec{B})\right] \\
    &= \delta \EX\left[v_i^{\budget_i}(x_i(\vec{B}_i', \vec{B}_{-i})) - \hat{\sens}_i(\vec{B}) p_i(\vec{B}_i', \vec{B}_{-i})\right] + (1 - \delta + \hat{\sens}_i(\vec{B}) \delta)\EX\left[p_i(\vec{B})\right]\\
    &=\delta \EX\left[\hat{g}_i(\vec{B}_i', \vec{B}_{-i})\right] + (1 - \delta + \delta \hat{\sens}_{i}(\vec{B}))  \pe \left[p_i(\vec{B})\right].
    \end{align*}
    The first equality follows by the definition of \textbf{Case 1}, and the second equality follows by the definition of budget-capped valuations. The second inequality follows since, by assumption, $\vec{B}$ is a CCE for $I$ and therefore satisfies the budget constraint in \eqref{eq:budget} for instance $I$. Finally, the third equality follows since $\hat{\sens}_{i}(\vec{B}) = 0$ holds by Definition~\ref{def:proxy-instance}.

    \medskip
    \noindent
    \textbf{Case 2:} $\budget_i \geq \EX[v_i(x_i(\vec{B}))]$. In this case, we have:
    \begin{align*}
    \min\left(\EX[v_i(x_i(\vec{B}))], \budget_i \right) 
    &= \delta \EX[v_i(x_i(\vec{B}))] + (1 - \delta) \EX[v_i(x_i(\vec{B}))] \\
    &\geq \delta \EX[v_i(x_i(\vec{B}))] + (1 - \delta) \EX\left[p_i(\vec{B})\right] \\
    &= \delta \EX[g_i(\vec{B})] + (1 - \delta + \delta \sens_i)\EX\left[p_i(\vec{B})\right] \\
    &\geq \delta \EX[g_i(\vec{B}_i', \vec{B}_{-i})] + (1 - \delta + \delta \sens_i)\EX\left[p_i(\vec{B})\right] \\
    &= \delta \EX[v_i(x_i(\vec{B}_i', \vec{B}_{-i})) - \sens_i p_i(\vec{B}_{i}', \vec{B}_{-i})] + (1 - \delta + \delta \sens_i)\EX\left[p_i(\vec{B})\right] \\
    &\geq \delta \EX[\min\left(v_i(x_i(\vec{B}_i', \vec{B}_{-i})), \budget_i\right) - \sens_i p_i(\vec{B}_{i}', \vec{B}_{-i})] + (1 - \delta + \delta \sens_i)\EX\left[p_i(\vec{B})\right] \\
    &= \delta \EX\left[v_i^{\budget_i}(x_i(\vec{B}_i', \vec{B}_{-i})) - \sens_i p_i(\vec{B}_{i}', \vec{B}_{-i})\right] + (1 - \delta + \delta \sens_i)\EX\left[p_i(\vec{B})\right] \\
    &= \delta \EX\left[v_i^{\budget_i}(x_i(\vec{B}_i', \vec{B}_{-i})) - \hat{\sens}_i(\vec{B}) p_i(\vec{B}_{i}', \vec{B}_{-i})\right] + (1 - \delta + \delta \hat{\sens}_i(\vec{B}))\EX\left[p_i(\vec{B})\right]\\
    &=\delta \EX\left[\hat{g}_i(\vec{B}_i', \vec{B}_{-i})\right] + (1 - \delta + \delta \hat{\sens}_{i}(\vec{B}))  \pe \left[p_i(\vec{B})\right].
    \end{align*}

    Here, the first equality follows by the definition of \textbf{Case 2}, and the first inequality holds since, by assumption, $\vec{B}$ is a CCE of $I$ and therefore satisfies the ROI constraint in \eqref{eq:ROI}. Similarly, the second inequality follows from \eqref{eq:cce-def}, since $\vec{B}$ is a CCE of $I$, and it holds by assumption that $(\vec{B}_i', \vec{B}_{-i}) \in \mathcal{R}_i$. The fourth equality holds by the definition of budget-capped valuations, while the fifth follows since, by Definition~\ref{def:proxy-instance},  $\hat{\sens}_{i}(\vec{B}) = \sens_i$ holds.
\end{proof}

\subsection{Proof of Lemma~\ref{lem:comp}}
\label{app:sec:ll}
\begin{restatable}{corollary}{cordeviationindependent}\label{cor:deviation-independent}
       Consider a budget-free instance $I \in \mathcal{I}_{\xos}^{\infty}$. Fix an agent $i \in N$ and consider a bid profile $\B'_i \in \Delta_i$ that is ROI-restricted and let $\B_{-i} \in \Delta_{-i}$ be arbitrary.
        Then, $(\B'_i, \B_{-i}) \in \mathcal{R}_i$.
\end{restatable}
\begin{proof}[Proof of Corollary~\ref{cor:deviation-independent}]
We have:
\begin{align*}
\pe[p_{i}(\B'_i, \B_{-i})] 
& = \int_{D_{-i}} f_{\B_{-i}}(\b_{-i}) \cdot \pe[p_i(\B'_i, \b_{-i})] d \b_{-i} \\
& \le \int_{D_{-i}} f_{\vec{B}_{-i}}(\b_{-i}) \cdot \pe[ v_i(x_i(\B'_i, \b_{-i}))] d \b_{-i} 
= \pe
[v_i(\vec{x}_i(\B'_i, \vec{B}_{-i}))],
\end{align*}
where the inequality follows because $\vec{B}'_i$ is ROI-restricted. We have thus shown that \eqref{eq:ROI} is satisfied. Since $I$ is budget-free (and therefore \eqref{eq:budget} is trivially true), the proof follows.
\end{proof}

\lemcomp*
\begin{proof}[Proof of Lemma~\ref{lem:comp}]\label{proof:lem:comp}
Consider some agent $i \in N_{\type}(I)$.
If $i$ is not the rightful winner of any auction, we define $\B'_i$ such that $B'_{ij} = 0$ deterministically. Clearly, $(\vec{B}_{i}', \vec{B}_{-i}) \in \mathcal{R}_i$ holds. 
Otherwise, let $i$ is the rightful winner of auction $j$, i.e., $i = \rw(j)$. 
By assumption, $\fpa(r)$ is $(\lambda_{\type}, \mu_{\type})$-smooth for $\type$. Thus, for each auction $j \in M$, there exists an ROI-restricted bid $B'_{ij}$ such that, for each well-supported bid profile $\vec{b}_j$, we have:
\begin{equation}\label{eq:smooth-local-rw}
\pe[\z_{i}(B'_{ij}, (\vec{b}_{j})_{-i})] \geq \lambda_{\type}  v^*_{\rw(j)j} - \mu_{\type}  p_{\aw(j)j}(\vec{b}_j),
\end{equation}

We define the random deviation $\vec{B}'_i$ of agent $i$ for the global mechanism $\mech$ simply by drawing a bid $b'_{ij}$ for each auction $j \in M$ independently according to $B'_{ij}$ if $i = \rw(j)$, and letting $B'_{ij} = b'_{ij} = 0$ deterministically if $i \neq \rw(j)$. For each bid profile $\vec{b}_{-i}$, we have:
\begin{align*}
\pe[p_{i}(\vec{B}'_{i}, \vec{b}_{-i})] 
&= \pe\left[\sum_{j \in M} p_{ij}(B'_{ij}, (\vec{b}_j)_{-i})\right] 
\le \pe\left[\sum_{j \in M} v^*_{ij} x_{ij}(B'_{ij}, (\vec{b}_j)_{-i})\right] \\
&= \pe[v^*_{i}(x_{i}(\vec{B}'_{i}, \vec{b}_{-i}))] 
\le \pe[v_{i}(x_{i}(\vec{B}'_{i}, \vec{b}_{-i}))].
\end{align*}
Here the inequality holds because $B'_{ij}$ is ROI-restricted for each $j$ (which also holds trivially for all auctions $j$ with $i \neq \rw(j)$). The second equality follows by the definition of $v^*_i$, and the last inequality follows from property \textbf{XOS2}.
We conclude that $\vec{B}'_i$ is ROI-restricted for each agent $i$. Therefore, by Corollary~\ref{cor:deviation-independent}, $(\vec{B}_{i}', \vec{B}_{-i}) \in \mathcal{R}_i$, proving the first part of the lemma.

We continue with the second part. 
Fix a type $t \in T$. Given any bid profile $\vec{b} \in \supp(\vec{B})$, for every $i \in N_t(I)$ we have
\begin{align}
\pe[\z_{i}(\vec{B'}_{i}, \vec{b}_{-i})] 
&= \sum_{j \in M} \pe\left[ \z_{i}\left(B'_{ij}, (\vec{b}_{j})_{-i}\right)\right] \geq \sum_{j \in M: \rw(j) = i} \lambda_{\type} v^*_{\rw(j)j} - \mu_{\type} p_{\aw(j)j}(\vec{b}). \label{eq:lambda-mu-per-agent}
\end{align}
Here the equality follows by linearity of expectation. The inequality follows by applying~\eqref{eq:smooth-local-rw} to all auctions $j$ such that $i = \rw(j)$, and using that the expected gain of $i$ is non-negative for all $j$ with $i \neq \rw(j)$. Note that $\B$ is well-supported and thus $\vec{x}(\vec{b}) \neq 0$ for each $\vec{b} \in \supp(B)$.
Taking expectations over $\vec{B}$ on both sides of~\eqref{eq:lambda-mu-per-agent} and summing over all $i \in N_t(I)$ yields~\eqref{eq:lambda-mu-per-groupMT}. The claim follows.
\end{proof}

\subsection{Proof of Theorem~\ref{theorem:templateII}}
\label{app:templateII}

\extensiontemplaterefined*

A crucial building block in deriving our mathematical program (\poarmp) is the characterization of optimal calibration vectors.

\lemcalibrationcharacterization*

\begin{proof}[Proof of Lemma~\ref{lemma:optimal-xi}]\label{proof:lemma:optimal-xi}
Define $\vec{\delta}'$ such that $\delta'_{\type} = {O}/{\lambda_{\type}}$ for each $\type \in T$. We first show that $\vec{\delta}' \in \mathcal{C}(\vec{\mu}, T)$ and $\min_{\type \in T}\lambda_{\type}\delta'_{\type}=O$.

First, note that $\vec{\delta}' \in (0,1]^{|T|}$, since for each $\type \in T$ it holds that:
\[
    \delta'_{\type} 
    = \frac{O}{\lambda_{\type}} 
    \leq \frac{\min_{\type \in T}\lambda_{\type}}{\lambda_{\type}} \leq 1,
\]
and $\delta'_{\type} > 0$ as $O > 0$. 
Furthermore, $\vec{\delta}'$ satisfies:
\begin{align*}
    \max_{ \type \in T} \left(\delta'_{\type} \mu_{\type} \right) + \max_{\type \in T} \left(\delta'_{\type} \left(1 - t \right) \right) 
    &=\max_{ \type \in T} \left(\frac{O}{\lambda_{\type}} \mu_{\type} \right) 
    + \max_{\type \in T} \left(\frac{O}{\lambda_{\type}} \left(1 - t \right) \right) 
    = O \cdot \left(  \max_{ \type \in T} \left(\frac{\mu_{\type}}{\lambda_{\type}} \right) + \max_{\type \in T} \left(\frac{1 - t }{\lambda_{\type}}  \right) \right) \le 1,
\end{align*}
where the last inequality follows from \eqref{eq:optimal-xi-lambda-mu}. Hence, $\vec{\delta}' \in \mathcal{C}(\vec{\mu}, T)$, and therefore $\min_{\type \in T}\lambda_{\type}\delta'_{\type}=O$.

To complete the proof, we need to show that $\max_{\vec{\delta} \in \mathcal{C}(\vec{\mu},T)}\min_{\type \in T}\lambda_{\type}\delta_{\type}=O$. Towards a contradiction, assume that there exists $\bar{\vec{\delta}} \in \mathcal{C}(\vec{\mu}, T)$ with $\min_{\type \in T}\lambda_{\type}\bar{\delta}_{\type}>O$. We distinguish two cases for the value of $O$.

\medskip
\noindent \textbf{Case 1:} $\min_{\type \in T}\lambda_{\type} \leq \left(\max_{\type \in T}\left(\frac{\mu_{\type}}{\lambda_{\type}}\right) 
    + \max_{\type \in T}\left(\frac{1-t}{\lambda_{\type}}\right) \right)^{-1}$. In this case, we conclude that
\begin{equation}\label{eq:contra-1}
\min_{\type \in T}\lambda_{\type} = O < \min_{\type \in T}\lambda_{\type}\bar{\delta}_{\type} \leq \min_{\type \in T}\lambda_{\type}.
\end{equation}
The equality holds by the definition of \textbf{Case 1}, and the first inequality holds by assumption. The second inequality follows since $\bar{\vec{\delta}} \in \mathcal{C}(\vec{\mu}, T) \subseteq (0, 1]^{|T|}$, and therefore $\bar{\delta}_{\type} \le 1$ holds for all $\type \in T$. Thus, our analysis in \eqref{eq:contra-1} implies that \textbf{Case 1} cannot occur, and we move on to \textbf{Case 2}.

\noindent \textbf{Case 2:} $\min_{\type \in T}\lambda_{\type} > \left(\max_{\type \in T}\left(\frac{\mu_{\type}}{\lambda_{\type}}\right) 
    + \max_{\type \in T}\left(\frac{1-t}{\lambda_{\type}}\right) \right)^{-1}$. Let $\hat{\type}:=\argmax _{\type \in T} \mu_{\type} / \lambda_{\type}$ and $\tilde{\type}:=\argmax_{\type \in T} (1-t) / \lambda_{\type}$.  
    In this case,
    \begin{align*}
        1 < \left(\frac{\mu_{\hat{\type}}}{\lambda_{\hat{\type}}} 
    + \frac{1-{\tilde{\type}}}{\lambda_{\tilde{\type}}}\right) \cdot \min_{\type \in T}\lambda_{\type}\bar{\delta}_{\type}
    &= \frac{\mu_{\hat{\type}}}{\lambda_{\hat{\type}}}\cdot \min_{\type \in T}\lambda_{\type}\bar{\delta}_{\type} 
    + \frac{1-{\tilde{\type}}}{\lambda_{\tilde{\type}}} \cdot \min_{\type \in T}\lambda_{\type}\bar{\delta}_{\type} \\
    &\leq \mu_{\hat{\type}}\bar{\delta}_{\hat{\type}} + (1-{\tilde{\type}}) \bar{\delta}_{\tilde{\type}} 
    \leq \max_{\type \in T}\left( \mu_{\type}\bar{\delta}_{\type} \right) + \max_{\type \in T}\left( \left(1-t\right)\bar{\delta}_{\type} \right) 
    \leq 1.
    \end{align*}
Here, the first inequality holds by the definition of \textbf{Case 2}, and the last inequality holds since $\vec{\bar{\delta}} \in \mathcal{C}(\vec{\mu}, T)$. However, our analysis implies that \textbf{Case 2} also cannot occur.

As neither \textbf{Case 1} nor \textbf{Case 2} can occur, we have arrived at a contradiction. This concludes the proof.
\end{proof}

\begin{proof}[Proof of Theorem~\ref{theorem:templateII}]
We can now use Lemma \ref{lemma:optimal-xi} together with our Smoothness Lemmas (Lemma \ref{lem:fpa-smoothness-key-2} and Lemma \ref{lem:fpa-smoothness-key-1}) to derive the POA-revealing mathematical program (\poarmp) as defined in Section~\ref{sec:POA-reveal}. To obtain a bound on the POA, we determine a vector $\vec{\mu} = (\mu_{\type})_{\type \in T}$ that maximizes the expression in \eqref{eq:optimal-xi-lambda-mu} subject to the constraints \eqref{eq:key-lemma-0-constraint}--\eqref{eq:key-lemma-2-constraint}. This concludes the proof. 
\end{proof}

\section{Missing Material of Section \ref{sec:POA-without-reserve-prices}} \label{app:PoA-bounds-without-reserve}

\subsection{Properties of the Feasible Solution} \label{app:identities}

Given $\omega \in (0,1)$ and a set of types $T$, recall the feasible solution $\vec{\mu}^{*}$ in Definition \ref{def:mu:star}. (see also Figure \ref{fig:partition}).
\begin{figure}
  \centering
  \begin{tikzpicture}
    \begin{axis}[
        width=13cm,
        height=5cm,
        xlabel={},
        ylabel={$\lambda^*_{\type}(\omega, T)$},
        ylabel style={rotate=-90}, % rotates the label 90° clockwise
        xmin=0, xmax=1.0,
        ymin=0, ymax=1.15,
        xtick={0,0.5,1},
        ytick={0, {1/(2*ln(2))}, 1},
        grid=both,
        grid style={line width=0.1pt, draw=gray!10},
        major grid style={line width=0.2pt, draw=gray!50},
        tick label style={font=\small},
        label style={font=\small},
        clip=false,
      ]

      \draw[dashed, thick] (axis cs:0.5,0) -- (axis cs:0.5,1.15)
        node[pos=0.87, above right] {$\omega$};

      \addplot+[only marks, mark=*, color=blue, mark size=1.8pt] 
        plot [domain=0.001:0.49, samples=10] {x/(-ln(1-x))};

      \addplot+[only marks, mark=*, color=red, mark size=1.8pt] coordinates {
        (0.55, {1/(2*ln(2))})
        (0.62, {1/(2*ln(2))})
        (0.70, {1/(2*ln(2))})
        (0.75, {1/(2*ln(2))})
        (0.85, {1/(2*ln(2))})
        (1.0,  {1/(2*ln(2))})
      };
      
      \node at (axis cs:0.25,1.075) {$L_{\omega}(T)$};
      \node at (axis cs:0.75,1.075) {$H_{\omega}(T)$};
      
    \end{axis}
  \end{tikzpicture}
  \caption{Illustration of $\vec{\lambda}^*(\omega, T)$ for $\omega = \frac{1}{2}$ and the partitioning of agent type set $T$ into $L_{\omega}(T)$ (blue) and $H_{\omega}(T)$ (red). For all $\type \in H_{\omega}(T)$, the value $\lambda^*_{\type}(\omega, T)$ in Lemma \ref{lemma:technical-lemma} is given by $\lambda^*_{\type}(\omega, T) = \frac{\omega}{-\ln(1-\omega)} = \frac{1}{2\ln 2} \approx 0.72$. For all $\type \in L_{\omega}(T)$, the value $\lambda^*_{\type}(\omega)$ satisfies $\lambda^*_{\type}(\omega) \geq \frac{\omega}{-\ln(1-\omega)}$.}
  \label{fig:partition}
\end{figure}
\begin{restatable}{lemma}{technicallemma}
    \label{lemma:technical-lemma}
The following properties hold for every set of agent types $T$ with $\max(T) > 0$ and every $\omega \in \left(0, \max(T) \right] \cap (0,1)$:    
\begin{compactenum}[(i)]
   \item
   $
   \lambda^*_{\type}(\omega, T) =
            \begin{cases} 
                \frac{\omega}{-\ln(1-\omega)}, & \text{if } \type \in H_{\omega}(T), \\
                \frac{{\type}}{-\ln(1-{\type})}, & \text{if } \type \in L_{\omega}(T)\cap(0,1), \\
                1, & \text{if $\type \in L_{\omega}(T)\cap \{0\}$}.
            \end{cases} 
    $
    \item 
    $
    \min_{%\substack
    {\type \in T \cap (0,1] }}\lambda_{\type}^*(\omega, T) = \frac{\omega}{-\ln(1-\omega)}.
    $
    \item 
    $
    \max_{%\substack
    {\type \in T \cap (0,1]}} \frac{1-{\type} }{\lambda^*_{\type}(\omega, T)} =
    \begin{cases}
         \frac{-\ln(1-\omega)(1-\min \left(T \setminus \{0\}\right))}{\omega}, & \text{if } \min \left(T \setminus \{0\}\right) \ge \omega, \\
        \frac{-\ln(1-\min \left(T \setminus \{0\}\right))(1-\min \left(T \setminus \{0\}\right))}{\min \left(T \setminus \{0\}\right)}, & 
         \text{if } \min \left(T \setminus \{0\}\right) < \omega.  
    \end{cases}
    $
    \item 
    $
    \max_{%\substack
            {\type \in T \cap (0,1]}} \frac{\mu^*_{\type}(\omega, T)}{\lambda^*_{\type}(\omega, T)}= \frac{\max(T)}{\omega}.
    $
   \end{compactenum}
\end{restatable} 

\begin{proof}[Proof of Lemma~\ref{lemma:technical-lemma}]
For notational convenience, we use $\lambda_{\type} := \lambda^*_{\type}(\omega, T)$ and $\mu_{\type} := \mu^*_{\type}(\omega, T)$ for each $\type \in T$. Also, we use $L := L_{\omega}(T)$ and $H :=H_{\omega}(T)$. 
Note that $H$ is guaranteed to be non-empty by the range of $\omega$. All statements follow from the definition of $\vec{\mu^*}(\omega, T)$ in \eqref{eq:mu-star-params} and elementary calculus.
\medskip

\noindent \textit{(i)} For every $\type \in L\cap \{0\}$, $\lambda_{\type} = \mu_{\type} = 1$ holds. Then, for all $\type \in L\cap (0,1)$, we have that:
\begin{equation*}
    \lambda_{\type} = \frac{\mu_{\type}}{{\type}}\left(1-e^{-\frac{{\type}}{\mu_{\type}}} \right)= \frac{1-e^{\ln\left(1-{\type}\right)}}{-\ln(1-{\type})}=\frac{{\type}}{-\ln(1-{\type} )}.  
\end{equation*}
Similarly, for all $\type \in  H$ we have that:
\begin{equation*}
    \lambda_{\type} = \frac{\mu_{\type}}{{\type}}\left(1-e^{-\frac{{\type}}{\mu_{\type}}} \right)= \frac{1-e^{\ln\left(1-\omega\right)}}{-\ln(1-\omega)}=\frac{\omega}{-\ln(1-\omega)}.  
\end{equation*}
\medskip

\noindent \textit{(ii)} First, assume that $L \cap (0,1) = \emptyset$. Then, by identity \textit{(i)} we have:
\[
\min_{\type \in T \cap (0,1]} \lambda_{\type} 
= 
\min_{\type \in H} \lambda_{\type} 
= \frac{\omega}{-\ln(1- \omega)}.
\]
Next, assume that $L \cap (0,1) \neq \emptyset$. By the definition of $L$, the fact that $f(x)=\frac{x}{-\ln(1-x)}$ is non-increasing, and statement \textit{(i)}, we obtain:
\begin{equation*}
    \min_{%\substack
              {\type \in L \cap (0,1)}}\lambda_{\type} =  \min_{%\substack
              {\type \in L \cap (0,1)}}\frac{{\type}}{-\ln(1-\type)} \geq \frac{\omega}{-\ln(1- \omega)}= \min_{\type \in H} \lambda_{\type} .
\end{equation*}
Therefore:
\[\min_{\type \in T \cap (0,1]}\lambda_{\type} = \min \left\{ \min_{\type \in H}\lambda_{\type}, \min_{\type \in L\cap (0,1)} \lambda_{\type} \right\} =\min_{\type \in H}\lambda_{\type} = \frac{\omega}{-\ln(1-\omega)}.\]
\medskip

\noindent \textit{(iii)} First, consider the case of $\min(T \setminus \{0\}) \ge \omega$. 
Then $L \cap (0,1)=\emptyset$.
By statement \textit{(i)} and since $ \min H=\min(T\setminus \{0\})$, it holds that:
\begin{equation*}
    \max_{%\substack
    {\type \in T: \cap (0,1] }}\frac{1-{\type} }{\lambda_{\type}}= {\max_{\type \in  H} \frac{1-{\type}}{\lambda_{\type}}}=\frac{-\ln(1-\omega)(1-\min(T \setminus \{0\}))}{\omega}.
\end{equation*}
Next, assume that $\min(T \setminus \{0\}) < \omega$. Then $L \cap (0,1) \neq \emptyset$.
We have that:
\begin{align}
        \max_{\type \in  H} \frac{1-{\type}}{\lambda_{\type}}&= \frac{-\ln(1-\omega)\max_{\type \in H}(1-{\type})}{\omega}  = \frac{-\ln(1-\omega) \left(1-\min(H)\right)}{\omega} \nonumber \\
        &\leq \frac{-\ln(1-\omega) \left(1-\omega\right)}{\omega}
        \le \max_{\substack{\type \in L \cap (0,1)}}  \frac{-\ln(1-{\type})\left(1-{\type}\right)}{{\type}} \nonumber = \max_{\substack{\type \in L\cap (0,1)}} \frac{1-\type}{\lambda_{\type}}.\numberthis\label{eq:iv-derivation}
\end{align}
    The first inequality holds by the definition of $H$. The second inequality holds by the definition of $L$ and due to the fact that the function $h(x)=\frac{-\ln(1-x)(1-x)}{x}$ is non-increasing in $(0,1)$. Using \eqref{eq:iv-derivation}, we conclude that:
    \begin{align*}
        \max_{\substack{\type \in T:\cap (0,1]}}\frac{1-{\type} }{\lambda_{\type}}
        & = \max \left\{\max_{\type \in  H} \frac{1-{\type} }{\lambda_{\type}}, \max_{\substack{\type \in L \cap (0,1)} }\frac{1-{\type} }{\lambda_{\type}}\right\} 
        =  \max_{\substack{\type \in L: \cap (0,1)} }\frac{1-\type }{\lambda_{\type}} \\
        & =  \max_{\substack{\type \in L:\cap (0,1)} }\frac{ -\ln(1-{{\type}})(1-{\type})}{{\type}}
        = \frac{ -\ln(1-{t_{\min_{+}}})(1-t_{\min_{+}})}{t_{\min_{+}}}.  
    \end{align*}
The last equality holds because the function $h(x)$, as defined above, is non-increasing in $(0,1)$, and thus the maximum is attained for $ \min(T \setminus \{0\})$.
\medskip

\noindent \textit{(iv)} If $L = \emptyset$, the statement follows by statement \textit{(i)}. Otherwise,  we have that:
\begin{equation*}
    \max_{\substack{\type \in T \cap (0,1]}} \frac{\mu_{\type}}{\lambda_{\type}}=\max\left\{\max_{\substack{\type \in H}} \frac{\mu_{\type}}{\lambda_{\type}}, \max_{\substack{\type \in L \cap (0,1)}} \frac{\mu_{\type} }{\lambda_{\type}}\right\}
    = \max\left\{\max_{\substack{\type \in H}} \frac{\type}{\omega},1 \right\}
    = \frac{\max(T)}{\omega},
\end{equation*}
where, the last equality follows from the definition of $H$ and because $\omega \le \max(T)$ holds by assumption. 
\end{proof}

\subsection{Proof of Lemma~\ref{lemma:general-omega-lb}} 
\label{app:lowerbound-poarmp-general}
\generalomegalb*
\begin{proof}[Proof of Lemma~\ref{lemma:general-omega-lb}]
Let $\mu^*_{\type}(\omega, T)$ and $\lambda^*_{\type}(\omega, T)$ be as previously defined in \eqref{eq:mu-star-params} and Lemma~\ref{lemma:technical-lemma}, respectively. 
For notational convenience, we use $\lambda_{\type} := \lambda^*_{\type}(\omega, T)$ and $\mu_{\type} := \mu^*_{\type}(\omega, T)$ for each $\type \in T$. 
As argued in Corollary \ref{cor:mu:feasible}, $(\mu_{\type})_{\type \in T}$ is a feasible solution to $\poarmp(T)$. 
We bound the two terms of the $\min$ expression in the objective function of $\poarmp(T)$ separately.

First, we show that:
\begin{equation}\label{eq:global-lambda-min}
    \min_{\type \in  T}\lambda_{\type} = \frac{\omega}{-\ln(1-\omega)}.
\end{equation}
When $\min(T)> 0$, \eqref{eq:global-lambda-min} immediately follows from property \textit{(ii)} of Lemma \ref{lemma:technical-lemma}. Otherwise, when $\min(T)=0$, we have that
\begin{equation*}
    \min_{\type \in  T}\lambda_{\type}= \min\left\{\min_{%\substack
    {\type \in  T: \type>0}}\lambda_{\type},\min_{%\substack
    {\type \in  T:  \type}=0}\lambda_{\type} \right\} = \min\left\{\frac{\omega}{-\ln(1-\omega)},1 \right\} = \frac{\omega}{-\ln(1-\omega)}.
\end{equation*}
The last equality holds since $\frac{z}{-\ln(1-z)}<1$, for all $z \in (0,1)$. We obtain \eqref{eq:global-lambda-min}.

Next, we show that:
\begin{equation}\label{eq:second-quantity-general-bound}
    \max_{\type \in T} \frac{\mu_{\type}}{\lambda_{\type}} 
    + \max_{\type \in T} \frac{1 - {\type}}{\lambda_{\type}} \leq \frac{\max(T) }{\omega} + 1.
\end{equation}
When $\min(T)>0$, we have:
\begin{align*}
    \max_{\type \in T} \frac{\mu_{\type}}{\lambda_{\type}} 
    & + \max_{\type \in T} \frac{1 - {\type}}{\lambda_{\type}}
    = \max_{%\substack
    {\type \in T \cap (0,1]}} \frac{\mu_{\type}}{\lambda_{\type}} 
    + \max_{%\substack
     {\type \in T \cap (0,1]}} \frac{1 - {\type}}{\lambda_{\type}} 
    =\frac{\max(T)}{\omega}  + \max_{%\substack
     {\type \in T \cap (0,1]}} \frac{1 - {\type}}{\lambda_{\type}} \\
    & 
    = 
    \frac{\max(T)}{\omega}   + \max\left\{
    \frac{-\ln(1-\omega) \left(1-\omega\right)}{\omega},    
    \max_{\substack{\type \in L\cap(0,1)}}  \frac{-\ln(1-{\type} )\left(1-{\type}\right)}{{\type}}
    \right\} \leq
    \frac{\max(T)}{\omega} 
    + 1.
\end{align*}
Here, the second and third equality follow from properties \textit{(iv)} and \textit{(i)} of Lemma \ref{lemma:technical-lemma}, respectively.
Then, the inequality holds by the fact that $\frac{-\ln(1-z)(1-z)}{z} < 1$ holds for all $z \in (0,1)$.      

Similarly, when $\min(T)=0$, we obtain: 
\begin{align*}
    \max_{\type \in T} \frac{\mu_{\type}}{\lambda_{\type}} 
    + \max_{\type \in T} \frac{1 - {\type}}{\lambda_{\type}} &= \max\left\{\max_{\substack{\type \in T: \cap (0,1]}} \frac{\mu_{\type}}{\lambda_{\type}},  \frac{\mu_{0}}{\lambda_{0}}\right\} + \max\left\{\max_{\substack{\type \in T\cap (0,1]}} \frac{1 - {\type} }{\lambda_{\type}},  \frac{1}{\lambda_{0}}\right\}\\
    &=\max\left\{\frac{\max(T)}{\omega}, 1 \right\}
    + \max\left\{\max_{\substack{\type \in T:\cap (0,1]}} \frac{1 - {\type} }{\lambda_{\type}}, 1\right\}
    =\frac{\max(T)}{\omega} 
    + 1,
\end{align*}
where the second equality holds by properties \textit{(i)} and \textit{(iv)} of Lemma \ref{lemma:technical-lemma}, i.e., for $t=0$, it holds that $\lambda_{0}=1$. 
The last equality holds because $\omega \leq \max(T)$ (by definition) and by property \textit{(iii)} of Lemma \ref{lemma:technical-lemma}, as $\frac{-\ln(1-z)(1-z)}{z} < 1$ for all $z \in (0,1)$. We thus obtain \eqref{eq:second-quantity-general-bound}.
         
By combining \eqref{eq:global-lambda-min} and \eqref{eq:second-quantity-general-bound}, we obtain that the optimal value of $\poarmp(T)$ is at least:
\begin{align*}
    \min \left\{ 
    \min_{\type \in T} \lambda_{\type}, 
    \left( 
    \max_{\type \in T} \frac{\mu_{\type}}{\lambda_{\type}} 
    + \max_{\type \in T} \frac{1 - {\type}}{\lambda_{\type}} 
    \right)^{-1} \right\} 
    &\geq \min \left\{ 
    \frac{\omega}{-\ln(1-\omega)}, 
    \frac{{\omega}}{\omega + \max(T)} 
    \right\}.
\end{align*}  
\end{proof}

\subsection{Proof of Claim \ref{claim:LW:prop0}} \label{app:math-claims-theorem-no-reserve-sigma-max}
\claimLWpropzero*
\begin{proof}[Proof of Claim \ref{claim:LW:prop0}]
Let $z_0 = 1+ \nicefrac{W_{0}\left(-2e^{-2} \right)}{2}$. Consider the function $h(z) = z - 1 - W_{0}\left(-e^{-z-1} \right)$ in $\left[z_0, 1\right]$.
By Definition \ref{def:lambert}, we have:
\[
e^{W_{0}(-2e^{-2})} W_{0}(-2e^{-2}) = -2e^{-2} \quad \Leftrightarrow \quad  -e^{2} = e^{W_{0}(-2e^{-2})} \frac{W_{0}(-2e^{-2})}{2}.
\]
Using this, we obtain:
\begin{equation}\label{eq:lambert-trick}
    W_{0}\left(-e^{-\frac{W_{0}(-2e^2)}{2}-2}\right)
    = W_{0}\left(-e^{-\frac{W_{0}(-2e^2)}{2}} e^{-2}\right)
    = W_{0}\left(e^{\frac{W_{0}(-2e^2)}{2}} \frac{W_{0}(-2e^2)}{2} \right).
\end{equation}
We now compute $h(z_0)$:
\begin{align*}
&h(z_0) = z_0 - 1 - W_{0}(-e^{-z_0-1}) 
    = \frac{W_{0}\left(-2e^{-2} \right)}{2} - W_{0}\left(-e^{-\frac{W_{0}(-2e^2)}{2}-2} \right) \\
&= \frac{W_{0}\left(-2e^{-2} \right)}{2} - W_{0}\left(e^{\frac{W_{0}(-2e^2)}{2}} \frac{W_{0}(-2e^2)}{2} \right) 
    = \frac{W_{0}\left(-2e^{-2} \right)}{2} - \frac{W_{0}\left(-2e^{-2} \right)}{2} 
    = 0.
\end{align*}
Here, the third equality follows from \eqref{eq:lambert-trick} and the fourth equality follows by Definition \ref{def:lambert}.
Having established that $h(z_0) = 0$, we now show that the derivative $h'(z) > 0$ for all $z > z_0$. By Fact \ref{fact:lambert-derivative}, we have:
\[
    h'(z) = 1 - \left(W_{0}\left(-e^{-z-1}\right) \right)'
    = 1 + \frac{W_{0}\left(-e^{-z-1}\right)}{e^{-z-1} \left(1+W_{0}\left(-e^{-z-1}\right)\right)} \cdot \left(-e^{-z-1} \right)',
\]
and simplifying leads to:
\[
    h'(z) = \frac{1+2W_{0}\left(-e^{-z-1}\right)}{1+W_{0}\left(-e^{-z-1}\right)}.
\]
The strict inequality $h'(z) > 0$ holds because $W_{0}$ is the principal branch of the Lambert $W$ function, and since $W_{0}$ is non-decreasing on $\left[-\frac{1}{e},\infty\right)$, we have:
\[
    1 + 2W_{0}(-e^{-z-1}) > 1 + 2W_{0}(-e^{-z_0-1}) > 0.59,
\]
where the last inequality is verified numerically. The argument for the denominator is identical i.e., we numerically verify that $1 + W_{0}(-e^{-z_0-1})> 0.79$.

Since $h(z)$ is strictly increasing in $(z_0, 1]$ and $h(z_0) = 0$, it holds that $h(z) > 0$ for all $z > z_0$ or, equivalently, that $f(z) < z$.
For the equality, since $1-f(z)=-W_{0}(-e^{-z-1})>0$ (as $-e^{-z-1} > -\frac{1}{e}$ holds for $z>0$), $f(z)+z = -\ln(1-f(z))$ is equivalent to:
\begin{equation*}
    e^{f(z)+z} = \frac{1}{1-f(z)} \quad \Longleftrightarrow \quad e^{f(z)+z}\left( 1-f(z)\right)=1. 
\end{equation*}
Substituting $f(z)$ in this expression leads to:
\begin{align*}
    e^{f(z)+z}\left( 1-f(z)\right) & =-e^{z+1}\left(W_{0}\left(-e^{-z-1} \right)e^{W_{0}(-e^{-z-1})} \right)\\
    &=-e^{z+1}\left(-e^{-z-1} \right)=1,
\end{align*}
where the second equality follows by Definition \ref{def:lambert}.
\end{proof}

\subsection{Lower bounds}\label{app:LB-no-reserve-prices}
\lemuniversalLBbudget*
\begin{proof}[Proof of Theorem~\ref{lem:universal-LB-budget}]
Consider a $\fpar{2}{}$ with 2 agents of any types $\type_1$, $\type_2$ whose valuations are $\vec v_1=(1,1)$ and $\vec v_2=(0,1)$, respectively. Let the budget profile be $\budget = (1,\infty)$ and let ties be broken in favor of agent 1 .  Consider the deterministic bid profile $\b=(\b_1,\b_2)$ with $\b_1=(0,1)$ and $\b_2=(0,1)$. 

We note that $\b\in \mathcal R_1$ and $\b\in \mathcal R_2$. For agent 2, it is clear that there is no mixed deviation that could improve their gain while still satisfying the ROI and budget constraint. Agent 1 wins both items and their gain is $\z_1(\b)=2-\type \geq1$. Thus, for agent 1 to bid deterministically $\b_1$ is a best response to $\b_2$, showing that $\b$ is a mixed Nash equilibrium. The liquid welfare at this equilibrium we obtain:
$$\sw(\b)=\min(v_1(\vec x_1(\b)),\budget_1)+\min(v_2(\vec x_2(\b)),\budget_2)=1,$$
while optimally, agent 1 receives item 1 and agent 2 receives item 2 yielding: 
$$\opt=\min(v_1(1,0),\budget_1)+\min(v_2(0,1),\budget_2)=2.$$ 
Hence, we conclude that the $\text{\mne-\poa}(\mathcal I_{\add}^{T})\geq 2$.
\end{proof}

In proving the lower bounds of Theorems \ref{thm:LB-POA-budget-commontype} and \ref{thm:LB-POA-budgetfree-hybrid} the following claim is helpful.

\begin{restatable}{claim}{claimdomain}\label{claim-domain}
    For every $\type\in \left(1+\frac{W_0(-2e^{-2})}{2}, 1\right]$, it holds that $-W_0(-e^{-\type-1})\in (1-\type,\nicefrac{1}{e}]$. Furthermore, for $f(a)=\frac{1-a+a\ln(a)+\type}{1-a+a\ln(a)+a\type}$ with $a\in (1-\type,1)$, we have that:
    $$f(-W_0(-e^{-\type-1}))= 1 + \frac{\type }{1+W_{0}\left(-e^{-\type-1}\right)}$$
    \end{restatable}

\begin{proof}[Proof of Claim~\ref{claim-domain}]
The fact that $-W_0(-e^{-\type-1}) > 1 - \type$ holds for every $\type \in  \left(1+\frac{W_0(-2e^{-2})}{2}, 1\right]$ follows directly from Claim \ref{claim:LW:prop0}. In the interest of brevity, we show that $-W_0(-e^{-t-1})\leq \frac1e$ for $t\in \left(1+\frac12W_0(-2e^{-2}),1\right]$. In Figure~\ref{fig:claimdomain} we indeed see that for $\type\in \left(1+\frac{W_0(-2e^{-2})}{2}, 1\right]$ that $\frac1e>-W_0(e^{-\type-1})$. For completeness sake, the line $1-\type$ is also plotted in the graph from which we visually confirm the first part of the claim.

\begin{figure}[h]
    \centering
    \begin{tikzpicture}
\centering
      \begin{axis}[
      scale only axis,
      width=0.4\textwidth,
      height=0.2\textwidth,
      xmin=0.36788, xmax=1,
      ymin=0, ymax=0.5,
      axis x line=bottom,
      axis y line=left,
      axis line style={line width=0.8pt},
      tick style={semithick},
      ticklabel style={font=\scriptsize},
      xlabel={\small$\type$},
      xlabel style={yshift=-2pt},
      ylabel style={xshift=1pt},
      xtick={0.36788,0.79681213,1},
      xticklabels={\scriptsize$\frac1e$,\scriptsize$1+\frac12W_0(2e^{-2})$,\scriptsize$1$},
      ytick={0,0.36788},
      yticklabels={\scriptsize$0$,\scriptsize$\frac1e$},
      clip=false,
      legend style={at={(0.65,1.21)},anchor=north west},
    ]
    \addplot[red!80!black, domain=0.1589:0.36788] (x - ln(x) -1, x);
    \addlegendentry{\scriptsize\(-W_0(e^{-\type-1})\)}
    \addplot[thick,blue] coordinates {
        (0.36788,          0.36788)
        (1, 0.36788)
      };
      \addlegendentry{\scriptsize\( 1/e\)}
    \addplot[green, domain=0.632120558829:1, samples = 100]{1-x};
    \addlegendentry{\scriptsize\(1-\type\)}
  \end{axis}
\end{tikzpicture}
    \caption{A graph to compare $-W_0(-e^{-\type-1})$ (in red) to the constant $\frac1e$ (in blue) to the line $1-\type$ (in green).}
    \label{fig:claimdomain}
\end{figure}

For the second part of the claim, we use the Claim~\ref{claim:LW:prop0} and obtain the following:

{\allowdisplaybreaks\begin{align*} f\left(-W_0\left(-e^{-\type-1}\right)\right)&=\frac{1+W_0\left(-e^{-\type-1}\right)-W_0\left(-e^{-\type-1}\right)\ln\left(-W_0\left(-e^{-\type-1}\right)\right)+\type}{1+W_0\left(-e^{-\type-1}\right)-W_0\left(-e^{-\type-1}\right)\ln\left(-W_0\left(-e^{-\type-1}\right)\right)-\type W_0\left(-e^{-\type-1}\right)}\\
    &=\frac{1+W_0\left(-e^{-\type-1}\right)-W_0\left(-e^{-\type-1}\right)\left(-\type-1-W_0\left(-e^{-\type-1}\right)\right)+\type}{1+W_0\left(-e^{-\type-1}\right)-W_0\left(-e^{-\type-1}\right)\left(-\type-1-W_0\left(-e^{-\type-1}\right)\right)-\type W_0\left(-e^{-\type-1}\right)}\\
    &=\frac{1+2W_0\left(-e^{-\type-1}\right)+\type W_0\left(-e^{-\type-1}\right)+W_0\left(-e^{-\type-1}\right)^2+\type}{1+2W_0\left(-e^{-\type-1}\right)+ W_0\left(-e^{-\type-1}\right)^2}\\
    &=\frac{\left(1+\type W_0\left(-e^{-\type-1}\right)\right)^2+\type \left(1+W_0\left(-e^{-\type-1}\right)\right)}{\left(1+\type W_0\left(-e^{-\type-1}\right)\right)^2}=1+\type\left(1+W_0\left(-e^{-\type-1}\right)\right)^{-1}.
    \end{align*}}

\end{proof}

\LBPOAbudgetcommontype*
\begin{proof}[Proof of Theorem~\ref{thm:LB-POA-budget-commontype}]
We distinguish between the two cases.\label{proof:lb-POA-budget-commontype}
\medskip 

\noindent
\textbf{Case 1:} $\type\in \left[0,1 + \frac{W_{0}(-2e^{-2})}{2}\right]$. This case follows immediately from Theorem~\ref{lem:universal-LB-budget}. 
\medskip

\noindent
\textbf{Case 2:} $\type\in \left(1 + \frac{W_{0}(-2e^{-2})}{2},1\right]$.  Consider a $\fpar{2}{}$ with 2 agents of type $\type>1 + \frac{W_{0}(-2e^{-2})}{2}$, valuations $\vec v_1=(1,1)$ and $\vec v_2=(0,1)$. Next, fix $a\in(1-\type, 1)$ (to be determined later) and consider the budget profile $\budget=\left(\frac{1-a+a\ln(a)}{\type},\infty\right)$. Let ties in the auction of item 1 be broken in favor of agent 1 and in the auction of item 2 only be broken in favor of agent 2 when the tie occurs at 0. Consider now the randomized bid profile $\B$ where in the auction of item 1 both agents bid deterministically 0, but in the auction of item 2 they bid $(x,x)$, where $x$ is sampled from a distribution $X$ with CDF $F_X(x)=\frac{a}{1-\type x}$ with $\supp (X)=\left[0,\frac{1-a}{\type}\right]$. We note that the PDF is $f_X(x)=\frac{a\type}{(1-\type x)^2}$ for $x\in \left[0,\frac{1-a}{\type}\right]$. 

First we show that the ROI and budget constraints are satisfied. Since the payment of agent 2 never exceeds 0, we see that for this agent, the ROI and budget constraint are satisfied. For agent 1, the expected payment is:
    $$\EX[p_1(\B)]=\int_0^{\frac{1-a}{\type}}xf_X(x)dx=\frac{1-a+a\ln(a)}{\type}=\budget_1.$$
    For the expected value, note that agent 1 will always win item 1 and wins item 2 with probability $1-a$. Thus, the expected value is $2-a$. Since $a\in (1-\type,1)$, we have that $1-a\leq \type$. Combining this with $a<1$, we find that $a\ln(a)<0$ showing that $\EX_{\b\sim \B}[p_1(\b)]\leq 1\leq 2-a=\EX_{\b\sim \B}[v_1(\vec x_1(\b))]$. Thus, we conclude that $\B\in \mathcal R_1$.

    We now verify that $\B$ is a CCE. Agent 2 only wins item 2 when both agents bid 0 in auction 2. This occurs with probability $a$ and their payment would be 0 yielding an expected gain:
    $$\EX[\z_2(\B)]=a.$$
    Since agent 2's value of item 1 is 0, they cannot strictly improve their gain by deviating in this auction. So, suppose they deviate by bidding $\b_2'=(0,b)$ for some $b\in \left[0,\frac1\type\right]$. For any $b>\frac1\type$, their gain would be negative if they win the item. So, those deviations we need not to consider. If $b\in\left(\frac{1-a}{\type},\frac1\type\right]$, then agent 2 will always win the item and their gain would be $\EX[\z_2(\B_1,\b_2')]=1-\type b<a$, as $b>\frac{1-a}{\type}$. If instead $b\in \left[0,\frac{1-a}{\type}\right]$, then the expected gain of agent 2 becomes:
    $$\EX[\z_2(\B_1,\b_2')]=(1-\type b)F_X(b)=a.$$
    Hence, any pure deviation of agent 2, not just ROI restricted pure deviations, cannot improve their gain showing that no mixed deviation, not just ROI restricted, could do so as well. More formally, let $\B_2'$ be any deviation such that $(\B_1,\B_2')\in \mathcal R_2$, then:
    $$\EX\left[\z_2(\B_1,\B_2')\right]\leq a=\EX\left[\z_2(\B)\right].$$

    For agent 1, the expected gain under $\B$ is:
    $$\EX[\z_1(\B)]=\EX[v_1(\vec x_1(\B))] -\type \EX[p_1(\B)]=(2-a) - \type\cdot \frac{1-a+a\ln(a)}{\type} =1-a\ln(a).$$
    Since agent 1 wins auction 1 regardless of their bid, they cannot improve their gain by deviating in that auction. Thus, suppose that they deviate by bidding $\b_1'=(0,b)$ for some $b\in [0,\frac1\type]$. For an analogous reason as before, we need not to consider $b>\frac1\type$. If $b\in \left(\frac{1-a}{\type},\frac1\type\right]$, then agent 1 wins both items with probability 1 and their gain is:
    $$\EX[\z_1(\b_1',\B_2)]=2-\type b<1+a.$$
    Regardless of whether $(\b_1',\B_2)\in \mathcal R_1$, we find that when $1-a\ln(a)\geq 1+a$, then agent 1 would not deviate in this manner. We note that this inequality holds whenever $a\leq \frac1e$. Similarly, if $b\in \left[0,\frac{1-a}{\type}\right]$, then:
    $$\EX[\z_1(\b_1',\B_2)]=1\cdot (1-F_X(b))+(2-tb)F_X(b)=1+a.$$
    Hence, for $a\leq \frac1e$, we conclude that there exists no pure deviation for agent 1 that could strictly improve their gain. Thus, similarly to agent 2, no mixed deviation could achieve this and we conclude that $\B$ is a CCE. The liquid welfare at this equilibrium is:
    $$\sw(\B)=\min\left(\EX \left[v_1(\vec x_1(\B))\right],\budget_1\right)+\min\left(\EX\left[v_2(\vec x_2(\B))\right],\budget_2\right)=\frac{1-a+a\ln(a)}{\type}+a.$$
    However, optimally, item 1 would be allocated to agent 1, while item 2 would be allocated to agent 2. This would yield a welfare of:
    $$\opt=\min(v_1(1,0),\budget_1)+\min(v_2(0,1),\budget_2)=\frac{1-a+a\ln(a)}{\type}+1.$$
    Therefore, we obtain:
    $$\cce\text{-}\poa(\mathcal I_{\add}^{\{t\}})\geq \frac{\opt}{\sw(\B)}=\frac{1-a+a\ln(a)+\type}{1-a+a\ln(a)+a\type},$$
    for any $a\in (1-\type,1)$ and $a\leq \frac1e$. Using Claim \ref{claim-domain}, we find that $a^\ast=-W_0(-e^{-\type-1})\in (1-\type,1)$ and $a^\ast\leq \frac1e$. Plugging in this choice for $a$ above, we obtain the desired bound by Claim~\ref{claim-domain}.
\end{proof}

\LBPOAbudgetfreehybrid*
\begin{proof}[Proof of Theorem~\ref{thm:LB-POA-budgetfree-hybrid}]
    We distinguish between two cases.
\label{proof:LB-POA-budgetfree-hybrid}
    \medskip

    \noindent
    \textbf{Case 1:} $\type\in \left(0,1 + \frac{W_{0}(-2e^{-2})}{2}\right]$. Consider a $\fpar{2}{}$ with 2 agents. Agent 1 being a value maximizer and agent 2 having a price sensitivity of $\type$. Let $\vec v_1=(1,0)$ and $\vec v_2=(0,1)$ and suppose ties are broken in favor of agent 1. Consider the pure bid profile $\b=(\b_1,\b_2)$ with $\b_1=(0,1)$ and $\b_2=(0,1)$. It is clear that both players under $\b$ satisfy the ROI constraint. Furthermore, agent 1 has no pure deviation that would improve their value and thus there exists no mixed deviation that could improve their value. For agent 2, we note that no pure deviation satisfying the ROI constraint could yield them a non-zero gain. Therefore, no mixed deviation satisfying the ROI constraint exists for agent 2 that would improve their gain. Thus, $\b$ is a mixed Nash equilibrium and for the welfare at this equilibrium we see:
    $$\sw(\b)=v_1(\vec x_1(\b))+v_2(\vec x_2(\b))=v_1(1,1)+v_2(0,0)=1.$$
    The optimal welfare, however, is achieved when item 1 is allocated to agent 1 and item 2 to agent 2. This gives:
    $$\opt=v_1(1,0)+v_2(0,1)=2.$$
    Hence, we see that $\opt/\sw(\b)=2$, giving the first lower bound.

    \medskip

    \noindent
    \textbf{Case 2:} $\type \in \left(1 + \frac{W_{0}(-2e^{-2})}{2},1\right]$. Consider a $\fpar{2}{}$ with 2 agents. Agent 1 being a value maximizer and agent 2 having price sensitivity $\type$. Let $a\in (1-\type,1)$, $\vec v_1=\left(\frac{1-a+a\ln(a)}{\type},0\right)$ and $v_2=(0,1)$. In auction 1, let ties be broken in favor of agent 1 and in auction 2 we let ties be broken in favor of agent 2 if and only if the tie occurs at 0. 

    Next, we consider the following mixed bid profile. Agent 1 bids $\b_1=(0,x)$ where $x$ is sampled from $X$ with CDF $F_X(x)=\frac{a}{1-\type x}$ and $\supp(X)=\left[0,\frac{1-a}{\type}\right]$. Agent 2 bids deterministically $\b_2=(0,0)$. Let $\B$ denote this randomized bid profile. From the analysis in Theorem \ref{thm:LB-POA-budget-commontype}, we know that the expected payment of agent 1 in auction 2 is $\frac{1-a+a\ln(a)}{\type}$, thus showing that the ROI constraint is satisfied by both agents. Since agent 1 is a value maximizer, we easily see that there exists no deviation that would improve their value. As $\type>0$ and agent 2 does not value item 1, we only have to consider pure deviations of the form $\b_2'=(0,b)$ for some $b\in [0,1]$. An analogous analysis to the second case in the proof of Theorem~\ref{thm:LB-POA-budget-commontype}, shows that there is no pure deviation that satisfying the ROI constraint that would improve the gain of agent 2. As the gain is already 0 of agent 2, we see for agent 2 that there is no mixed deviation satisfying the ROI constraint that would improve their gain. Hence, $\B$ is a mixed Nash-equilibrium. 

    We note that $\opt = \frac{1-a+a\ln(a)}{\type}+1$ and that the liquid welfare at equilibrium is $\frac{1-a+a\ln(a)}{\type}+a$, as $a$ is the expected gain of agent 2 under $\B$. Thus, we find for the coarse correlated price of anarchy the following bound:
    $$\mne\text{-}\poa(\mathcal I_{\add}^{\{0,t\},\infty})\geq\frac{\opt}{\sw(\B)}= \frac{1-a+a\ln(a)+\type}{1-a+a\ln(a)+a\type},$$
    for $a\in (1-\type,1)$. Similar to Theorem \ref{thm:LB-POA-budget-commontype}, applying Claim \ref{claim-domain} and plugging in $a=-W_0(-e^{-\type-1})$ yields the desired bound.
    \end{proof}
\subsection{Proof of Theorem~\ref{theorem:not-so-hybrid}}\label{app:not-so-hybrid}
\notsohybrid*
\begin{proof}[Proof of Theorem~\ref{theorem:not-so-hybrid}]
Set $\omega = 1 - e^{-\nicefrac{1}{\min(T)}}$. 
Note that $\omega \leq \min(T)$.
To see this, consider $h(z) = z - 1 + e^{-\frac{1}{z}}$. For all $z \in [\beta, 1]$, it holds that $h(z) \geq 0$. Firstly, it holds that$h(\beta) = \beta -1 +e ^{-\frac{1}{\beta}}=0$ by definition of $\beta$. Also, the function $h$ is non-decreasing for all $z \geq \beta$, as $h'(z)=1 +\frac{e^{-\frac{1}{z}}}{z^2}>0$ in this case. Therefore, $h(\min(T)) = \min(T) - \omega \ge 0$.

For $\vec{\mu} := \vec{\mu}^*(\omega)$ and $\vec{\lambda} := \vec{\lambda}^{*}(\omega)$ as defined in \eqref{eq:mu-star-params} and property \textit{(i)} of Lemma \ref{lemma:technical-lemma}, respectively, it holds that: 
\begin{align}
\max_{\substack{\type \in T}} \frac{\mu_{\type}}{\lambda_{\type}} 
    + \max_{\substack{\type \in T}} \frac{1 - t }{\lambda_{\type}} &=\max_{\substack{\type \in T:\\ t > 0}} \frac{\mu_{\type} }{\lambda_{\type}} 
    + \max_{\substack{\type \in T:\\ t > 0}} \frac{1 - t }{\lambda_{\type}}= \frac{\max(T)-\ln(1-\omega)(1-\min(T))}{\omega} \nonumber\\
    &\leq \frac{1-\ln(1-\omega)(1-\min(T))}{\omega} = \frac{-\ln(1-\omega)}{\omega} \nonumber + \frac{1+\ln(1-\omega)\min(T)}{\omega} 
 =\frac{-\ln(1-\omega)}{\omega}.\numberthis \label{eq:bound-second-part-O}
\end{align}
Here, the first equality holds as there are no value maximizers in $T$ by assumption. 
The second equality follows from properties \textit{(iii)} and \textit{(iv)} of Lemma \ref{lemma:technical-lemma}, as $\omega \leq \min(T)$. Finally, the last equality follows since $1 + \ln(1-\omega)\min(T) = 1 + \ln\left(e^{\frac{-1}{\min(T)}}\right)\min(T)=0$ holds by our choice of $\omega$. 
We conclude that:
\begin{align*}
    \left(\cce\textsc{-}\poa)(\mathcal{I}_{\xos}^{T, \infty})\right)^{-1} \geq \poarmp(T^+)= \poarmp(T) &\geq \min \left\{ 
    \min_{\substack{\type \in T}} \lambda_{\type}, 
    \left( \max_{\substack{\type \in T}} \frac{\mu_{\type}}{\lambda_{\type}} 
    + \max_{\substack{\type \in T}} \frac{1 - t}{\lambda_{\type}} 
        \right)^{-1} \right\}\\
    &=  \min \left\{ 
        \frac{\omega}{-\ln(1-\omega)}, 
        \left( 
            \max_{\substack{\type \in T}} \frac{\mu_{\type}}{\lambda_{\type}} 
            + \max_{\substack{\type \in T}} \frac{1 - t}{\lambda_{\type}} 
        \right)^{-1}\right\}\\
         &=  \frac{\omega}{-\ln(1-\omega)} = \min(T)\cdot \left(1 - e^{-\frac{1}{\min(T)}}\right) .
\end{align*}
Here, the first inequality follows by Theorem \ref{theorem:templateII}. The first equality holds since, by assumption, we only consider budget-free instances. Then, the second equality follows from property \textit{(ii)} of Lemma \ref{lemma:technical-lemma}. Finally, the third equality follows from \eqref{eq:bound-second-part-O}. The proof follows. 
\end{proof}

\section{Missing material of Section~\ref{sec:reserve-prices}}

\subsection{Proof of Theorems~\ref{theorem:single-type}, \ref{lemma:LB-POA-Reserve-value-maximizers} and \ref{lemma:LB-BigTauSens-withReserve}}

\theoremsingletype*
\begin{proof}[Proof of Theorem \ref{proof:theorem:single-type}]\label{proof:theorem:single-type}
To determine a lower bound on $\poarmp(\{t\})$, we need to determine a pair $(\lambda, \mu)$ that satisfies \eqref{eq:key-lemma-0-constraint}, \eqref{eq:key-lemma-1-constraint} or \eqref{eq:key-lemma-2-constraint}, depending on the type $\type \in[0,1]$ and the value of the parameter $\eta \in [0,1)$. We distinguish three cases.

\medskip
\noindent \textbf{Case 1:} $t \in (1-\nicefrac{1}{e},1)$ and $\eta \in \left[0, \frac{1-e(1-t)}{t} \right)$. Set $(\lambda, \mu) =\left(1-\frac{1-t \eta}{e}, t\right)$. Note that for this choice of $(\lambda, \mu)$ it holds that, by the definition of \textbf{Case 1}, that
\begin{equation*}
    \mu = t \cdot \ln\left( \frac{1-t \eta}{1-t}\right)^{-1} \cdot \ln\left( \frac{1-t \eta}{1-t}\right) \geq  t \cdot \ln\left( \frac{1-t \eta}{1-t}\right)^{-1}.
\end{equation*}
Therefore, $(\lambda, \mu)$ satisfy the constraints in \eqref{eq:key-lemma-0-constraint} and \eqref{eq:key-lemma-1-constraint} and $(\lambda, \mu)$ is a feasible solution of $\poarmp(\{t\})$. Hence, we obtain that
\begin{equation*}
    \poarmp(\{t\}) \geq \min \left(\lambda, \frac{\lambda}{\mu+1-t} \right)= \lambda = 1-\frac{1-t \eta}{e} = P_t(\eta)^{-1}
\end{equation*}
\medskip
\noindent
\textbf{Case 2:} $t \in (0,1)$ and  $\eta \in \left(\max\left(0, \frac{1-e(1-t)}{t} \right), 1\right)$. Set: $ (\lambda, \mu)= \left( t\ln\left( \frac{1-t \eta}{1-t}\right)^{-1},  t\ln\left( \frac{1-t \eta}{1-t}\right)^{-1}\right).$ Observe that $(\lambda, \mu)$ satisfy the constraint in \eqref{eq:key-lemma-1-constraint} (with equality). We therefore obtain:
\begin{equation*}
    \poarmp(\{t\}) \geq \min \left( \lambda, \frac{\lambda}{\mu + 1 -t} \right)= \frac{\lambda}{\mu+1-t} = \frac{t}{t+(1-t)\ln\left(\frac{1-t \eta}{1-t} \right)}=P_t(\eta)^{-1}.
\end{equation*}
Here, the second equality follows since, by the definition of \textbf{Case 3}, it holds that 
\begin{equation*}
    \mu = \frac{t}{\ln \left(\frac{1 - t \eta}{1-t} \right)} \geq \frac{t}{\ln \left(\frac{1 - t \max\left(0, \frac{1-e(1-t)}{t} \right)}{1-t} \right)} \geq t.
\end{equation*}

\medskip
\noindent
\textbf{Case 3:} $t= 0$. 
Set $\mu= (1-\eta)^{-1}$, which clearly satisfies the constraint in $\eqref{eq:key-lemma-2-constraint}$ and prescribes that $\lambda = (1- \eta)^{-1}$. We obtain:
\begin{align*}
    \poarmp(\{t\}) \geq \min \left\{ \lambda, \frac{\lambda}{\mu  +1}\right\} &= \min\left\{\frac{1}{1-\eta}, \frac{(1-\eta)^{-1}}{1+ (1-\eta)^{-1}} \right\} 
    = \min\left\{\frac{1}{1-\eta}, \frac{1}{2 -\eta} \right\} = \frac{1}{2-\eta} = P_{t}(\eta)^{-1}.
\end{align*}
Finally, by Theorem \ref{theorem:templateII} we have that $\cce\textit{-}\poa(\mathcal{I}_{\xos}^{\{t\}, \infty}) \leq \left(\poarmp(\{t\})\right)^{-1}$ holds for well-supported equilibria. Combining this fact with the lower bounds on $\poarmp(\{t\})$ we obtained for each of the three cases above, the claim follows.
\end{proof}

\LBPOAReservevaluemaximizers*

\begin{proof}[Proof of Theorem~\ref{lemma:LB-POA-Reserve-value-maximizers}]\label{proof:LB-POA-Reserve-value-maximizers}
Consider a $\fpar{2}{\vec{r}}$ among two value maximizing agents with additive valuations and feasible reserve prices. Assume that ties are broken in favor of agent 1 for both auctions. 
For auctions $1$ and $2$, consider a small $\varepsilon >0$ and let $\eta_1 \in [0,1-\varepsilon]$ and $\eta_2 = 1 - \varepsilon \ge 0$. Note that $\eta = \eta_1$. 
Let $v_{11} = 1$ and $v_{12} = 0$ and let $v_{21} = 0$ and $v_{22} = 1-\eta_1$. 
Note that the reserve prices are feasible and equal to $r_1 = \eta_1$ and $r_2 = (1 - \varepsilon)(1-\eta_2)$. 
Consider a bid profile $\vec{b}$ with $\vec{b}_1 = (\eta_1, 1-\eta_1)$ and $\vec{b}_2 = (0, 1 - \eta_1)$. 
We show that $\vec{b}$ is a MNE.

First note that under $\vec{b}$ the ROI constraints are satisfied for both agents, as agent 1 wins both items with a total payment of 1. 
Also note that agent 2 cannot win anything by deviating without violating their ROI constraint. This is because agent 1 deterministically bids the value $v_{22}$ of agent 2 for item $2$ and ties are broken in favor of agent 1, and agent 2 has 0 value for item $1$. As agent 1 wins both items under $\b$, agent 1 cannot improve their gain with any (random) unilateral deviation. Therefore, $\b$ is a MNE.

It is easy to see that the optimal liquid welfare in this instance is achieved when agent 1 is allocated item $1$ and agent 2 is allocated item $2$, and is equal to:
$$\opt=v_1(1,0)+v_2(0,1)=2 - \eta_1 = 2 - \eta.$$
The liquid welfare of the MNE $\vec{b}$ is equal to $\sw(\b)=v_{11} + v_{12} = 1$. 
Therefore, the \mne\text{-}\poa\ of this instance is to $2 - \eta$, concluding the proof.
\end{proof}

\LBBigTauSenswithReserve*
\begin{proof}[Proof of Theorem~\ref{lemma:LB-BigTauSens-withReserve}]
\label{proof:LB-BigTauSens-withReserve}
Consider a first-price auction with reserve price $r$ among $n = 2$ agents of the same type, i.e., $\type := \type_1 = \type_2$.
Let $\vec{v} = (1, \eta)$, with $\eta \in [0,1)$, and let $\type = \frac{e-1}{e - \eta}$. 
The feasible reserve price is $r = \eta < v_1$.
We assume that ties are broken in favor of agent $1$ for every bid profile $\vec{b} = (b, b)$ with $b > \eta$, and in favor of agent $2$ for $\vec{b} = (\eta, \eta)$.
Let $X$ be a random variable with the following CDF and PDF:
\[
F_X(x) = \frac{1}{e} \cdot \frac{1 - \type \eta}{1-\type x} \ \  \text{ and } \ \  f_X(x) = \frac{\type}{e} \cdot \frac{1 - \type \eta}{(1-\type x)^{2}} \ \  \text{ with } \ \  x \in \left[\eta ,\frac{e-1 + \type \eta}{e\type}\right].
\]
Note that the domain is well-defined as $ x \ge 0$, and it is easy to verify that $F_{X}(\cdot)$ is non-negative and increasing over the domain and $F_{X}( (e - 1 + \type \eta)/ e \type ) = 1$.
Now consider the randomized bid profile $\B$ in which the agents always bid identically according to $X$, i.e., draw some value $x$ from $X$ and let both agents bid $x$. 
We prove that $\B$ is a CCE.

First, we verify that $\vec{B}$ satisfies the ROI constraints of the two agents. Note that this is trivially true for agent 2 as, by the tie-breaking rule, agent $2$ only wins the item for the bid profile $(\eta, \eta)$ and $\eta = v_2$.  
Note that $\vec{B} \in \mathcal{R}_1$ since:
\begin{equation*}
\EX [ p_1(\B)] \le \frac{e-1 + \type \eta}{e\type} \cdot \EX[x_{1}(\B)] \leq \EX [x_{1}(\B)] 
= \EX [v_1(x_1(\B))],
\end{equation*}
where the first inequality holds as agent 1 never bids above $\frac{e-1 + \type\eta}{e\type}$, while the second inequality holds by assumption. More precisely, note that the expression $\frac{e-1 + \type \eta}{e\type}$ is decreasing in $\type$ and plugging in $\type=\frac{e-1}{e-\eta}$ gives 1, as desired.

Secondly, we show that no agent can improve by unilaterally deviating. 
First, note that under $\B$ agent 2 only wins when bidding bid $\eta$, and that the gain of agents 2 is non-negative in this case, i.e., $v_2 - \type \eta = \eta(1- \type) \ge 0$.
We consider deterministic deviations of agent 2.
Agent 2 would never win when bidding $b_2 < \eta = r$. 
If agent 2 wins when bidding $b_2 > \eta$, agent 2 violates their ROI constraint.
And if agents 2 wins when bidding $b_2 = \eta$, their ROI constraint is tight.
Therefore, agent 2 has no randomized unilateral deviation that is beneficial and satisfies the ROI constraint. 

As ties are broken in favor of agent $1$ for every bid profile $\vec{b} = (b, b)$ with $b > \eta$, the expected gain of agent $1$ under $\vec{B}$ is:
\begin{align*}
    \EX [\z_1(\B)] 
    &= \int_{\eta}^{\frac{e-1 + \type \eta}{e \type}} (1 - \type x) f_{X}(x) d x
    = \int_{\eta}^{\frac{e-1 +\type \eta}{e \type}} \frac{\type}{e} \frac{1 - \type \eta}{1 - \type x} d x \\
    &= \frac{1 - \type \eta}{e} \bigg[ - \ln(1 - \type x) \bigg]_{\eta }^{\frac{e-1 + \type \eta}{e \type}} = \frac{1 - \type \eta}{e}.
\end{align*}
Consider a deterministic deviation $b_1'$ of agent $1$.
Clearly, $b_1' \le \eta$ leads to an expected gain of $\EX [\z_1(b'_1,X)] = 0$, as agent $2$ always wins the item at $(\eta,\eta)$. 
Moreover, note that bidding $b_1' > \frac{e-1 + \type \eta}{e\type}$ is not beneficial. In this case agent 1 would always win the item but:
\begin{equation*}
    \EX [\z_1(b'_1, X)] = 1 - \type b'_1 < 1 - \type \frac{e-1 + \type \eta}{e\type} = \frac{1 - \type \eta}{e} = \EX [\z_1(\B) ].
\end{equation*}
Finally, consider $b_1' \in  (\eta, \frac{e-1 + \type \eta}{e\type}]$.
We obtain that:
\begin{equation*}
    \EX [\z_1(b'_1, X)] = \ppr [b'_1 \ge X] ( 1 - \type b_1') = F_X(b'_1)(1-\type b_1') = \frac{1 - \type \eta}{e} = \EX [\z_1(\B)].
\end{equation*}
Therefore, we can conclude that for any randomized unilateral deviation $Y$ with PDF $f_{Y}$ of agent 1 satisfying the ROI constraint, it holds that:
\[ \EX [ \z_1(Y, X) ] = \int_{0}^{\infty} \EX [\z_1(y,X)] \cdot f_{Y}(y) d y  \le \int_{0}^{\infty} \frac{1 - \type \eta}{e} f_{Y}(y) d y = \frac{1 - \type \eta}{e},
\]
proving that $\vec{B}$ is a CCE.

We conclude the proof by showing how inefficient the CCE $\vec{B}$ is.
Clearly, the optimal liquid welfare is $v_1 = 1$, and therefore:
\begin{equation*}
    \cce\text{-}\poa \geq \frac{\opt}{\sw(\B)} =\frac{1}{\EX [v_1(x_1(\B)) + v_2(x_2(\B))]} = \frac{1}{1 - F_X(\eta) + \eta F_X(\eta) } = \frac{e}{e-1+ \eta}.
\end{equation*}
\end{proof}

\subsection{Proof of Theorem \ref{thm:hybrid-reserve-prices}} \label{app:thm:hybrid-reserve-prices}
\thmhybridreserveprices*

The steps of our approach to proving Theorem~\ref{thm:hybrid-reserve-prices} are similar to those in Section~\ref{subsec:solving-poarmp}. In particular, the focus is to obtain a feasible solution to $\poarmp(T)$ that yields strong POA upper bounds, this time concentrating on the mixed-agent setting, i.e., $T = \{0,1\}$. However, handling the relative gap parameter $\eta$ analytically requires a somewhat different perspective. 

For $\eta \in [0,1)$, let $\zeta(\eta) = 2 - \eta + W_{0}\left(-\frac{(1-\eta)^2}{e^{2-\eta}}\right)$.
In Definition~\ref{def:mu:bar}, we present the feasible solution to the mathematical program that we work with.

\begin{definition}\label{def:mu:bar}
Given a set of types $T = \{0,1\}$ and a parameter $\eta \in [0,1)$, define $\bar{\vec{\mu}}(\eta) \in \mathbb{R}_{>0}^{2}$ such that:
\begin{equation} \label{eq:mu-for-mixed-agents-with-reserve}
    \bar{\mu}_{\type}(\eta) =
    \begin{cases}
        \frac{1}{1 - \eta},  &\text{if } \type = 0, \\
        \frac{1}{\zeta(\eta)}, & \text{if } \type = 1.
    \end{cases}
\end{equation}
\end{definition}
We first show in Lemma \ref{cor:mu-bar:feasible} that $\vec{\bar{\mu}}(\eta)$ is indeed feasible for all $\eta \in [0,1)$, using Fact~\ref{claim:zeta-range}.

\begin{fact} \label{claim:zeta-range}
For all $\eta \in [0,1)$, it holds that $\zeta(\eta) > 0$.
\end{fact}

Note that Fact~\ref{claim:zeta-range} holds because the function $-\frac{(1-z)^2}{e^{2-z}}$ is non-decreasing for $z \in [0,1)$, the function $W_{0}(\cdot)$ is also non-decreasing, and we have $W_{0}(-e^{-2}) \ge -0.16$.

\begin{lemma}\label{cor:mu-bar:feasible}
For all $\eta \in [0,1)$, the vector $\bar{\vec{\mu}}(\eta)$ is a feasible solution of $\poarmp(\{0,1\})$.
\end{lemma}

\begin{proof}
For agents of type $\type = 0$, the constraint in~\eqref{eq:key-lemma-2-constraint} is satisfied by construction. For agents of type $\type = 1$, the constraint in~\eqref{eq:key-lemma-0-constraint} holds because, by Fact~\ref{claim:zeta-range}, we have $\zeta(\eta) > 0$ for all $\eta \in [0,1)$.
\end{proof}

In Lemma~\ref{claim:technical-claim-hybrid-reserves}, we establish additional properties useful for the proof of Theorem~\ref{thm:hybrid-reserve-prices}, which follows.

\begin{restatable}{lemma}{claimtechnicalclaimhybridreserves}\label{claim:technical-claim-hybrid-reserves}
The following properties hold for $T=\{0,1\}$ and every $\eta \in [0,1)$. We write $\zeta:= \zeta(\eta)$ for brevity. 
\begin{compactenum}[(i)] 
    \item 
    $
    \min_{\type \in T}\bar{\lambda}_{\type}(\eta) \geq \frac{1-(1-\eta)e^{-\zeta}}{\zeta}.
    $
    \item 
    $
    \max_{\type \in T}\frac{\bar{\mu}_{\type}(\eta)}{\bar{\lambda}_{\type}(\eta)} \leq \frac{1}{1-(1-\eta)e^{-\zeta}}.
    $
    \item 
    $
    \max_{\type \in T}\frac{1 -{\type }}{\bar{\lambda}_{\type}(\eta)} \leq 1-\eta.
    $
    \item
    $
    \zeta = 2-\eta - (1-\eta)^2e^{-\zeta}.
    $
\end{compactenum}
\end{restatable}
\begin{proof}[Proof of Lemma \ref{claim:technical-claim-hybrid-reserves}]\label{proof:technical-claim-hybrid-reserves}
By \eqref{eq:key-lemma-0-constraint} and \eqref{eq:key-lemma-2-constraint}, it holds that:
\[
\bar{\lambda}_{\type}(\eta) =
\begin{cases} 
    \frac{1}{1-\eta} & \text{if } t = 0, \\
    \frac{1-(1-\eta)e^{-\zeta}}{\zeta} & \text{if } t = 1.
\end{cases} 
\]
For brevity, let $\bar{\lambda}_{\type} = \bar{\lambda}_{\type}(\eta)$ and $\bar{\mu}_{\type} = \bar{\mu}_{\type}(\eta)$ for each $\type \in T$.

\noindent \textit{(i)} For $\type=1$, we have:
\begin{align*}
        \bar{\lambda}_{\type} &= \frac{1-(1-\eta)e^{-\zeta}}{\zeta}=\frac{1}{1-\eta} \cdot \frac{1-\eta-(1-\eta)^2e^{-\zeta}}{\zeta} \\
        &=\frac{1}{1-\eta}\cdot \frac{1-\eta-(1-\eta)^2e^{-W_{0}\left(-\frac{(1-\eta)^2}{e^{2-\eta}} \right)} e^{-2+\eta}}{\zeta}\\
        &=\frac{1}{1-\eta}\cdot \frac{1-\eta+W_{0}\left(-\frac{(1-\eta)^2}{e^{2-\eta}} \right)}{\zeta} \\
        &< \frac{1}{1-\eta}\cdot \frac{2-\eta+W_{0}\left(-\frac{(1-\eta)^2}{e^{2-\eta}} \right)}{\zeta}= \frac{1}{1-\eta}\frac{\zeta}{\zeta}=\frac{1}{1-\eta},
\end{align*}
where the fourth equality follows by Definition \ref{def:lambert}. On the other hand, for $\type=0$ we have that $\bar{\lambda}_{\type}=\frac{1}{1-\eta}$. As these are the only two types by assumption, this leads to $\min_{\type \in T}{\bar{\lambda}}_{\type} \geq \frac{1-(1-\eta)e^{-\zeta}}{\zeta}$. 
\medskip

\noindent \textit{(ii)} For $\type=1$ by our choice of parameters we have that:
\[
\frac{\bar{\mu}_{\type}}{\bar{\lambda}_{\type}} = \frac{1}{1-(1-\eta)e^{-\zeta}} > 1,
\]
where the inequality follows by definition of $\zeta$, as $\eta \in [0,1)$ and by Fact \ref{claim:zeta-range}.
For $\type=0$, we have that $\frac{\bar{\mu}_{\type}}{\bar{\lambda}_{\type}}=1$, and the property follows.
\medskip

\noindent \textit{(iii)} For $\type=1$, we have that $ \frac{1 -t}{\bar{\lambda}_{\type}} = 0 < 1 - \eta$ as $\eta \in [0,1)$. On the other hand, for  $\type=0$ with $t = 0$, we have that $\frac{1 - t}{\bar{\lambda}_{\type}}=1-\eta$, and the property follows.
\medskip

\noindent \textit{(iv)} We have:
\begin{align*}
        2-\eta - (1-\eta)^2e^{-\zeta} &= 
        2 - \eta -(1-\eta)^2e^{-W_{0}\left(-\frac{(1-\eta)^2}{e^{2-\eta}} \right)}e^{-2+\eta} = 2 - \eta + W_{0}\left(-\frac{(1-\eta)^2}{e^{2-\eta}}\right)=\zeta.
\end{align*}
Here, the first equality follows by the definition of the Lambert $W$ function (Definition \ref{def:lambert}).
\end{proof}

We can now prove Theorem~\ref{thm:hybrid-reserve-prices}. 

\begin{proof}[Proof of Theorem~\ref{thm:hybrid-reserve-prices}]
We use $\zeta:= \zeta(\eta)$ for notational convenience. Define $\bar{\vec{\mu}} = \bar{\vec{\mu}}(\eta)$ as in \eqref{eq:mu-for-mixed-agents-with-reserve} and $\bar{\vec{\lambda}} = \bar{\vec{\lambda}}(\eta)$. Note that $\bar{\vec{\mu}}$ is a feasible solution by Corollary \ref{cor:mu-bar:feasible}. 
We proceed as in Section \ref{subsec:solving-poarmp} and use $\bar{\vec{\mu}}$ to obtain the following lower bound on the value of the POA-revealing mathematical program: 
\begin{align}
     \poarmp(\{0,1\})& \geq \min \left\{ 
    \min_{\substack{\type \in T}} \bar{\lambda}_{\type}, 
    \left( 
        \max_{\substack{\type \in T}} \frac{\bar{\mu}_{\type}}{\bar{\lambda}_{\type}} 
        + \max_{\substack{\type \in T}} \frac{1 - t }{\bar{\lambda}_{\type}} 
    \right)^{-1} \right\} \nonumber \\
    &\geq \min \left\{\frac{1-(1-\eta)e^{-\zeta}}{\zeta}, \left(\frac{1}{1-(1-\eta)e^{-\zeta}} +1-\eta\right)^{-1} \right\} \nonumber \\
    &=\left(1-(1-\eta)e^{-\zeta} \right) \cdot \min\left\{\frac{1}{\zeta}, \frac{1}{2-\eta -(1-\eta)^2e^{-\zeta}} \right\} \nonumber \\
    &= \frac{1-(1-\eta)e^{-\zeta}}{\zeta}. \label{eq:hybrid-poa-reserves-analysis}
\end{align}
Here, the second inequality follows from properties \textit{(i)}--\textit{(iii)} of Lemma \ref{claim:technical-claim-hybrid-reserves}. 
The final equality holds due to property \textit{(iv)} of the same lemma. Hence, we obtain:
\begin{align*}
    \cce\textit{-}\poa\left(\mathcal{I}^{\{0,1\}}_{\xos}\right) &\leq \poarmp(\{0,1\})^{-1}\\ &\leq \frac{\zeta}{1-(1-\eta)e^{-\zeta}}\\
    &= (1-\eta) \cdot \frac{\zeta}{1-\eta -(1-\eta)^2e^{-\zeta}}\\
    &=(1-\eta) \cdot \frac{\zeta}{\zeta-1}\\
    &=Q(\eta). 
\end{align*}

The first inequality follows from Theorem \ref{theorem:templateII}, noting that $T^+=T=\{0,1\}$ and that, by assumption, we consider well-supported $\cce$. Then, the second inequality follows from \eqref{eq:hybrid-poa-reserves-analysis}. Finally, the second equality holds by property \textit{(iv)} of Lemma \ref{claim:technical-claim-hybrid-reserves}.

We conclude by showing that $\lim_{z \to 1} Q(z) =1$. Let $h(z) = -(1-z)^{2}e^{z-2}$. By using l'H\^opital's rule, the Fact \ref{fact:lambert-derivative} (the expression for the derivative of $W_0$) and simplifying, it holds that:
\begin{align*}
\lim_{z \to 1} Q(z) &=\lim_{z \to 1} \frac{\left(2 - z + W_{0}(h(z)\right)(1 + W_{0}(h(z)))  + (1-z) \left(1 + 2W_{0}(h(z)) \right)}{  1 + 2W_{0}(h(z))) }=\frac{1+0}{1}=1, 
\end{align*}
%$$\lim_{z \to 1} \frac{-2 + z - W_{0}(h(z)) + (1-z) \left(\frac{(1 - z^{2}) W_{0}(h(z)) }{-(1-z)^{2} (1 + W_{0}(h(z)))} -1 \right)}{  \frac{(1 - z^{2}) W_{0}(h(z))}{ -(1-z)^{2} (1 + W_{0}(h(z))) } -1} $$
as $W_{0}(h(1)) =W_0(0) =0$. This concludes the proof.
\end{proof}

\subsection{Missing proofs of Subsection~\ref{sec:well-supported-eq}}\label{app:thm:CE-well-supported}

\thmCEwellsupported*

\begin{proof}[Proof of Theorem~\ref{thm:CE-well-supported}]
Toward a contradiction, suppose that there exists a $\B \in \ce(\vec{I})$ that is not well-supported, i.e., there exists an auction $j \in M$ such that 
\begin{equation} \label{eq:PositivePrNotSold}
\ppr [ x_j(\B) \neq \vec{0} ] <1.    
\end{equation}
Let agent $i \in N$ be the rightful winner of auction $j$, i.e., $i = \rw(j)$. 
We define the deviation of agent $i$ as follows. For all auctions $k \neq j$, agent $i$ follows the recommendation. However, for auction $j$, agent $i$ follows the recommendation only if the recommendation is to bid above or equal to the reserve price $r_j$. Otherwise, agent $i$ ignores the recommendation and bids exactly the reserve price $r_j$ for auction $j$. More formally, for any $\b_{i}$ in the support of $\B$, we define
\begin{equation}
\swap(\b_i) = \b'_i = (b_{i1}, \dots, b_{i(j-1)}, \max \{b_{ij}, r_j\}, b_{i(j+1)}, \dots,  b_{im}).    \label{eq:deviation}
\end{equation}
Note that by \eqref{eq:PositivePrNotSold}, there exist recommendation(s) $\b_i$ in the support of $\B$ that recommend agent $i$ to bid below the reserve price $r_j$ for auction $j$. 

We first show that the deviation as defined in \eqref{eq:deviation} satisfies the ROI constraint for agent $i$. 
The conditional PDF of $\B_{-i}$ given $\B_{i} = \b_{i}$ is denoted as $f_{\B_{-i} \mid \B_{i}} (\b_{-i} \mid \b_{i})$.
Note that $\B \in {\mathcal{R}_i}$ as $\B$ is a correlated equilibrium, i.e., the ROI constraint is satisfied for agent $i$:
\begin{equation}
\pe \left [ \pe \left [ \sum_{ \ k \in M} p_{ik}(\b_{i}, \B_{-i}) \, \middle\vert \, \B_i = \b_i \right ] \right ]
\le \pe \left [ \pe \left [ \sum_{ \ k \in M} v_{ik}(x_{ik}(\b_{i}, \B_{-i})) \, \middle\vert \, \B_i = \b_i  \right ] \right ]. \label{eq:ROIforCE}
\end{equation}
Recall that the outer expectation is for $\B_i$ and the inner expectation is for $\B_{-i}$.
The ROI constraint for the deviation of agent $i$ is:
\begin{equation}
\pe \! \left [ \pe \! \left [ \! \sum_{ \ k \in M} \! p_{ik}(\swap(\b_{i}), \B_{-i}) \, \middle\vert \, \B_i = \b_i \right ] \right ]
\! \! \le \! \pe \! \left [ \pe \! \left [ \! \sum_{ \ k \in M} \! v_{ik}(x_{ik}(\swap(\b_{i}), \B_{-i})) \, \middle\vert \, \B_i = \b_i  \right ] \right ]. \label{eq:ROIforRecommendation}
\end{equation}
Note that for auctions $k \neq j$, agent $i$ always follows the recommendation, so the terms for these auctions $k$ on the left- and right-hand side of equations \eqref{eq:ROIforCE} and \eqref{eq:ROIforRecommendation} coincide. 
Agent $i$ also follows the recommendation for auction $j$ if $b_{ij} \ge r_j$, so in this case the terms also coincide. 
Finally, consider recommendations for auction $j$ with $b_{ij} < r_j$. In this case the terms for auction $j$ on the left- and right-hand side of equation \eqref{eq:ROIforCE} are equal to 0, as agent $i$ never wins auction $j$ when bidding below the reserve price $r_j$. For the deviation, the terms on the left- and right-hand side in \eqref{eq:ROIforRecommendation} satisfy:
\begin{align*}
&\pe \left [ \pe \left [ p_{ij}(\swap(\b_{i}), \B_{-i}) \mid \B_i = \b_i \wedge b_{ij} < r_j \right ] \right ] \le 
\pe \left [ \pe \left [ v_{ij}(x_{ij}(\swap(\b_{i}), \B_{-i})) \mid \B_i = \b_i \wedge b_{ij} < r_j  \right ] \right ],
\end{align*}
as $r_j < v_{ij}$ (reserve prices are feasible) and agent $i$ deviates to the reserve price $r_j$ for auction $j$ if $b_{ij} < r_j$.
Combining this with the reasoning above, it follows that the deviation of agent $i$ satisfies the ROI constraint in \eqref{eq:ROIforRecommendation}. Since $I$ is, by assumption, budget-free, $(\swap(\vec{B}_i), \vec{B}_{-i})$ satisfies the budget-constraint in \eqref{eq:budget}. Therefore we obtain that $(\swap(\vec{B}_i), \vec{B}_{-i}) \in {\mathcal{R}_i}$. 

We now show that agent $i$ is strictly better off with this deviation. First, note that we have:
\begin{align*}
\pe [ & \z_{i}(\b_{i}, \B_{-i}) \mid \B_{i} = \b_{i}] 
= \int \z_{i}(\b_{i},\b_{-i}) 
\cdot f_{\B_{-i} \mid \B_{i}} (\b_{-i} \mid \b_{i}) d \b_{-i} \\
&= \int \left[ \sum_{ \ k \in M; k \neq j} v_{ik} x_{ik}(\b_{i}, \b_{-i}) - \sens_{i} p_{ik}(\b_{i}, \b_{-i}) \right ] 
\cdot f_{\B_{-i} \mid \B_{i}} (\b_{-i} \mid \b_{i}) d \b_{-i} \\
&+ \int \left [ v_{ij} x_{ij}(\b_{i}, \b_{-i}) - \sens_{i} p_{ij}(\b_{i}, \b_{-i}) \right ] 
\cdot f_{\B_{-i} \mid \B_{i}} (\b_{-i} \mid \b_{i}) d \b_{-i}. 
\end{align*}
Next we have that:
\begin{align*}
\pe [ & \z_{i}(\swap(\b_{i}), \B_{-i}) \mid \B_{i} = \b_{i}] 
= \int \z_{i}(\swap(\b_{i}),\b_{-i}) 
\cdot f_{\B_{-i} \mid \B_{i}} (\b_{-i} \mid \b_{i}) d \b_{-i} \\
&=  \int  \left [   \sum_{\ k \in M; k \neq j}  v_{ik} x_{ik}(\b_{i}, \b_{-i}) - \sens_{i} p_{ik}(\b_{i}, \b_{-i}) \right ] 
\cdot   f_{\B_{-i} \mid \B_{i}} (\b_{-i} \mid \b_{i}) d \b_{-i}\\
&+ \int \left [ v_{ij} x_{ij}(\swap(\b_{i}), \b_{-i}) - \sens_{i} p_{ij}(\swap(\b_{i}), \b_{-i}) \right ] 
\cdot f_{\B_{-i} \mid \B_{i}} (\b_{-i} \mid \b_{i}) d \b_{-i},
\end{align*}
where the second equality follows by definition of the deviation. 
Subtracting the former conditional expectation from the latter leads to:
\begin{align}
\pe [ & \z_{i}(\swap(\b_{i}), \B_{-i}) \mid \B_{i} = \b_{i}] 
- \pe [\z_{i}(\b_{i}, \B_{-i}) \mid \B_{i} = \b_{i}]  \nonumber \\
= & \int \Big [ v_{ij} x_{ij}(\swap(\b_{i}), \b_{-i}) - \sens_{i} p_{ij}(\swap(\b_{i}), \b_{-i}) - v_{ij} x_{ij}(\b) + \sens_{i} p_{ij}(\b) \Big ] \nonumber \cdot f_{\B_{-i} \mid \B_{i}} (\b_{-i} \mid \b_{i}) d \vec{b}_{-i}. %\label{eq:contradictionCE} 
\end{align}
Note that the expression above in brackets evaluates to $0$ for $\b_i$ with $b_{ij} \ge r_j$, as in that case $b'_{ij} = b_{ij}$.
Otherwise, if $b_{ij} < r_j$, then agent $i$ never wins auction $j$ under $\vec{b}_{i}$ and if agent $i$ wins auction $j$ under $(\swap(\b_{i}), \b_{-i})$, the expression evaluates to
\[ 
v_{ij} x_{ij}(\swap(\b_{i}), \b_{-i}) - \sens_{i} p_{ij}(\swap(\b_{i}), \b_{-i}) 
= v_{ij} - \sens_{i} r_{j} > 0, 
\]
as $b'_{ij} = r_j < v_{ij}$ (reserve prices are feasible) and $\sigma_i \le 1$. 
Furthermore, it follows from \eqref{eq:PositivePrNotSold} that there is a positive probability that agent $i$ wins auction $j$ in this case. 
This leads to:
\begin{align*}
\pe [\z_{i}(\swap(\B_{i}), \B_{-i})] &= \pe \left [ \pe [\z_{i}(\swap(\b_{i}), \B_{-i}) \mid \B_{i} = \b_{i}] \right ] > \pe \left [ \pe [\z_{i}(\b_{i}, \B_{-i}) \mid \B_{i} = \b_{i}] \right ] = \pe [\z_{i}(\B)],
\end{align*}
which implies that $\B \not \in \ce(I)$, a contradiction. 
\end{proof}

\propCCEitemNotAlwaysSold*

\begin{proof}[Proof of Theorem~\ref{prop:CCEitemNotAlwaysSold}]
\label{proof:prop:CCEitemNotAlwaysSold}
Consider a single first-price auction with a feasible reserve price $r$ among 2 agents with values $v_1 =1$ and $v_2 =0$. Let $\sens_1 = \sens_2=1$. 
Assume that ties are broken in favor of agent 1 and note that $r < 1$, as the reserve price is feasible. Also, neither agent is budget-constrained i.e., $\budget_1=\budget_2=\infty$. 
Let $X$ be a random variable with the following CDF and PDF:
\[
F_{X}(x) = \frac{1}{e} \cdot \frac{1- r}{1 - x}
\quad \text{and}\quad f_{X}(x) = \frac{1}{e} \cdot \frac{1- r}{(1 - x)^2}
\quad \text{with}\quad x \in \bigg[0, \frac{e - 1 + r}{e} \bigg].
\] 
Note that the domain is well-defined as $0 \le x < 1$, and it is easy to verify that $F_{X}(\cdot)$ is non-negative and increasing over the domain and $F_{X}( (e - 1 + r)/e ) = 1$.
Now consider the randomized bid profile $\B$ for which the agents always bid identically according to $X$, i.e., draw some value $x$ from $X$ and let both agents bid $x$. 
We prove that $\B$ is a CCE.

Since ties are broken in favor of agent 1, the expected gain of agent $2$ under $\B$ is $0$ and agent 2 satisfies the ROI constraint. Moreover, as $v_2 =0$, no unilateral deviation of agent 2 leads to a positive expected gain, and so agent 2 satisfies the equilibrium condition. 
Note that agent 1 satisfies the ROI constraint for $\B$ as $x \le (e -1 +r)/e < 1 = v_1$, so agent 1 never bids above $v_1$. 
As agent 1 does not win when bidding below the reserve price, the expected gain of agent $1$ is:
\[ \EX [ \z_1(\B)] = \int_{r}^{\frac{e - 1 + r}{e}}(1-x)f_{X}(x) d x = \frac{1- r }{e}\bigg[-\ln(1-x) \bigg]_{r}^{\frac{e - 1 + r}{e}} = \frac{1- r}{e}. \]
First, consider any deterministic unilateral deviation $\bsingle'_1$ of agent 1.
If $\bsingle'_1 \in [0, r)$, then agent 1 will never win the auction and always have a gain of $0$. 
Secondly, consider a bid $\bsingle'_1 \in [r, \frac{e - 1 + r}{e}]$. Let $B_{2} = \vec{B}_{-1}$, then:
\[ 
\EX [\z_1(\bsingle'_1,\Bsingle_{2})] = (1 - \bsingle'_1) F_{X}(\bsingle'_1) = (1 - \bsingle'_1) \frac{1- r }{e(1 - \bsingle'_1)} = \frac{1 - r }{e},
\]
where the first equality holds as agent $1$ wins and pays $\bsingle'_1$ if agent $2$ bids at most $b'_1$. Finally, if $\bsingle'_1 > \frac{e - 1 + r}{e}$, agent 1 always wins the auction but this leads to an expected gain of $1-b'_1$ that is strictly smaller than $\frac{1 - r}{e}$. Therefore, we can conclude that for any randomized unilateral deviation $Y$ with PDF $f_{Y}$ of agent 1 satisfying the ROI constraint, it holds that:
\[ \EX [ \z_1(Y, \Bsingle_2) ] = \int_{0}^{\infty} \EX [\z_1(y,\Bsingle_{2})] \cdot f_{Y}(y) d y  \le \int_{0}^{\infty} \frac{1 - r}{e} f_{Y}(y) d y = \frac{1 - r}{e}.
\]
Therefore, agent 1 also satisfies the equilibrium condition, proving that $\B$ is a CCE that is not well-supported.
Note that this holds for any feasible reserve price $r$. Additionally, for any reserve price $r>0$, the CCE $\B$ is not well-supported as in this case the probability that the item is not sold is greater than 0. 

Additionally, note that this can be extended to hold for any simultaneous first-price auction, i.e., $\fpar{m}{\vec{r}}$ with $m>1$, simply by considering $\B$ for an auction $j \in \{1, 2, \dots, m\}$ and letting the randomized bid profiles of the auctions $i\neq j$ be independent of $\B$.
\end{proof}

\corollaryLBwithReservenotallitemssold*

\begin{proof}[Proof of Corollary~\ref{corollary:LB-withReserve-not-all-items-sold}]
\label{proof:corollary:LB-withReserve-not-all-items-sold}
Consider the $\fpar{}{r}$ among two agents with feasible reserve price and the CCE $\B$ as in the proof of Theorem \ref{prop:CCEitemNotAlwaysSold}.
It is easy to see that the optimal liquid welfare in this instance is $1 = v_{\rw}$. 
The expected liquid welfare of $\B$ is $\pe[\sum_{i \in N} v_i(x_i(\B))] = \ppr[ X \ge r ] =  \frac{e-1}{e}$.
Therefore, $\cce\text{-}\poa \geq \frac{e}{e-1}$, concluding the proof.
\end{proof} 

\propPNEsubmodnotwellsupported*

\begin{proof}[Proof of Theorem \ref{prop:PNE-submod-not-well-supported}]\label{proof:prop:PNE-submod-not-well-supported}
Consider a $\fpar{2}{\vec{r}}$ among two agents with feasible reserve prices. 
Let $\sens_1 = 1$ for agent 1 and for agent 2, let $\sens_2 \le 1$ be arbitrary. Also, let $\budget_1= \budget_2= \infty$. 
Consider the submodular valuations with $v_1(\{i\}) = \varepsilon$, $v_1(\{j\}) = 1$ and $v_1(\{i,j\})=1 + \frac{\varepsilon}{2}$.

Let the values of agent 2 for the items be $v_2(\{i\}) = v_2(\{j\}) = v_2(\{i,j\}) = \frac{\varepsilon}{4}$. 
Now consider the feasible reserve prices $r_i =\frac{\varepsilon}{2}$ and $r_j = 1 - \frac{\varepsilon}{2}$. 
The bid profile $\vec{b}$ with $\vec{b}_1 = (\frac{\varepsilon}{2}, 0)$ and $\vec{b}_2 = (0, 0)$ is a MNE.
First, note that under $\vec{b}$ the ROI constraints are satisfied for both agents.
Also note that agent 2 cannot win anything without violating their ROI constraint, as, for both items, the reserve price exceeds their value.  
Agent 1 only wins item $i$ under $\b$ and thus has a gain of $\z_1(\vec{b}) = \frac{\varepsilon}{2}$.
Note that for any $\vec{b}^1$ with $b^{1}_i \ge \frac{\varepsilon}{2}$ and $b^{1}_j < 1 - \frac{\varepsilon}{2}$, $\vec{b}^2$ with $b^{2}_i < \frac{\varepsilon}{2}$ and $b^{2}_j \ge 1 - \frac{\varepsilon}{2}$, $\vec{b}^3$ with $b^{3}_i \ge \frac{\varepsilon}{2}$ and $b^{3}_j \ge 1 - \frac{\varepsilon}{2}$ and $\vec{b}^4$ with $b^{4}_i < \frac{\varepsilon}{2}$ and $b^{4}_j < 1 - \frac{\varepsilon}{2}$, it holds that $\z_1 ( \vec{b}^{k},\vec{b}_{2}) \le \frac{\varepsilon}{2}$ for $k=1,2,3,4$. 
Therefore, agent 1 also satisfies the equilibrium conditions, as, regardless of the ROI constraint, it holds that for any randomized bid profile $B'_1$
\[
\pe [\z_1(B'_1, \vec{b}_2)] = \int_{0}^{\infty} \z_1 ( \vec{b}',\vec{b}_{2})   f_{\B'_1}(\vec{b}') d \vec{b}' \le \frac{\varepsilon}{2} \int_{0}^{\infty} f_{\B'_1}(\vec{b}') d \vec{b}' = \frac{\varepsilon}{2} = \z_1(\vec{b}).
\]
Therefore, the bid profile $\vec{b}$ is a MNE. Furthermore, it holds that $\vec{b}$ is not well-supported, as the item of auction $j$ is not sold. This concludes the proof of the first statement.

To show the second statement, observe that the optimal liquid welfare in this instance is achieved when agent 1 is allocated both items and is equal to $1 + \frac{\varepsilon}{2}$ (recall that the instance is budget-free). The liquid welfare of the MNE $\vec{b}$ is equal to $\varepsilon$. 
Therefore, their ratio is equal to $\frac{1+\frac{\varepsilon}{2}}{\varepsilon}$, and as $\varepsilon \rightarrow 0$, this becomes unbounded i.e., $\frac{1+\frac{\varepsilon}{2}}{\varepsilon} \rightarrow \infty$. This concludes the proof.
\end{proof}

\end{document}